\newcommand\ps{\texttt{PSPACE}\xspace}
\newcommand\gmbs{\texttt{GMBS}\xspace}
\newcommand\lmbs{\texttt{LMBS}\xspace}
\newcommand\stwosat{\texttt{MAX-S2SAT}\xspace}
\newcommand\sthreesat{\texttt{MAX-S3SAT}\xspace}
\newcommand\newinst{\mathcal{I}_{\phi}\xspace}
\newcommand{\lam}{\mathcal{L}}
\newcommand{\mat}{\mathcal{M}}
\newcommand{\expect}{\mathbb{E}}
\newcommand{\varx}{\mathcal{X}}
\newcommand{\vy}{\mathcal{Y}}
\newcommand{\vecs}{\vec{s}}
\newcommand{\sets}{\mathcal{S}}
\newcommand{\smallbin}{\mathscr{S}}
\newcommand{\largebin}{\mathscr{B}}
\newcommand{\covariance}{\textbf{Cov}}
\newcommand{\prob}{\textbf{Pr}}
\title{Matroid Bayesian Online Selection}
\author{Ian DeHaan\thanks{Supported by an NSERC Canada Graduate Scholarship.} \and
Kanstantsin Pashkovich\thanks{Supported by NSERC Discovery Grants Program RGPIN-2020-04346.}}
\authorrunning{I. DeHaan and K. Pashkovich}
\institute{Department of Combinatorics and Optimization, University of Waterloo, Canada\\ 
\email{\{ijdehaan, kpashkovich\}@uwaterloo.ca}}
\begin{document}

\begin{titlepage}

\maketitle

\begin{abstract}

We study a class of Bayesian online selection problems with matroid constraints. Consider a vendor who has several items to sell, with the set of sold items being subject to some structural constraints, e.g., the set of sold items should be independent with respect to some matroid. Each item has an offer value drawn independently from a known distribution. Given distribution information for each item, the vendor wishes to maximize their expected revenue by carefully choosing which offers to accept as they arrive.

Such problems have been studied extensively when the vendor's revenue is compared with the offline optimum, referred to as the ``prophet''. In this setting, a tight 2-competitive algorithm is known when the vendor is limited to selling independent sets from a matroid \cite{KleinWein}. We turn our attention to the online optimum, or ``philosopher'', and ask how well the vendor can do with polynomial-time computation, compared to a vendor with unlimited computation but with the same limited distribution information about offers.

We show that when the underlying constraints are laminar and the arrival of buyers follows a natural ``left-to-right'' order, there is a Polynomial-Time Approximation Scheme for maximizing the vendor's revenue. We also show that such a result is impossible for the related case when the underlying constraints correspond to a graphic matroid. In particular, it is \ps-hard to approximate the philosopher's expected revenue to some fixed constant $\alpha < 1$; moreover, this cannot be alleviated by requirements on the arrival order in the case of graphic matroids.
\end{abstract}

\vspace{1cm}
\setcounter{tocdepth}{2} %
\tableofcontents

\end{titlepage}

\section{Introduction}
In this paper, we study the problem of Bayesian online selection subject to structural constraints given by matroids. Let us consider a scenario where a vendor posts and updates prices in order to maximize their profit subject to structural constraints. This type of problems is omnipresent in our everyday life. Consider vendors, e.g., big e-commerce platforms or  independent crafters, who sell items by posting prices. In order to maximize their profits, vendors usually calculate prices taking into account partial information about  potential buyers and the constraints on inventory, transportation networks, legal regulations, etc.

One of the most prominent examples of this setting is the \emph{single-item prophet inequality problem} \cite{Krengel}. In this problem, the vendor is selling one item, for which they observe a sequence of offers. The offers correspond to random variables $v_1, v_2, \ldots, v_n$ drawn independently from distributions known to the vendor. 
At each timestamp $t$, the vendor may choose to stop by selling their item and gaining the value $v_t$, or the vendor may choose to discard this offer and continue. The \emph{prophet} in this problem represents a person who knows the realizations of all offers $v_1, v_2, \ldots, v_n$ ahead of time. Moreover, the expected gain of the vendor is evaluated using the maximum gain of the prophet as a benchmark. 
In this scenario, the vendor can do at least half as well as the prophet and no better \cite{Krengel,SamuelCahn}.
This result generalizes to the setting of {\em matroid prophet inequalities}, in which the vendor is selling items from a matroid, and is limited to selling an independent set in this matroid~\cite{KleinWein}.

The classical prophet inequality problems make the vendor compete with the prophet, where only the prophet knows the realizations of $v_1, v_2, \ldots, v_n$ ahead of time. Clearly, such a competition between the vendor and the prophet is very unfair. Indeed, in many cases the advantage  of knowing all realizations cannot be alleviated through any efforts of the vendor. So let us change the benchmark and introduce the \emph{philosopher}. The philosopher does not know the realizations of $v_1, v_2, \ldots, v_n$ but has unlimited computational power. The central question for our work is as follows. \emph{How well can a vendor limited to polynomial time computation compete with the philosopher, where both know only the distributions of $v_1, v_2, \ldots, v_n$ but the latter has unlimited computational power?} Answering this question, we provide both positive and negative results for the ability of the vendor.

\subsection{Our Results}
In the single-item case, the vendor can achieve the same profit as the philosopher. Indeed, in this case the vendor can set a straightforward dynamic program that computes the optimal strategy.

Gupta posed the question of whether the vendor can achieve the same gain as the philosopher in the matroid setting, and more specifically in the graphic and laminar matroid settings~\cite{Gupta}. We answer this question in the negative for graphic matroids. We show that for graphic matroids, it is \ps-hard for the vendor to approximate the expected gain of the philosopher up to some fixed constant. Moreover, for the graphic matroids there is no arrival order which ``substantially increases competitiveness'' of the vendor.

On the positive side, we provide a Polynomial-Time Approximation Scheme (PTAS) for all laminar matroids with ``left-to-right'' arrival orders, in which elements from each constraint arrive consecutively. Furthermore, the provided PTAS also hold for arrival orders that are ``close'' to ``left-to-right'' orders, i.e., to orders where each element is contained only in constantly many bins on which the arrival order is not ``left-to-right.''  We note that our policy relies on the ``left-to-right'' order only in the analysis. We leave it as an  open question to determine whether there are substantially different requirements that guarantee our policy to lead to a PTAS.

    The defined order is called {\em left-to-right} because when the laminar matroid is drawn with all elements on one horizontal line, the elements can be arranged in order from left to right if and only if they are in a ``left-to-right'' orderƒing. 
These orderings are exactly those orderings that are obtained when the tree corresponding to the laminar family  is explored with depth-first-search.

 Let us provide an example of a Bayesian selection problem with a ``left-to-right'' arrival order. Consider a situation where items  correspond to clients, and the vendor knows clients' arrival order and distributions for their offers.
 Let the parameter $p$ represent some crucial resource and so determine restrictions on the number of clients the vendor can serve. In particular, let there be several critical thresholds $p_1$, $p_2$,\ldots  for $p$ and limits $\gamma_1$, $\gamma_2$,\ldots. For each $i$, if the value of $p$ drops below $p_i$ at some timestamp then we can serve at most $\gamma_i$ clients between this timestamp and the next timestamp when the value of $p$ is again at least $p_i$. In Figure~\ref{fig:level_sets}, one can see the example of how the parameter $p$ changes over time and the corresponding laminar family. We note that the arrival order in the ``production constrained Bayesian selection'' from \cite{AnariLMBS} corresponds to the case when $p$ is a resource that is being delivered to us over time, see Figure~\ref{fig:level_sets_product}, plus one additional global constraint. 

\begin{figure}[h]
\centering
\includegraphics[width=0.65\textwidth]{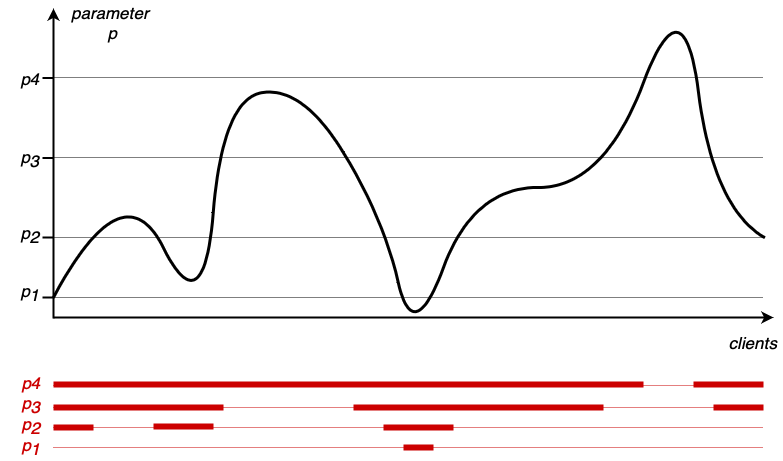}
\caption{Here, the horizontal axis is associated with clients and so with the timestamps, since their arrivals provide a measure for time. The vertical axis is associated with the value of the parameter $p$. The critical thresholds $p_1$, $p_2$, $p_3$ and $p_4$ for the value of $p$ are depicted on the vertical axis. Below the picture of the graph, one can find the illustration for the corresponding laminar matroid. In particular, the picture below contains an interval for each critical threshold and the timestamps, when the value of $p$ drops below the threshold and the next timestamp when the value of $p$ reaches the value of the threshold.}
\label{fig:level_sets}
\end{figure}

\begin{figure}[h]
\centering
\includegraphics[width=0.65\textwidth]{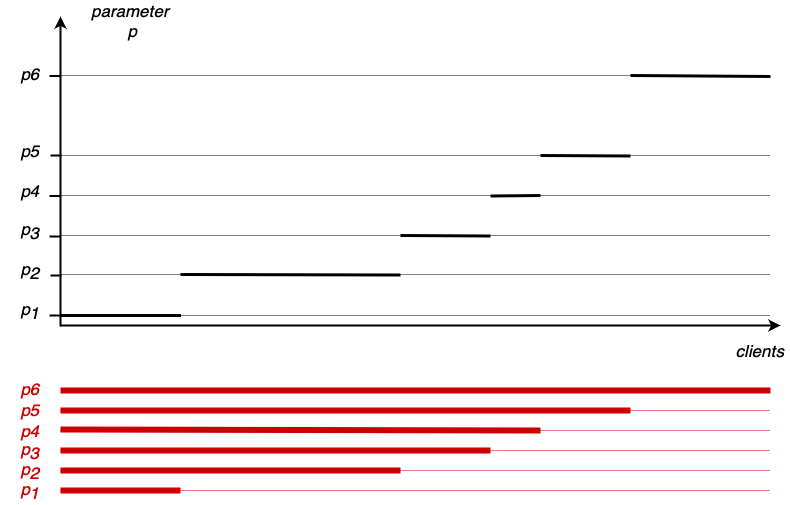}
\caption{Here the legend of the figure is  the same as in Figure~\ref{fig:level_sets}. The structure of the function for the parameter $p$ corresponds to the situation when $p$ represents a resource that is being delivered over time, and is spent only on servicing clients. In this scenario, the vendor can serve only a certain number of clients until the next delivery of the resource.}
\label{fig:level_sets_product}
\end{figure}

Our PTAS holds for more general matroid Bayesian selection problems  than the ones described above. We can handle the cases where the limits $\gamma_1$, $\gamma_2$,\ldots are functions of timestamps when the value of $p$ drops below $p_i$. Moreover, we allow the arrival order to be more complicated than just ``left-to-right''. One can imagine that the vendor runs several services in parallel, which mostly depend on different resources, however each type of service depends on at most a constant number of resources that are needed also for other types of service. In this situation, as long as the dependence on the ``overarching'' resources leads to a laminar matroid, our policy provides a PTAS for the arising Bayesian selection problem.

Thus our results provide a generalization and extension of the previous results for laminar matroids. The study of laminar Bayesian selection problems  was initiated by Anari et al. \cite{AnariLMBS}. They gave a PTAS for the special cases of bounded-depth laminar matroids and production constrained selection.
Note that both of these special cases have required structure on the laminar family, with production constrained selection also having a special arrival order. In contrast, our results do not impose any requirements on the structure of the laminar matroids, but only on the arrival order.

\subsection{Our Techniques}

\subsubsection{PTAS for Left-to-Right Laminar Bayesian Selection} 
We capture the optimal online policy for left-to-right laminar Bayesian selection with an exponential-sized linear program introduced in \cite{AnariLMBS}.
While this linear program is too large to solve efficiently, we create a polynomial-sized relaxation by partitioning bins into ``big'' and ``small'' based on their capacities.
For each maximal ``small'' bin, we require the constraints from the exponential-sized linear program to hold exactly.
But for each ``big'' bin, we only require its capacity constraints to ``hold in  expectation.''

We show that by solving the linear program, we can efficiently obtain an online policy that is feasible for all ``small'' bins and has total expected gain equal to the optimal value of the linear program. In fact, the obtained online policies are optimal for the maximal ``small'' bins subject to changes in the value distributions. The argumentation about the above online policies goes generally along the lines of the analysis in \cite{AnariLMBS}.
One of our main technical contributions is showing that, with some pre-processing on capacities, ``big'' bins are unlikely to be violated by the obtained policies, see Lemma~\ref{lem:failure-prob}. To show this, we need concentration bounds on the number of items selected from each maximal ``small'' bin.  The fact that the obtained policies act optimally on the maximal ``small'' bins does not guarantee us that selections of items are negatively correlated, see Figure~\ref{fig:positive-correlation-uniform}, and not even that the selection of an item is negatively correlated with the number of items selected before it, see Figure~\ref{fig:positive-correlation-laminar}. Thus, we cannot rely on the analysis from~\cite{AnariLMBS} and have to develop new tools. Nevertheless, we are able to show a more ``global'' version of negative dependence; we show that the number of items selected from each small bin is concentrated closely around the mean. Roughly speaking, we obtain this by showing that the selection of an item has a limited impact on the expected number of the items selected after it, see Lemma~\ref{lem:firstuseless}.  This gives us the building blocks needed for Chernoff type results, see Lemma~\ref{lem:concentration}, which we use to bound the probability of ``big'' bins being violated in Lemma~\ref{lem:failure-prob}.

These concentration results are only possible for restricted arrival orders. We show in Theorem \ref{thm:badconcentration} that there exist laminar Bayesian online selection instances where the number of elements selected by the optimal policy is anti-concentrated. Due to the  constructions done in the proof of Theorem \ref{thm:badconcentration}, straightforward counting arguments show that given natural numbers $r$ and $n \in \Omega(r^2)$, for a randomly chosen  laminar matroid over $n$ elements with rank $r$ and a randomly chosen order, asymptotically almost surely one can choose value distributions and capacities such that the resulting instance exhibits anti-concentration for the number of elements selected by the optimal policy.
This is a significant roadblock on the way to giving a good approximation for laminar Bayesian online selection problems with no restriction on arrival order.
Most known results for problems of this type first break the problem into smaller pieces, solve the philosopher's problem optimally on each part, and then use some concentration result to show that combining the solution for these parts is unlikely to produce infeasible solutions for the global problem.
Without strong concentration on the number of elements selected by the philosopher, this framework cannot work - further new ideas are needed to tackle the general problem.

\begin{figure}[h]
\centering
\includegraphics[width=0.61\textwidth]{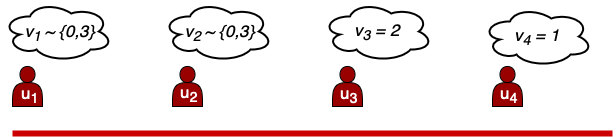}
\caption{Here, the vendor receives four clients $u_1$, $u_2$, $u_3$ and $u_4$ in the corresponding order, but can serve at most $2$ clients. The distributions for the offers are as in the figure, e.g., $u_2$ offers $3$ with probability $0.5$ and otherwise  offers $0$, $u_4$ always offers $1$. Let $X_i$, $i=1,\ldots,4$ be the event (and the corresponding indicator variable) that $u_i$ was served by the optimal online policy. We have $\prob[X_3]=3/4$, $\prob[X_4]=1/4$ while $\prob[X_3\land X_4]=1/4$. Thus, we have $\covariance(X_3,X_4)=1/16$. }
\label{fig:positive-correlation-uniform}
\end{figure}

\begin{figure}[h]
\centering
\includegraphics[width=.9\textwidth]{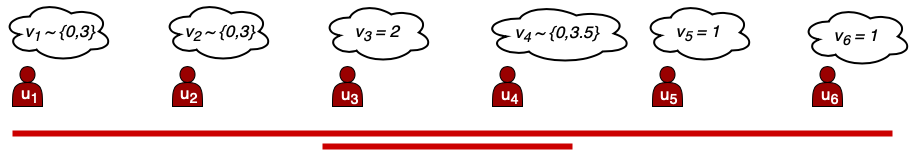}
\caption{Here, the vendor receives six clients $u_1$,\ldots, $u_6$ in the corresponding order, but can serve at most $3$ clients. Moreover, among $u_3$ and $u_4$ the vendor can serve at most one client. The distributions for the offers are as on the figure, e.g. $u_2$ makes offer $3$ with probability $0.5$ and otherwise makes  offer $0$, $u_5$ always offers $1$. Let $X_i$, $i=1,\ldots,6$ be the event (and the corresponding indicator variable) that $u_i$ was served by the optimal online policy. We have $\prob[X_1]=\prob[X_2]=1/2$, $\prob[X_3]=1/4$, $\prob[X_4]=3/8$, $\prob[X_1\land X_4]=\prob[X_2\land X_4]=1/4$ and $\prob[X_3\land X_4]=0$. Thus, we have $\covariance(X_1+X_2+X_3,X_4)=1/32$. }
\label{fig:positive-correlation-laminar}
\end{figure}

\subsubsection{Hardness of Graphic Matroid Bayesian Selection} 
To obtain \ps-hardness result for approximating Bayesian online selection for graphic matroids, we reduce from Stochastic MAX2SAT (\stwosat). The arrival order of edges is organized in three phases. In the first phase, the optimal policy for the constructed instance would need to make decisions that are equivalent to assigning the variables in \stwosat \texttt{True} and \texttt{False} values. In the second phase and the third phase, the optimal policy would make greedy decisions, with expected payoff depending on the choices made in the first phase. Having shown that the behaviour of the optimal policy indeed follows the above rules, the expected gain of any such policy would be equal to the expected number of satisfied clauses plus a fixed term and a negligible error term. Our analysis to estimate the expected gain follows the lines of the work~\cite{Papadimitriou2021} on stochastic online matching. Later, we show that for graphic matroids, there are no orders like left-to-right orderings for laminar matroids. In particular, without knowing value distributions  we cannot associate an arrival order to each graphic matroid such that the resulting class of instances admits a PTAS. To obtain this result, we show that we can embed our hardness instance into a sufficiently large complete graph for any arrival order on that complete graph. 

\subsection{Further Related Work}

Study of prophet inequalities was initiated by \cite{Krengel} in which they considered adaptive algorithms for the single-item setting. 
Several years later, a simple single-threshold $\frac{1}{2}$-competitive algorithm was given in \cite{SamuelCahn}.
In the decades following, many variants of the classic prophet inequality problem have been considered.
We give a brief snapshot of work on variants related to this paper, and recommend reading surveys on the area for a more complete picture \cite{LucierSurvey,HillKertzSurvey,VariantSurvey}.

In recent years, there has been much work dedicated to the study of prophet inequalities under combinatorial constraints. Beginning with uniform matroid constraints \cite{HajiaghayiUniformPI,SaeedUniformPI}, there has been exploration of general matroid constraints \cite{KleinWein}, knapsack constraints \cite{feldman2016online}, matching constraints \cite{alaei2012online,gravin2019prophet,ezra2020online}, downwards-closed constraints \cite{rubinstein2017combinatorial,rubinstein2016beyond}, and many more.
Many other variations of the classic prophet inequality problem have been studied extensively. Some examples include settings where the arrival order of elements is random \cite{ehsani2018prophet,esfandiari2017prophet,correa2021prophet,azar2018prophet}, where every element has its value drawn independently from the same distribution, \cite{abolhassani2017beating,correa2017posted,HillKertzPI}, single sample prophet inequalities \cite{rubinstein2019optimal,azar2014prophet}, and non-adaptive prophet inequalities \cite{chawla2020non,pashkovich2023non}.

Our work is far from the first to consider approximation of the online optimum for Bayesian online selection problems.  Online stochastic weighted bipartite matching has been extensively studied in the last several years, with a series of improved approximation factors given in \cite{Papadimitriou2021,LovettOnlineMatching,WajcMatchingRounding}. In a variant of the single-item prophet inequality problem where the order of elements is unknown and uniformly random, there is a Polynomial-Time Approximation Scheme \cite{ProphetSecretary}. \cite{OrderSelectionHard} shows that even in the single-item prophet inequality problem, it is NP-hard to select the best order to observe elements in. Additional approximations on the online optimum in stochastic and Bayesian online selection problems have been given in a variety of settings \cite{feng2022near,segev2021efficient,feng2021two,fu2018ptas}. 

During the preparation of our paper, we became aware also of an unpublished \ps-hardness result for graphic Bayesian online selection obtained independently by another group~\cite{Wajc2023}.

\section{Problem Definition and Preliminaries}

In this paper, we consider structural constraints defined by matroids. Each matroid $\mat$ is defined by a ground set $U$ and a collection of subsets of $U$, which are called \emph{independent sets}.

In this paper, we work with two types of matroids: laminar and graphic.    We say that a matroid $\mat$ over a ground set $U$ is \emph{laminar} if there is some laminar family of sets $\lam$ over $U$ and a capacity function $c : \lam \rightarrow \mathbb{N}$ such that a set $I$ is independent if and only if $|I \cap A| \leq c(A)$ for all $A \in \lam$. 
    In this case, we write $\mat = (U, \lam, c)$. Given a laminar matroid $\mat = (U, \lam, c)$, we call the sets in $\lam$ \emph{bins}. The \emph{depth} of a bin $A\in \lam$ is defined as $|\{B\,:\, A\subseteq B,\, B\in \lam\}|$.
We say that a matroid $\mat$ over a ground set $U:=E$ is \emph{graphic} if there is some graph $G=(V,E)$ such that a set $I\subseteq E$ is independent if and only if $(V,I)$ has no cycles. For further reading about matroids and their properties, we refer to~\cite{oxley}.

From now on, we refer to the vendor as a \emph{gambler} and to the items as \emph{elements}. In the matroid Bayesian online selection problem, the gambler is given the matroid $\mat$ over a ground set $U=\{u_1,\ldots, u_n\}$, an arrival order of the elements in $U$, and for each element $u_i\in U$, a  distribution $F_i$ for its value $v_i$.
Over the course of the game, the gambler  maintains an independent set $I$ which starts as the empty set. The set $I$ consists of all elements selected by the gambler at the given timestamp.
    
    Elements arrive one by one in the given order. 
    When $u_i$ arrives, the value $v_i \sim F_i$ is drawn from its distribution independently from values of other elements and presented to the gambler. 
    The gambler may then choose to either select the element as long as independence is maintained, updating $I \leftarrow I \cup \{u_i\}$ and gaining the associated value $v_i$, or reject the element, gaining no value. 
    This decision is final and may not be changed later in the game.

    The goal of the gambler is to gain as much value as possible over the course of the game. Let $OPT_{\mat}$ indicate the maximum expected gain  that the gambler can achieve for the instance $\mat$. Here, we abuse notation and associate the matroid Bayesian selection instance to the underlying matroid $\mat$ when the value distributions and arrival order are not relevant or are clear from the context. 

    The \textsc{Laminar Matroid Bayesian Selection} (\lmbs) problem is the matroid Bayesian online selection problem restricted to laminar matroids.
 The \textsc{Graphic Matroid Bayesian Selection} (\gmbs) problem is the matroid Bayesian online selection problem restricted to graphic matroids.

\section{Left-to-Right Laminar Bayesian Selection}
\label{sec:PTAS}

In this section, we show that the laminar Bayesian online selection problem with certain types of arrival orders admits a PTAS.

\begin{definition}
    Given a laminar matroid $\mat = (U, \lam, c)$ and an ordering $u_1, u_2, \ldots, u_n$ of the elements $U$, we say that this is a \emph{left-to-right ordering} if the following condition holds:
    \begin{quote}
            For every bin $A \in \lam$ of the laminar family, if $u_i, u_j \in A$ for $i < j$, then for every $k \in [i, j]$, we must have $u_k \in A$. 
    \end{quote}
\end{definition}

    In other words, the left-to-right ordering captures the rule that once elements from some bin start to arrive, they must not stop until they have all arrived.
We say that an \lmbs \emph{instance has a left-to-right arrival} order (or equivalently it is a \emph{left-to-right \lmbs instance}) if the elements arrive according to a  left-to-right ordering. 

In this section, we assume  that each distribution in the input is atomic. Moreover, each distribution is given to us explicitly as the list of values and the corresponding probabilities. This allows us to efficiently represent all distributions and to perform computations on them in polynomial time.

\subsection{Warm-Up: QPTAS}

To design a QPTAS and PTAS, we rely on the concept of states. A \emph{state} has an entry for each considered bin $A$ in $\lam$, and this entry equals $c(A)$ minus the number of currently selected elements from $A$. Informally,  a state represents the remaining capacities of the bins under consideration.

In an \lmbs instance, given a timestamp let us call a bin $A$  {\em active} if at least one element of $A$ already arrived and there are still some elements of $A$ left to arrive. The next lemma is based on the fact that in an \lmbs instance, we only need to keep track of states for active bins.

\begin{lemma} \label{lem:depthdp}
    Let  $\mat = (U, \lam, c)$ be a left-to-right \lmbs instance  with depth at most $L$. 
    Then the optimal gain and an optimal policy of the gambler in $\mat$ can be computed in $n^{\mathcal{O}(L)}$ time. 
\end{lemma}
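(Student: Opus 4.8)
The plan is to build a dynamic program over timestamps whose "memory" at each timestamp is the vector of remaining capacities of the currently active bins. First I would observe that, because the ordering is left-to-right and the laminar family has depth at most $L$, at any fixed timestamp $t$ the set of active bins forms a chain $A_1 \subsetneq A_2 \subsetneq \cdots \subsetneq A_m$ with $m \le L$: indeed all active bins contain the element $u_t$ currently under consideration (or the most recently arrived element), and any two bins containing a common element are nested since $\lam$ is laminar. Thus the "state" at timestamp $t$, recording for each active bin $A_j$ the quantity $c(A_j)$ minus the number of already-selected elements of $A_j$, is a tuple of at most $L$ nonnegative integers, each bounded by the corresponding capacity, hence by $n$. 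So there are at most $n^{\mathcal O(L)}$ states per timestamp and $n^{\mathcal{O}(L)}$ states overall.

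Next I would argue correctness of restricting attention to active bins. The key point is that a bin $A$ which is not yet active (no element has arrived) imposes no constraint on any decision taken so far, and a bin $A$ all of whose elements have already arrived can never again be violated, so its residual capacity is irrelevant to all future decisions; only active bins can both constrain a future selection and have their residual capacity affected by it. Hence the gambler's optimal expected future gain, conditioned on the history, depends on the history only through (the current timestamp and) the residual-capacity vector of the active bins — this is the Markov property that makes the DP valid. I would make this precise by defining $\mathrm{Val}(t, s)$ to be the maximum expected gain obtainable from timestamp $t$ onward given that the active-bin state is $s$, and writing the Bellman recursion: upon arrival of $u_t$ with value $v_t \sim F_t$, for each realized value the gambler either rejects (state unchanged, but bins that become inactive are dropped and bins that become active for $u_{t+1}$ are appended with full residual capacity) or, if every active bin has positive residual capacity, accepts, gaining $v_t$ and decrementing the residual capacity of every currently active bin. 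Taking the better of the two options and then the expectation over $v_t \sim F_t$ gives $\mathrm{Val}(t,s)$ in terms of $\mathrm{Val}(t+1, \cdot)$.

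For the running time: the transition from timestamp $t$ to $t+1$ touches at most $n^{\mathcal O(L)}$ states, and for each state we iterate over the at most $n$ atoms of $F_t$ (distributions are atomic and given explicitly, as assumed), doing $\mathcal O(L)$ work per atom to update the chain of active bins; summing over the $n$ timestamps gives $n^{\mathcal O(L)}$ total time, and an optimal policy is recovered by storing the maximizing action at each $(t,s)$. The only mildly delicate bookkeeping — and the step I expect to require the most care to state cleanly rather than being a real obstacle — is the "reindexing" of the state vector between consecutive timestamps: as we pass from $u_t$ to $u_{t+1}$ we must drop the prefix of the chain consisting of bins whose last element was $u_t$ and append the suffix of new bins first entered by $u_{t+1}$ (each initialized to its full capacity), and one has to check that the left-to-right property guarantees this update is well defined, i.e. that bins never "reopen." This follows directly from the definition of a left-to-right ordering: the elements of any bin occupy a contiguous block of timestamps, so a bin that has gone inactive has received all of its elements and will never be active again.
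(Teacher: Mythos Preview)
Your proposal is correct and follows essentially the same approach as the paper: a dynamic program over timestamps whose state is the vector of residual capacities of the currently active bins, using the left-to-right property to bound the number of active bins by $L$ and hence the state space by $n^{\mathcal{O}(L)}$. The paper's own proof is a three-sentence sketch of exactly this argument; your version simply fills in the details (the chain structure of active bins, the Bellman recursion, and the reindexing of the active chain between consecutive timestamps).
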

\begin{proof}
    The computation can be done with a dynamic programming algorithm. At each timestamp, we keep track of states for currently active bins. The left-to-right ordering guarantees that at each timestamp we have at most $L$ active bins. Since $c(A)$ is at most $n$ for all $A \in \lam$, we have that the total number of possible states is at most $n^{\mathcal{O}(L)}$.
\end{proof}

\begin{corollary}
    The optimal gain and an optimal policy of the gambler in a constant-depth left-to-right LMBS instance can be computed in polynomial time.
\end{corollary}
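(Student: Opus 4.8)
The plan is to deduce this immediately from Lemma~\ref{lem:depthdp} by observing that ``constant depth'' is precisely the hypothesis that makes the running time $n^{\mathcal{O}(L)}$ polynomial. First I would note that if the laminar matroid $\mat = (U, \lam, c)$ has depth bounded by an absolute constant $L$, then applying Lemma~\ref{lem:depthdp} with this value of $L$ yields an algorithm computing both the optimal gain $OPT_{\mat}$ and an optimal policy in time $n^{\mathcal{O}(L)}$. Since $L = \mathcal{O}(1)$ does not depend on the input size, the exponent $\mathcal{O}(L)$ is a fixed constant, so $n^{\mathcal{O}(L)}$ is $n^{\mathcal{O}(1)}$, i.e., polynomial in $n$. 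This is the entire argument.

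The only point that deserves a sentence of care is confirming that $n$ genuinely measures the input size up to polynomial factors, so that ``polynomial in $n$'' coincides with ``polynomial time.'' Here $n = |U|$ is the number of elements, the arrival order is a permutation of these $n$ elements, the laminar family $\lam$ has at most $\mathcal{O}(n)$ bins (a standard fact about laminar families on an $n$-element ground set), each capacity $c(A)$ is at most $n$ and so has encoding length $\mathcal{O}(\log n)$, and by the standing assumption of this section each distribution $F_i$ is atomic and given explicitly as a list of value/probability pairs — so the total input length is at least $n$ and at most a polynomial in $n$ together with the bit-length of the distributions, all of which the dynamic program of Lemma~\ref{lem:depthdp} already handles within the claimed bound.

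I do not anticipate any real obstacle: this corollary is a direct specialization of the preceding lemma, and the ``hard part'' — designing and analyzing the state-based dynamic program and bounding the number of active bins by $L$ via the left-to-right property — has already been carried out in the proof of Lemma~\ref{lem:depthdp}. The corollary merely records the consequence that will be invoked later (as the base case or as a building block for the QPTAS/PTAS) when the depth is treated as a constant.

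\begin{proof}
    Apply Lemma~\ref{lem:depthdp} with $L$ equal to the constant bounding the depth of the laminar family. Since $L = \mathcal{O}(1)$ is independent of the input, the running time $n^{\mathcal{O}(L)}$ is $n^{\mathcal{O}(1)}$, which is polynomial in the size of the instance.
\end{proof}
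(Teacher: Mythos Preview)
Your proposal is correct and matches the paper's approach exactly: the corollary is stated in the paper without proof because it is an immediate consequence of Lemma~\ref{lem:depthdp}, and your one-line deduction (that $n^{\mathcal{O}(L)}$ is polynomial when $L$ is a constant) is precisely the intended argument.
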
 

\begin{lemma} \label{lem:logdepth}

Let  $\mat = (U, \lam, c)$ be an \lmbs instance and $\alpha$ be  in $ (0, 1)$. We can, in polynomial time, construct an \lmbs instance $\mat' = (U, \lam', c')$ such that:
\begin{itemize}
    \item $\lam' \subseteq \lam$  
    \item for all $A$, $B \in \lam'$ with $A \subsetneq B$, we have  \begin{equation}\label{eq:separation_property}\tag{SEP}
        c'(A) \leq \lceil \alpha \cdot c'(B) \rceil
    \end{equation}
    \item all independent sets in $\mat'$ are also independent in $\mat$
    \item $\alpha \cdot OPT_{\mat} \leq OPT_{\mat'} \leq OPT_{\mat}$
    \item if $\mat$ is a left-to-right instance then $\mat'$ is also a left-to-right instance.
\end{itemize}
\end{lemma}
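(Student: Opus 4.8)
The plan is to construct $\lam'$ from $\lam$ by a greedy top-down pruning of the laminar tree, keeping capacities unchanged (so $c' = c|_{\lam'}$), and discarding exactly those bins whose capacity is too close to that of their nearest retained ancestor. Concretely, process the bins of $\lam$ in order of increasing depth (roots first). A bin $A$ is \emph{retained} if and only if, letting $B$ be the closest retained strict ancestor of $A$ in $\lam$ (if one exists), we have $c(A) \le \lceil \alpha \cdot c(B) \rceil$; otherwise $A$ is discarded, and its children are then compared against $B$ instead. Since $\lam' \subseteq \lam$ and we only ever drop constraints, every independent set of $\mat'$ is independent in $\mat$, and $OPT_{\mat'} \le OPT_{\mat}$ is immediate because the gambler in $\mat'$ faces a superset of feasible strategies. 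Property \eqref{eq:separation_property} holds by construction: for $A \subsetneq B$ with $A, B \in \lam'$, if $B$ is the closest retained ancestor of $A$ the bound is exactly the retention criterion, and otherwise transitivity of $\lceil \alpha \cdot (\cdot) \rceil$ being monotone and $\alpha < 1$ gives the chain. The left-to-right property is preserved since deleting bins from $\lam$ cannot create a violation of the ordering condition for the surviving bins.

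The substantive step is the lower bound $\alpha \cdot OPT_{\mat} \le OPT_{\mat'}$. I would prove the stronger structural claim that $\mathcal{M}'$, as a matroid, is obtained from $\mathcal{M}$ by a "capacity blow-up of factor at most $1/\alpha$": more precisely, I would argue that the rank function satisfies $r_{\mat'}(S) \le (1/\alpha)\, r_{\mat}(S)$ for every $S \subseteq U$ — equivalently, every independent set of $\mat'$ can be covered by at most $\lceil 1/\alpha \rceil$ independent sets of $\mat$ — no wait, that is not quite what is needed. The right statement is the reverse scaling: I want that for any independent set $I$ of $\mat$ there is an independent set $I'$ of $\mat'$ with $I' \subseteq I$ and $|I'| \ge \alpha |I|$, and moreover this can be done in a way that respects arrival order so that it transfers to online policies. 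Here is the mechanism: take a chain of bins $A_1 \supsetneq A_2 \supsetneq \cdots$ in $\lam'$ along a root-to-leaf path. A discarded bin $A$ with retained parent $B$ has $c(A) > \lceil \alpha c(B)\rceil \ge \alpha c(B)$, so any set that is independent in $\mat'$ already automatically satisfies $A$'s constraint up to a factor $\alpha$ once it satisfies $B$'s. This means: if $I'$ is independent in $\mat'$ then scaling it down — deterministically removing a $(1-\alpha)$ fraction of elements from each maximal retained bin — makes it independent in $\mat$. Turning this into an online reduction is the crux: given the philosopher's optimal policy for $\mat$, I would run it but additionally, whenever a retained bin $A \in \lam'$ is about to be over-filled relative to capacity $\lceil \alpha\,(\text{parent capacity})\rceil$...

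Let me restate the clean argument. Define a randomized online policy for $\mat'$ from an optimal policy $\pi$ for $\mat$: independently for each bin $A \in \lam'$, the policy for $\mat'$ commits to accepting only a random $\lceil \alpha c(A) \rceil / c(A)$-fraction of the elements $\pi$ would accept inside $A$ — this is too crude because fractions must be realized as integers and nested bins interact. The correct tool is that a laminar matroid with capacities $c$ contains as a subset (in the independent-set order) a "$1/\alpha$-thinned" laminar matroid, and the thinning can be performed greedily along the arrival order losing an expected $\alpha$ factor by a direct exchange/averaging argument on each root-to-leaf chain; the telescoping works precisely because of \eqref{eq:separation_property}, i.e. capacities shrink geometrically by factor $\alpha$ going down, so the "rounding loss" $\lceil \alpha c(B)\rceil$ vs $\alpha c(B)$ accumulates to a bounded overall factor. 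I expect the main obstacle to be exactly this: showing that the per-bin ceiling roundings do not compound across $\Theta(\log_{1/\alpha} n)$ levels of the chain into something worse than a constant factor loss, and doing so while keeping the construction an \emph{online} policy compatible with the left-to-right order. I would handle it by working level-by-level from the root, maintaining the invariant that the expected number of accepted elements in each retained bin is at least $\alpha$ times what $\pi$ accepts there, and using that $\lceil \alpha c(B) \rceil \ge \alpha c(B)$ exactly cancels the rounding in $c(A) \le \lceil \alpha c(B)\rceil$ so no geometric loss occurs — the loss is a single factor $\alpha$, incurred once at the topmost retained bin on each path.
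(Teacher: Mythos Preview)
Your construction has the containment backwards, and this breaks the lemma. By setting $c' = c|_{\lam'}$ and merely deleting bins from $\lam$, you are only \emph{removing} constraints, so $\mat'$ is a relaxation of $\mat$: every independent set of $\mat$ is independent in $\mat'$, not the other way around. Concretely, take $\alpha = 1/2$, a retained bin $B$ with $c(B)=10$, and a child $A$ with $c(A)=7$; your rule discards $A$ since $7 > \lceil \alpha \cdot 10 \rceil = 5$. A set of eight elements lying inside $A$ is then independent in $\mat'$ but not in $\mat$. So the third bullet of the lemma fails, and with it the inequality $OPT_{\mat'} \leq OPT_{\mat}$ --- indeed your own phrase ``the gambler in $\mat'$ faces a superset of feasible strategies'' is precisely why that inequality goes the other way. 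All of the effort you then spend on the ``substantive step'' $OPT_{\mat'} \geq \alpha \cdot OPT_{\mat}$ is, for your construction, vacuous: it already follows from $OPT_{\mat'} \geq OPT_{\mat}$.

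The paper's construction differs in one essential move: it \emph{shrinks} capacities rather than leaving them unchanged. Processing bins from largest to smallest, it sets $c'(A) := \min\{c(A),\, \lceil \alpha \cdot c'(B) \rceil\}$, where $B$ is the current $\lam'$-parent of $A$, and it discards $A$ only when $c(A) \geq c'(B)$ --- that is, only when $A$'s original constraint is genuinely implied by the already-shrunken parent. This makes $\mat'$ more restrictive than $\mat$, so the independence containment and the upper bound $OPT_{\mat'} \leq OPT_{\mat}$ are immediate. The lower bound is then dispatched by the single observation that $c'(A) \geq \alpha \cdot c(A)$ for every retained bin, from which $OPT_{\mat'} \geq \alpha \cdot OPT_{\mat}$ follows directly; no level-by-level thinning or rounding analysis is needed.
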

\begin{proof}[Proof of Lemma \ref{lem:logdepth}]
For the construction, let us start with empty $\lam'$ and process the  bins in $\lam$ in a non-increasing order of their cardinality.

Let $A\in\lam$ be the currently processed bin. 
If $A$ is not contained in any other bins from $\lam'$, add $A$ to $\lam'$ and set $c'(A) := c(A)$.
Otherwise, let $B \in \lam'$ be the inclusion-wise minimal bin in $\lam'$ containing $A$. 
If $c(A) \geq c'(B)$, we move to the next unprocessed bin from $\lam$. 
Otherwise,  we add $A$ to $\lam'$ and set $c'(A) := \min\{c(A), \lceil \alpha \cdot c'(B) \rceil\}$. 
This transformation yields the desired separation property~\eqref{eq:separation_property}.

Note that any independent set in $\mat'$ is also independent in $\mat$. Indeed,  if $A$ is in $\lam'$ then we have $c'(A)\leq c(A)$. If $A$ is in $\lam\setminus\lam'$ then $c(A)\leq c'(B)$ where $B\in \lam'$ is the inclusion-wise minimal bin in $\lam'$ containing $A$.
So, we have $OPT_{\mat'} \leq OPT_{\mat}$.
Finally, to get $OPT_{\mat'} \geq \alpha \cdot OPT_{\mat}$, note that for every bin $A\in \lam'$, we have that $c'(A)\geq \alpha \cdot c(A)$.
\end{proof}

\begin{corollary} \label{corr:qptas}
    There is a QPTAS for left-to-right \lmbs.
\end{corollary}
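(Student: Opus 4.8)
The plan is to combine the depth-parametrized dynamic program of Lemma~\ref{lem:depthdp} with the capacity-sparsification of Lemma~\ref{lem:logdepth}, after observing that the separation property \eqref{eq:separation_property} forces the laminar family to have only logarithmic depth. Fix a target accuracy $\epsilon\in(0,1)$ and set $\alpha:=1-\epsilon$. First I would apply Lemma~\ref{lem:logdepth} with this $\alpha$ to the given left-to-right \lmbs instance $\mat=(U,\lam,c)$, obtaining in polynomial time a left-to-right instance $\mat'=(U,\lam',c')$ with $\lam'\subseteq\lam$, with every independent set of $\mat'$ independent in $\mat$, with $\alpha\cdot OPT_{\mat}\le OPT_{\mat'}\le OPT_{\mat}$, and satisfying \eqref{eq:separation_property}.

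The crux is to bound the depth $L$ of $\mat'$ by $O(\log n/\epsilon)$. Consider any maximal chain of nested bins $A_1\subsetneq A_2\subsetneq\cdots\subsetneq A_L$ in $\lam'$. From the construction in the proof of Lemma~\ref{lem:logdepth} a bin is added below its parent only when its capacity is strictly smaller, so $c'(A_1)<c'(A_2)<\cdots<c'(A_L)\le n$; in particular $c'$ strictly decreases when passing to a sub-bin. On the other hand, \eqref{eq:separation_property} gives $c'(A_i)\le\lceil\alpha\, c'(A_{i+1})\rceil\le\alpha\, c'(A_{i+1})+1$, and whenever $c'(A_{i+1})\ge 2/(1-\alpha)$ this is at most $\tfrac{1+\alpha}{2}\,c'(A_{i+1})$, i.e.\ capacity drops by a constant multiplicative factor. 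Hence the chain contains at most $O\!\big(\log n/\log\tfrac{2}{1+\alpha}\big)=O(\log n/(1-\alpha))$ bins of capacity at least $2/(1-\alpha)$, and, by the strict decrease, at most $2/(1-\alpha)$ further bins; so $L=O(\log n/\epsilon)$.

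Finally I would run Lemma~\ref{lem:depthdp} on $\mat'$: being a left-to-right instance of depth at most $L=O(\log n/\epsilon)$, its optimal gain and an optimal policy are computable in $n^{O(L)}=n^{O(\log n/\epsilon)}=2^{O(\log^2 n/\epsilon)}$ time, which is quasi-polynomial for each fixed $\epsilon$. Executing this policy on the original instance is legal, since the arrival order is unchanged and every set independent in $\mat'$ is independent in $\mat$; its expected gain is $OPT_{\mat'}\ge\alpha\cdot OPT_{\mat}=(1-\epsilon)\cdot OPT_{\mat}$. As $\epsilon\in(0,1)$ was arbitrary, this gives a QPTAS for left-to-right \lmbs.

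The only real obstacle is the depth estimate of the middle paragraph, and there the point is that one must use \emph{both} properties of $\mat'$: \eqref{eq:separation_property} alone would allow a chain whose capacities plateau near $1/(1-\alpha)$ and hence be arbitrarily long, while the strict decrease from the construction alone only gives depth $n$; it is the combination -- geometric shrinkage down to capacity $\Theta(1/\epsilon)$, then an additive tail of length $O(1/\epsilon)$ -- that yields logarithmic depth. Everything else is routine bookkeeping: Lemma~\ref{lem:logdepth} runs in polynomial time, Lemma~\ref{lem:depthdp} runs in quasi-polynomial time at depth $O(\log n/\epsilon)$, and the objective loses only the single factor $1-\epsilon$.
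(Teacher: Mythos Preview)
Your proposal is correct and follows essentially the same approach as the paper: apply Lemma~\ref{lem:logdepth} with $\alpha=1-\epsilon$, observe that \eqref{eq:separation_property} forces depth $O(\log n/\epsilon)$, and then invoke the dynamic program of Lemma~\ref{lem:depthdp}. Your treatment of the ceiling in \eqref{eq:separation_property}---splitting the chain into a geometric-shrinkage part and a short strictly-decreasing tail using the construction detail that bins are only added when strictly smaller---is in fact more careful than the paper's one-line depth claim, but the argument and running-time bound are the same.
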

\begin{proof}
    Given a left-to-right \lmbs instance $\mat$, we can use Lemma~\ref{lem:logdepth} with $\alpha=1-\epsilon$ to get an instance $\mat'$. 
    By~\eqref{eq:separation_property} with $\alpha=1-\epsilon$, the depth of $\mat'$ is $\mathcal{O}(\log_{1/(1 - \epsilon)}(n))$. Now we can use Lemma~\ref{lem:depthdp} to compute the optimal gain and policy for $\mat'$ in $n^{\mathcal{O}(\log_{1/(1 - \epsilon)}(n))}$ time. By Lemma~\ref{lem:logdepth}, we obtain a $(1 - \epsilon)$-approximation to the optimal gambler's gain for $\mat$.
\end{proof}

\subsection{Linear Programming Formulation and Rounding Algorithm}

A naive dynamic programming algorithm for \lmbs can be written as in the proof of Lemma~\ref{lem:depthdp}. However, the number of considered states makes this approach  not viable. To circumvent this, we build on  a linear programming formulation from~\cite{AnariLMBS}. The linear program from~\cite{AnariLMBS} encodes dynamic programming ideas as in Lemma~\ref{lem:depthdp}. Their linear program allows us to construct a relaxation by  decomposing a given instance of $\lmbs$ into tractable parts and concentrating on optimal policies for each of these tractable parts. For the sake of completeness, we present the linear program and relevant results from~\cite{AnariLMBS} in this section.

\subsubsection{Small and Big Bins}\label{sec:smallbigbins}
To decompose an instance of $\lmbs$ into tractable parts, we partition $\lam$ into ``big'' and ``small'' bins based on their capacities.
If a bin is small, we use a linear program based on dynamic programming to guarantee its ``feasibility.''
If a bin is big, we enforce that its ``feasibility'' is guaranteed in  expectation.

Consider a threshold $K\in\mathbb{N}$.
Call a bin $A \in \lam$ \emph{big} if $c(A) \geq K$, and \emph{small} otherwise.
Let $\largebin \subseteq \lam$ indicate the set of big bins and $\smallbin \subseteq \lam$ indicate the set of inclusion-wise maximal small bins.

Without loss of generality, we assume that each element of $U$ is contained in some small bin. If we have an element not contained in any small bin, then we can add a bin containing only this element. So, we can assume that $\smallbin$ partitions $U$.

\subsubsection{Feasibility on Small Bins}
Let us consider a maximal small bin $B\in \smallbin$. Let $\lam^B = \{B' \in \lam \,:\, B' \subseteq B\}$ be the set of bins contained in $B$.
To capture states with respect to $B$ at each timestamp, we use $\sets^B \subseteq \mathbb{Z}^{\lam^B}$ to indicate the set of feasible states. Again, each vector  $\vecs \in \sets^B$ has an entry for every $A\in \lam^B$ which represents $c(A)$ minus the number of selected elements from $A$. Note that $|\sets^B|$ is at most $n^{\mathcal O(c(B))}$ because no more than $c(B)$ elements can be selected from $B$ by an online policy.

Let $\vec{d}_t \in \{0, 1\}^{\lam^B}$ be the indicator vector for the bins in $\lam^B$ containing the element $u_t$.
Let us define the set of \emph{forbidden neighboring states}
\[\partial \sets^B:=\{\vec{f}  \in \mathbb{Z}^{\lam^B}\setminus \sets^B\,:\,\text{there are }u_t \in B \text{ and } \vecs \in  \sets^B \text{ such that } \vecs = \vec{f} + \vec{d}_t\}\,.\]
In other words, $\partial \sets^B$ is the set of all states that can be reached by an online policy immediately after violating feasibility.

We use {\em allocation variables} $\varx_t(\vecs, v)$, which represent the probability that the gambler accepts the $t$th item $u_t$ and the state upon its arrival is $\vecs$, conditioned on $v$ being the realized value of $v_t$.
We then use $\varx_t(v)$ to represent the conditional probability that we accept $u_t$, conditioned on $v$ being the realized value of $v_t$.
The {\em state variables} $\vy_t(\vecs)$ represent the probability that the gambler is at state $\vecs$ upon the arrival of $u_t$.

Now we introduce a polytope $\mathcal{P}^B$ to capture transitions between states through an application of an online policy on the bin $B$. The polytope $\mathcal{P}^B$ is defined by the following constraints:
\begin{align*}
&\varx_t(v) = \sum_{\vecs \in \sets^B} \varx_t(\vecs, v) & \quad & \forall v, u_t \in B \\
&0 \leq \varx_t(\vecs, v) \leq \vy_t(\vecs) && \forall v, \vecs \in \sets^B, u_t \in B \\
&\vy_{t+1}(\vecs) = \vy_t(\vecs) - \expect_{v_t}[\varx_t(\vecs, v_t)] + \expect_{v_t}[\varx_t(\vecs + \vec{d}_t, v_t)] && \forall \vecs \in \sets^B, u_{t}, u_{t+1} \in B \\
&\vy_{t_0}([c(A)]_{A \in \lam^B}) = 1 && t_0 = \min\{t : u_t \in B\} \\
&\vy_t(\vecs) = 0 && \forall \vecs \in \partial \sets^B, u_t \in B\,.
\end{align*}
One can see that any  online policy for the restriction of $\mat$ on $B$ induces a feasible point in $\mathcal{P}^B$. Let us show that the opposite also holds.

\begin{proposition}[Proposition 2.1 from \cite{AnariLMBS}] \label{prop:solvelp}
    Given a bin $B\in \smallbin$ and a point $\{ \varx_t(\vecs, v), \varx_t(v), \vy_t(\vecs)\} \in \mathcal{P}^B$, there is an online policy for the restriction of $\mat$ on $B$ which guarantees the expected gambler's gain to be $\sum_{u_t\in B} \expect_{v_t} [v_t \cdot \varx_t(v_t)]$ and the expected number of selected elements to be $\sum_{u_t \in B} \expect_{v_t}[\varx_t(v_t)]$. 
    Moreover, this policy can be found in polynomial time.
\end{proposition}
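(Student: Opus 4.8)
The plan is to read the required policy directly off the given point $\{\varx_t(\vecs,v),\varx_t(v),\vy_t(\vecs)\}\in\mathcal{P}^B$ and then verify its two guarantees by tracking, inductively over the arrivals in $B$, the distribution of the gambler's state. Concretely, I would use the following randomized online policy for the restriction of $\mat$ to $B$: when $u_t\in B$ arrives with realized value $v_t$ and the current state is $\vecs$, accept $u_t$ with probability $\varx_t(\vecs,v_t)/\vy_t(\vecs)$ if $\vy_t(\vecs)>0$, and reject otherwise (rejecting also when $\vy_t(\vecs)=0$, a case that will turn out to be unreached). The box constraint $0\le\varx_t(\vecs,v)\le\vy_t(\vecs)$ makes these legitimate probabilities, and the policy is fully explicit given the point, so ``finding'' it amounts to reading off the (polynomially many) numbers $\varx_t(\vecs,v),\vy_t(\vecs)$ and performing one division per arrival.

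The heart of the argument is the invariant that, for every $u_t\in B$ and every $\vecs\in\sets^B$,
\[\prob[\text{the state is }\vecs\text{ when }u_t\text{ arrives}] \;=\; \vy_t(\vecs)\,.\]
I would prove this by induction on $t$ along the elements of $B$. For the first arrival $t_0=\min\{t:u_t\in B\}$ the state is deterministically the full-capacity vector $[c(A)]_{A\in\lam^B}$, which is exactly what the constraint $\vy_{t_0}([c(A)]_{A\in\lam^B})=1$ encodes. For the inductive step, decompose the event ``state $\vecs$ when $u_{t+1}$ arrives'' into ``state $\vecs$ at time $t$ and $u_t$ rejected'' plus ``state $\vecs+\vec{d}_t$ at time $t$ and $u_t$ accepted''. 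Crucially, the state seen by the policy at time $t$ is independent of $v_t$, since the history up to the arrival of $u_t$ is a function only of the values of the earlier elements of $B$ and the policy's own coins; hence the conditional acceptance probabilities integrate cleanly against the state distribution, and the induction hypothesis turns the two terms into $\vy_t(\vecs)-\expect_{v_t}[\varx_t(\vecs,v_t)]$ and $\expect_{v_t}[\varx_t(\vecs+\vec{d}_t,v_t)]$, whose sum is $\vy_{t+1}(\vecs)$ by the transition constraint of $\mathcal{P}^B$. The $\vy_t(\vecs)=0$ cases are harmless: by induction such a state carries no probability, and $\varx_t(\vecs,v)\le\vy_t(\vecs)=0$ annihilates the matching allocation terms.

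With the invariant in hand, the two guarantees reduce to a short conditioning computation. Using again that the state at the arrival of $u_t$ is independent of $v_t$, the expected gain from $u_t$ equals $\sum_v\prob[v_t=v]\,v\sum_{\vecs\in\sets^B}\vy_t(\vecs)\cdot\varx_t(\vecs,v)/\vy_t(\vecs)=\sum_v\prob[v_t=v]\,v\sum_{\vecs}\varx_t(\vecs,v)=\expect_{v_t}[v_t\cdot\varx_t(v_t)]$, where the last step is the constraint $\varx_t(v)=\sum_{\vecs}\varx_t(\vecs,v)$; summing over $u_t\in B$ gives the stated expected gain, and the same computation with $v_t$ replaced by the constant $1$ gives the stated expected number of selected elements. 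The one point requiring real care --- and the step I expect to be the main obstacle --- is verifying that the policy above is actually legal, i.e., that it never breaks independence of the selected set restricted to $B$: one must show that acceptance at a state $\vecs$ can have positive probability only when the successor state $\vecs-\vec{d}_t$ again lies in $\sets^B$. This is precisely the role of the forbidden-state constraints $\vy_t(\vecs)=0$ for $\vecs\in\partial\sets^B$: read together with flow conservation of the $\vy$-variables, they force $\varx_t(\vecs,v)=0$ whenever accepting $u_t$ at $\vecs$ would land in $\partial\sets^B$, so the policy never actually faces that situation. Making this last implication precise, while simultaneously keeping the bookkeeping of zero-probability states under control throughout the induction, is the delicate part; everything else is routine manipulation of the defining constraints of $\mathcal{P}^B$.
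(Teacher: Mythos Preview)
Your proposal is correct and takes essentially the same approach as the paper: the paper's proof is a brief outline that specifies exactly the same randomized policy (accept with probability $\varx_t(\vecs,v)/\vy_t(\vecs)$ when $\vy_t(\vecs)>0$, reject otherwise) and then declares feasibility and the gain/count guarantees ``straightforward.'' Your inductive verification of the state-distribution invariant and your discussion of how the forbidden-state constraints enforce feasibility are precisely the details the paper elides.
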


\begin{proof}
Let us outline the desired policy. Let us assume that the gambler is at the state $\vecs$ at the timestamp $t$ and $v$ is the realized value of $v_t$. If $\vy_t(\vecs)$ is nonzero, the gambler accepts the element $u_t$ with probability ${\varx_t(\vecs, v)}/{\vy_t(\vecs)}$. Otherwise, the gambler rejects the element $u_t$. It is straightforward to see that this policy  is feasible and guarantees the required gain and the expected number of selected elements.
\end{proof}

\subsubsection{Feasibility on Big Bins}
For each maximal small bin $B \in \smallbin$, we introduce a variable $\mathcal{N}_B$ to represent an upper bound on the expected number of elements selected from $B$. Using these variables we construct our final linear program.
\begin{align*}
&\text{maximize} & \quad & \sum_{t=1}^n \expect_{v_t} [v_t \cdot \varx_t(v_t)] \tag{$LP$} \label{lp2} \\
&\text{subject to} && \sum_{A \in \smallbin : A \subseteq B} \mathcal{N}_{A} \leq c(B) & \forall B \in \largebin \\
&&& \sum_{u_t \in B} \expect_{v_t}[\varx_t(v_t)] \leq \mathcal{N}_B & \forall B \in \smallbin \\
&&&\{ \varx_t(\vecs, v), \varx_t(v), \vy_t(\vecs)\} \in \mathcal{P}^B & \forall B \in \smallbin 
\end{align*}
Note that in~\eqref{lp2} the first type of constraints are global ex-ante constraints. They enforce the ``feasibility'' of big bins in expectation. The second type of constraints enforces that for each $B\in \smallbin$ the variable $\mathcal{N}_B$ is a correct upper bound on the expected number of selected elements. Finally, the third type of constraints guarantees ``feasibility'' on all small bins. One can see that any online policy induces a feasible solution for~\eqref{lp2}.

\subsubsection{Algorithm}
Now we are ready to state the algorithm. We pre-processes an input \lmbs instance as in Lemma~\ref{lem:logdepth} to guarantee the separation property~\eqref{eq:separation_property} with $\alpha=(1-\epsilon)$. After that we divide the bins into small and big as in Section~\ref{sec:smallbigbins}. We decrease the capacities for big bins in order to introduce ``slack'' for our online policy to satisfy their constraints. The capacities of small bins are satisfied by our online policy due to the constraints in the linear program \eqref{lp2}.

\begin{algorithm}[H]
\caption{ALG($\mat=(U,\lam,c)$, $K$, $\epsilon$)}\label{alg:PTAS}
Obtain $\mat'=(U,\lam',c')$ from $\mat=(U,\lam,c)$ as in  Lemma~\ref{lem:logdepth} based on $\alpha=(1 - \epsilon)$ \label{alg:PTAS-step-1}\;
Compute maximal small bins $\smallbin$ and big bins $\largebin$ for $\mat'$ as in Section~\ref{sec:smallbigbins} based on $K$ \label{alg:PTAS-step-2}\;
Obtain $\mat''=(U,\lam',c'')$ from $\mat'=(U,\lam',c')$ by setting $c''(B):=(1-\epsilon)\cdot c'(B)$ for $B\in \largebin$; and $c''(B):=c'(B)$ for $B\in \lam'\setminus\largebin$ \label{alg:PTAS-step-3}\;
Compute the optimal solution $\{\varx_t^{\star}(\vecs, v_t), \varx_t^{\star}(v_t), \vy^*_{t}(\vecs),  \mathcal{N}^*_A\}$ for the linear program~\eqref{lp2} with respect to $\mat''$ \label{alg:PTAS-step-4}\;
Extract an online policy from $\{\varx_t^{\star}(\vecs, v_t), \varx_t^{\star}(v_t), \vy^*_{t}(\vecs)\}$ for each $B\in \smallbin$ as in Proposition~\ref{prop:solvelp} \label{alg:PTAS-step-5}\;
Independently run the obtained online policies on $B\in \smallbin$. Select an element as long as the obtained policy suggests to select it and the selection is feasible with respect to $\mat'$ \label{alg:PTAS-step-6}.
\end{algorithm}

\subsection{Analysis}

In this section, we estimate the approximation guarantees achieved by Algorithm~\ref{alg:PTAS} with respect to an optimal online policy.

From now on, we assume for simplicity that each distribution is non-atomic.
This simplifies the analysis of the optimal online policy, because for non-atomic distributions the optimal online policy is both unique and deterministic.
A similar analysis applies to atomic distributions - the randomness in optimal online policies can be accounted for by considering small dispersals of the values in each distribution.
As the dispersals grow smaller, the behavior of the optimal online policy for the non-atomic case becomes the behavior of the optimal online policy for the atomic case.

\subsubsection{Losses from Discarding Elements in Step~\ref{alg:PTAS-step-6} of Algorithm~\ref{alg:PTAS}}

The key challenge of analyzing Algorithm~\ref{alg:PTAS} is to estimate the expected value of elements that were not selected in Step~\ref{alg:PTAS-step-6} due to the feasibility restrictions of $\mat'$. Lemma~\ref{lem:failure-prob} shows that each element has a rather small probability of being not selected  due to the feasibility restrictions of $\mat'$. To show this, we first need the following technical result about the concentration on the number of elements chosen in each small bin. We defer its proof to Section~\ref{sec:concentration}.

\begin{lemma} \label{lem:concentration}
    Given a maximal small bin $B \in \smallbin$ and an element $u_t\in B$, let $X_t$ be the event that the online policies computed in Step~\ref{alg:PTAS-step-5} of Algorithm~\ref{alg:PTAS} suggest to select $u_t$.
    Then, for every $\alpha>0$ we have that
    \[
        \expect[e^{\alpha \cdot X_B}] \leq e^{(e^\alpha - 1) \cdot \expect[X_B]}\,,
    \]
    where $X_B:=\sum_{u_t\in B}X_t$.
\end{lemma}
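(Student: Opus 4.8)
The plan is to prove the stated inequality by establishing the multiplicative bound on the moment generating function one factor at a time, processing the elements of $B$ in reverse arrival order. Write $B = \{u_{t_1}, \ldots, u_{t_m}\}$ with $t_1 < \cdots < t_m$, and for $j = 1, \ldots, m$ let $Y_j := X_{t_j} + X_{t_{j+1}} + \cdots + X_{t_m}$ be the (random) number of elements selected from $B$ at position $j$ or later, so $X_B = Y_1$. The key structural fact I would invoke is Lemma~\ref{lem:firstuseless} (the ``limited impact'' lemma mentioned in the techniques section): the selection of $u_{t_j}$ changes the conditional expected number of elements selected after it by at most $1$; more precisely, conditioned on the state of $B$ just before $u_{t_j}$ arrives, we have $\expect[Y_{j+1} \mid X_{t_j} = 1] \leq \expect[Y_{j+1} \mid X_{t_j} = 0] + \text{(something controlled)}$, and in fact the cleanest version says that conditioning on $X_{t_j}=1$ versus $X_{t_j}=0$ shifts $\expect[e^{\alpha Y_{j+1}}]$ by a bounded multiplicative amount.

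The main step is the inductive claim
\[
\expect\bigl[e^{\alpha Y_j} \,\big|\, \mathcal{F}_{j}\bigr] \leq e^{(e^\alpha - 1)\expect[Y_j \mid \mathcal{F}_j]}
\]
where $\mathcal{F}_j$ is the natural filtration capturing everything up to just before $u_{t_j}$ arrives. To pass from $j+1$ to $j$, I would condition on $\mathcal{F}_j$, write $e^{\alpha Y_j} = e^{\alpha X_{t_j}} \cdot e^{\alpha Y_{j+1}}$, and split on the value of $X_{t_j}$. The delicate point is that $X_{t_j}$ and $Y_{j+1}$ are \emph{not} negatively correlated in general (this is exactly what Figures~\ref{fig:positive-correlation-uniform} and~\ref{fig:positive-correlation-laminar} warn about), so I cannot simply multiply the MGFs. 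Instead I would use the fact that selecting $u_{t_j}$ consumes one unit of capacity in every bin of $\lam^B$ containing it, and — crucially — that the policy extracted in Step~\ref{alg:PTAS-step-5} acts \emph{optimally} on $B$ (subject to the modified value distributions), so the expected future count $\expect[Y_{j+1} \mid X_{t_j}=1, \mathcal{F}_j]$ is at most $\expect[Y_{j+1} \mid X_{t_j}=0, \mathcal{F}_j]$ plus at most one additional selected element's worth of discrepancy, captured by a ``one-step coupling.'' Combining the two branches and using the elementary inequality $1 + x(e^\alpha - 1) \leq e^{x(e^\alpha-1)}$ for $x \in [0,1]$ applied to $x = \prob[X_{t_j}=1 \mid \mathcal{F}_j]$, together with the induction hypothesis applied with the appropriately shifted conditional mean, should yield the bound for index $j$. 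Taking $\mathcal{F}_1$ to be trivial and using $\expect[Y_1] = \expect[X_B]$ finishes the argument.

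I expect the main obstacle to be exactly the handoff between the two branches $X_{t_j} = 0$ and $X_{t_j} = 1$: making precise the sense in which ``selecting $u_{t_j}$ increases the future expected count by at most what it should, given optimality'' requires the monotonicity/exchange properties of the optimal policy on a small bin, and the clean multiplicative form of the target inequality is fragile — a naive bound loses a constant factor in the exponent. The role of Lemma~\ref{lem:firstuseless} is presumably to supply precisely the inequality $\expect[e^{\alpha Y_{j+1}} \mid X_{t_j} = 1, \mathcal{F}_j] \leq e^{-\alpha}\,\expect[e^{\alpha Y_{j+1}} \mid X_{t_j} = 0, \mathcal{F}_j]$ or something morally equivalent, which is what makes the factor $e^{\alpha X_{t_j}}$ telescope against a compensating decrease in the future MGF. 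If that lemma is available as stated, the remaining work is the routine induction sketched above; if not, establishing that compensation is the real content of the proof.
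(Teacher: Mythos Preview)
Your inductive framework is the same as the paper's: induct on the number of elements in $B$, peel off the first element, condition on $X_1 \in \{0,1\}$, apply the induction hypothesis to the two conditional tails, and use Lemma~\ref{lem:firstuseless} to control the discrepancy between the two branches. So the skeleton is correct.

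The genuine gap is your guess about what Lemma~\ref{lem:firstuseless} buys you at the MGF level. You hope it supplies
\[
\expect\bigl[e^{\alpha Y_{j+1}} \mid X_{t_j}=1,\mathcal{F}_j\bigr]\;\leq\;e^{-\alpha}\,\expect\bigl[e^{\alpha Y_{j+1}} \mid X_{t_j}=0,\mathcal{F}_j\bigr],
\]
so that the factor $e^{\alpha}$ telescopes away. This inequality is \emph{false} in general. Lemma~\ref{lem:firstuseless} gives the pathwise sandwich $\mu_1 \le \mu_0 \le \mu_1+1$ for the future counts; in particular $\mu_0=\mu_1$ is allowed, and in that case the two conditional MGFs coincide, so no $e^{-\alpha}$ compensation is available. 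The ``telescoping'' route therefore does not close.

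What the paper actually does after applying the induction hypothesis is to take the two bounds $e^{(e^\alpha-1)\mu_1}$ and $e^{(e^\alpha-1)\mu_0}$ at face value and treat
\[
p_1\,e^{\alpha}\,e^{(e^\alpha-1)\mu_1}\;+\;p_0\,e^{(e^\alpha-1)\mu_0}
\]
as the objective of a small optimization problem in the variables $p_0,p_1,\mu_0,\mu_1$, subject only to $p_0+p_1=1$, $p_1(1+\mu_1)+p_0\mu_0=\mu$, and the sandwich $\mu_1\le\mu_0\le\mu_1+1$ coming from Lemma~\ref{lem:firstuseless}. One then checks, by convexity in $\mu_0$ and the elementary inequalities $\alpha\le e^\alpha-1$ and $1+p_1(e^\alpha-1)\le e^{p_1(e^\alpha-1)}$, that this objective is at most $e^{(e^\alpha-1)\mu}$ on the two boundary cases $\mu_0=\mu_1$ and $\mu_0=\mu_1+1$. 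The constraint $\mu_0\le\mu_1+1$ is used essentially here (at $\mu_0=\mu_1+1$ one gets $\mu_0=\mu$ exactly, which is what makes the bound tight); the one-sided inequality $\mu_1\le\mu_0$ alone would not suffice. Your sketch never reaches this optimization step, and the inequality $1+x(e^\alpha-1)\le e^{x(e^\alpha-1)}$ you mention handles only the boundary $\mu_0=\mu_1$, not the other one.
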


We note that Lemma \ref{lem:concentration} crucially relies on the ``left-to-right'' arrival order assumption.
If this assumption is dropped, instances with essentially worst-possible concentration can be constructed.
\begin{theorem} \label{thm:badconcentration}
    For any $r \geq 1$ and $\epsilon > 0$, there exists an instance of laminar Bayesian online selection $\mat = (U, \lam, c)$ with rank$(\mat) = r$ and $|U| \leq (r+1)^2$ such that
    \[
        \prob\left[\,|OPT(\mat)\right| = 0\,],\quad \prob\left[\,|OPT(\mat)| = r\,\right]\, \in\, [\frac{1}{2} - \epsilon, \frac{1}{2} + \epsilon],\,
    \]
    where $OPT(\mat) \subseteq U$ is a random variable denoting the items selected by an optimal algorithm.
\end{theorem}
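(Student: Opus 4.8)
The plan is to give an explicit construction and then read off the behaviour of the optimal online policy. The target is an instance whose optimal policy performs an ``all-or-nothing cascade'': there should be an early, roughly balanced random event after which, in one branch, it is optimal to go on and select a full rank-$r$ independent set, while in the other branch it is optimal to select nothing at all. Since Lemma~\ref{lem:concentration} rules out exactly this kind of anti-concentration under left-to-right orders, the arrival order in the construction must be far from left-to-right, and indeed it will heavily interleave the arrivals of elements lying in different bins of the laminar family.

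For the construction itself I would use $r$ coupled ``rounds,'' each round being a small gadget of the type behind Figure~\ref{fig:positive-correlation-uniform}: a ``safe'' element of modest deterministic value competing, inside a capacity-one bin, with a ``lottery'' element worth a large amount with probability roughly $\tfrac{1}{2}$ and $0$ otherwise, tuned so that an unconstrained optimal policy prefers to pass on the safe element and play the lottery. The crucial point is that $r$ such gadgets run independently would make the number of selected elements Binomial, hence strongly concentrated --- the opposite of what we want --- so the rounds must be coupled. I would couple them by nesting the gadgets inside a telescoping family of shared laminar bins whose capacities decrease geometrically down the laminar tree (as in Lemma~\ref{lem:logdepth}), together with a non-left-to-right arrival order chosen so that the opportunity cost the optimal continuation assigns to an unused unit of capacity in round $i$ depends on how many elements have already been selected: having selected more so far makes the round-$i$ safe element effectively free to take, while having selected fewer keeps every remaining slot precious and so keeps the policy from committing at all. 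This ``rich get richer'' feedback --- of which the positive correlation between selections exhibited in Figure~\ref{fig:positive-correlation-laminar} is a single step --- is what drives the final count to the two extremes $0$ and $r$. Counting elements per round gives $|U|\le(r+1)^2$, the laminar tree has rank exactly $r$, and the capacities together with the atoms and weights of the distributions are then tuned so that the two extreme outcomes each occur with probability in $[\tfrac{1}{2}-\epsilon,\tfrac{1}{2}+\epsilon]$ and the intermediate outcomes have total probability at most $2\epsilon$; this tuning is where the ``for every $\epsilon>0$'' comes from.

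To make this rigorous I would analyse the optimal policy via its dynamic program over remaining-capacity states (the states of Section~\ref{sec:PTAS}) and prove, by induction on the rounds, that conditioned on the state entering round $i$ the round-$i$ behaviour is forced: from the ``cascade-on'' state the policy selects exactly one element and moves to the next cascade-on state, and from the ``cascade-off'' state it selects nothing and stays off. The base case is the initial balanced event, and the inductive step is a comparison of the round-$i$ value function in the two relevant states. Once this structural claim is established, $|OPT(\mat)|$ equals $r$ deterministically on the cascade-on event and $0$ on the cascade-off event, and a direct computation finishes the argument. I expect the structural claim to be the main obstacle: one has to choose the shared bins and the value distributions so that optimality genuinely forces the cascade in both directions --- in particular, so that in the ``off'' branch no later element, not even a deterministic one, is ever worth taking --- and then carry the induction through the interaction between the global top-level capacity and the nested shared bins, which is exactly the place where left-to-right-style concentration would otherwise reappear.
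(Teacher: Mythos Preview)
Your architecture is the paper's: a fair initial coin $u_1$, then $r-1$ gadgets each containing a deterministic ``safe'' element $w_1$ sharing a capacity-one bin with a large lottery $w_{i+2}$, analysed by an induction over rounds showing that the cascade-on state forces one further selection and the cascade-off state forces none. Two points in your description would not survive the execution, though.

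First, the off-cascade cannot be made deterministic. In the paper's construction the optimal policy in the empty state does skip $w_1$, but it \emph{would} take the lottery elements $w_2,\ldots,w_{i+2}$ if any arrive; the cascade only continues with probability $(1-p)^{i+1}$. This leakage is exactly where $\epsilon$ enters: one takes $p$ small enough that $\Pr[|OPT|=0]\ge\tfrac12(1-p)^{(r+1)^2}\ge\tfrac12-\epsilon$. Your requirement that ``in the off branch no later element, not even a deterministic one, is ever worth taking'' cannot be arranged simultaneously with the on branch needing those same lotteries to be available. Second, the coupling mechanism is not that ``remaining slots are precious when fewer have been selected,'' and the capacities do not decrease geometrically as in Lemma~\ref{lem:logdepth}; they run $1,2,\ldots,r$, and it is the \emph{values} that are scaled down by factors $\alpha_i$ so that later rounds cannot interfere with the round-$i$ comparison. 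The actual lever is a second high-value lottery $w_{i+1}$ placed \emph{inside the bin containing all earlier elements}, that bin having capacity exactly $i-1$. In the empty state both lotteries $w_{i+1}$ and $w_{i+2}$ are reachable, so the expected continuation after skipping $w_1$ exceeds $v_{w_1}$; in the full state the old bin is saturated, $w_{i+1}$ is blocked, only one lottery remains reachable, and the continuation drops below $v_{w_1}$, so $w_1$ is taken. The non-left-to-right feature is precisely that $w_{i+1}$ lies in the old bin but arrives after $w_1,\ldots,w_i$, which do not.
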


We provide the proof of Theorem~\ref{thm:badconcentration} in appendix.

\begin{lemma}\label{lem:failure-prob} In left-to-right \lmbs, for each $u_t\in U$, the probability that upon the arrival of $u_t$ the online policies computed in Step~\ref{alg:PTAS-step-5} suggest to select $u_t$ but it is not feasible with respect to $\mat'$ is at most $3/(K\cdot \epsilon^3)$.
\end{lemma}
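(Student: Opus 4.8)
The plan is to fix an element $u_t \in U$, let $B \in \smallbin$ be the maximal small bin containing $u_t$, and decompose the ``bad event'' — that the policies from Step~\ref{alg:PTAS-step-5} suggest to select $u_t$ but doing so violates independence in $\mat'$ — according to which bin causes the infeasibility. A violation can only be caused by a bin $A \in \lam'$ with $u_t \in A$; by the left-to-right structure and the separation property, the relevant bins form a chain containing $u_t$. We split the chain into the big bins $\largebin$ and the (at most one, by maximality) small bin $B$. For small bins there is no violation at all: the constraints $\{\varx_t^\star,\varx_t^\star(v),\vy_t^\star\}\in\mathcal{P}^B$ in~\eqref{lp2} guarantee via the forbidden-states constraints $\vy_t(\vecs)=0$ for $\vecs\in\partial\sets^B$ that the induced policy on $B$ never attempts an infeasible selection with respect to any bin in $\lam^B$, in particular with respect to $c'$ restricted to $B$ (here I use that $c''$ agrees with $c'$ on small bins). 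So the bad event is contained in the union, over big bins $A \supseteq \{u_t\}$, of the event that $A$ is already saturated under $c'(A)$ when $u_t$ arrives.

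Next I would bound, for a single big bin $A\in\largebin$ with $u_t\in A$, the probability that strictly more than $c'(A)$ elements of $A$ have already been selected. Write $A = \bigsqcup_{j} B_j$ as a disjoint union of maximal small bins $B_j\in\smallbin$ (this is exactly where I use that $\smallbin$ partitions $U$ and every small bin is contained in a unique minimal big bin). Let $Y_A = \sum_j X_{B_j}$ be the total number of elements the Step~\ref{alg:PTAS-step-5} policies select from $A$ over the whole run; the number selected before $u_t$ is at most $Y_A$. By the LP constraints, $\expect[Y_A] = \sum_j \expect[X_{B_j}] \le \sum_j \mathcal{N}_{B_j}^\star \le c''(A) = (1-\epsilon)\,c'(A)$, using the second and first constraint families of~\eqref{lp2} applied to $\mat''$. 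Since the policies on distinct $B_j$ are run independently (Step~\ref{alg:PTAS-step-6}), and each $X_{B_j}$ satisfies the Chernoff-type moment bound $\expect[e^{\alpha X_{B_j}}] \le \exp((e^\alpha-1)\expect[X_{B_j}])$ from Lemma~\ref{lem:concentration}, the product $Y_A$ inherits $\expect[e^{\alpha Y_A}] \le \exp((e^\alpha-1)\expect[Y_A])$. A standard multiplicative Chernoff bound then gives, writing $\mu_A=\expect[Y_A]$ and $c'(A) \ge \mu_A/(1-\epsilon)$, that $\prob[Y_A > c'(A)] \le \prob[Y_A > (1+\delta)\mu_A]$ with $1+\delta = 1/(1-\epsilon)$, hence roughly $\prob[Y_A > c'(A)] \le e^{-\Theta(\epsilon^2 \mu_A)} \le e^{-\Theta(\epsilon^2 (1-\epsilon) c'(A))}$; since $A$ is big, $c'(A)\ge K$, so each such term is at most something like $e^{-\Omega(\epsilon^2 K)}$, and more carefully one extracts a bound of order $1/(K\epsilon^3)$ per bin after summing the geometric series over the chain of big bins (the capacities of big bins along the chain grow at least geometrically because of~\eqref{eq:separation_property}, so $\sum_{A}\prob[Y_A>c'(A)]$ is dominated by a geometric series whose first term is $O(1/(K\epsilon^3))$). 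A union bound over the chain of big bins containing $u_t$ then yields the claimed $3/(K\epsilon^3)$.

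The main obstacle is the concentration estimate for $Y_A$ and, in particular, getting the right dependence on $\epsilon$ and $K$ so that the per-bin failure probabilities sum to $O(1/(K\epsilon^3))$ rather than merely $o(1)$. This has two parts: first, justifying that the independence across maximal small bins in Step~\ref{alg:PTAS-step-6} lets the moment generating functions multiply — this is routine but must be stated cleanly; second, choosing the Chernoff parameter $\alpha$ (equivalently $\delta$) as a function of $\epsilon$ and summing the resulting tail bounds over the geometric chain of big-bin capacities so that the total telescopes to the stated $3/(K\epsilon^3)$. The bookkeeping here — converting $e^{-\Theta(\epsilon^2 K)}$-type bounds into a clean $3/(K\epsilon^3)$ after summation, and verifying the constant $3$ — is the delicate step; everything else (the reduction to big bins, the LP-feasibility facts, the partition structure) is essentially structural and follows directly from the constructions in Section~\ref{sec:smallbigbins}, Lemma~\ref{lem:logdepth}, and Lemma~\ref{lem:concentration}.
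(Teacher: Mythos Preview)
Your proposal is correct and follows essentially the same route as the paper: reduce to violations of big bins only, multiply the moment-generating bounds of Lemma~\ref{lem:concentration} across the independent maximal small bins inside each big bin $A$ to get $\expect[e^{\alpha Y_A}]\le e^{(e^\alpha-1)(1-\epsilon)c'(A)}$, apply Markov at threshold $c'(A)$, and then sum the resulting tails over the chain of big bins containing $u_t$ using the geometric growth of capacities from~\eqref{eq:separation_property}. The paper makes the ``delicate step'' you flag explicit by taking $\alpha=\ln(1+\epsilon)$, which yields $\prob[Y_A>c'(A)]\le e^{-\epsilon^2 c'(A)/(2+\epsilon)}$, and then uses the crude inequality $e^{-x}\le 1/x$ together with $c'(A)\ge K/(1-\epsilon)^j$ along the chain to collapse the sum to $\tfrac{2+\epsilon}{K\epsilon^2}\sum_{j\ge 1}(1-\epsilon)^j\le 3/(K\epsilon^3)$.
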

\begin{proof}

    For each item $u_i\in U$, let $X_i$ be the event that the online policies computed in Step~\ref{alg:PTAS-step-5} suggest to select $u_i$. Given a bin $B\in \lam'$, let $X_B := \sum_{u_i \in B} X_i$.

    By the design of Algorithm~\ref{alg:PTAS}, for every small bin $B\in \smallbin$, the sum $\sum_{u_i\in B} X_i$ is always at most $c'(B)$. 
Now, consider any big bin $B\in \largebin$ with $u_t \in B$. Let us  bound the probability that $X_B$ exceeds $c'(B)$. First, we note that $\expect[X_B] \leq (1 - \epsilon) \cdot c'(B)$ due to the structure of~\eqref{lp2} in Step~\ref{alg:PTAS-step-4} and $c''(B)\leq (1 - \epsilon) \cdot c'(B)$.  
    By Lemma~\ref{lem:concentration}, we have strong bounds on the concentration of $X_A$ for every maximal small bin $A \in \smallbin$ with $A \subseteq B$. For every $\alpha > 0$, we have $
        \expect[e^{\alpha X_A}] \leq e^{(e^\alpha - 1) \cdot \expect[X_A] }$.
    
    Note that the suggestions of online policies from Step~\ref{alg:PTAS-step-5} on different maximal small bins $A\in\smallbin$ are independent.
    Since we assumed that every element is contained in some maximal small bin we have that
    \[
        \expect[e^{\alpha\cdot X_B}]=
        \expect[e^{\alpha\cdot \sum_{A\in \smallbin,\,A\subseteq B}X_A}] =\prod_{A\in \smallbin,\,A\subseteq B} \expect[e^{\alpha \cdot X_A}]\leq e^{(e^\alpha - 1) \cdot \sum_{A\in \smallbin,\,A\subseteq B}\expect[X_A]} 
        \leq e^{(e^\alpha - 1) \cdot (1-\epsilon)\cdot c'(B)}.
    \]
  Hence, by Chernoff type estimation we have
    \begin{align*}
        \prob[X_B > c'(B)] \leq e^{(e^\alpha - 1) \cdot (1-\epsilon)\cdot c'(B)}/{e^{\alpha\cdot c'(B)}}=e^{\left((e^\alpha -1)(1-\epsilon)-\alpha\right)\cdot {c'(B)}}
    \end{align*}
    and by selecting $\alpha=\ln(1+\epsilon)$ and by $\ln(1+\epsilon)\geq \epsilon/(1+\epsilon/2)$
      \begin{equation*}
        \prob[X_B > c'(B)] \leq e^{\left(\epsilon\cdot (1-\epsilon)-\ln(1+\epsilon)\right)\cdot {c'(B)}}\leq e^{-\frac{\epsilon^2}{(2+\epsilon)}\cdot {c'(B)}}\,.
      \end{equation*}

    Recall that for any two big bins $B$, $B'\in \largebin$ such that $B \subsetneq B'$ we have $c'(B)\leq (1-\epsilon)\cdot c'(B')$. Recall also that for any big bin $B\in\largebin$ we have $c'(B) \geq K$. Thus, using our previous estimates we have
    \[\sum_{\substack{B\in \largebin\\u_t\in B}}\prob[X_B > c'(B)]\leq \sum_{j=1}^{\infty}e^{-\frac{\epsilon^2}{(2+\epsilon)}\cdot {K/(1-\epsilon)^j}}\leq \sum_{j=1}^{\infty}\frac{2+\epsilon}{K \cdot\epsilon^2}(1-\epsilon)^j\leq \frac{3}{K\cdot \epsilon^3}\,,\]
    finishing the proof.
\end{proof}

\subsubsection{Losses from Transformations in Steps~\ref{alg:PTAS-step-1}-\ref{alg:PTAS-step-3} of Algorithm~\ref{alg:PTAS}}

There are two steps in Algorithm~\ref{alg:PTAS} where we transform the original \lmbs problem: Step~\ref{alg:PTAS-step-1} and  Step~\ref{alg:PTAS-step-3}.  The next lemma captures how the optimal gain changes between these transformations. We are also interested in the gain estimations obtained through~\eqref{lp2}.
So, let us define~$LP_{\mat'}$ and~$LP_{\mat''}$ to indicate the optimal objective values of~\eqref{lp2} with respect to $\mat'$ and $\mat''$, respectively.

\begin{lemma}\label{lem:preprocessed-matroid-gain}
    In Algorithm~\ref{alg:PTAS}, the following inequality holds
\[
    LP_{\mat''} \geq (1-\epsilon)^2 \cdot OPT_{\mat}\,.
\]
\end{lemma}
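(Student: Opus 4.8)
The plan is to chain together the guarantees established for the two non-trivial transformations in Algorithm~\ref{alg:PTAS}, namely Step~\ref{alg:PTAS-step-1} (the depth-reduction of Lemma~\ref{lem:logdepth} with $\alpha = 1-\epsilon$) and Step~\ref{alg:PTAS-step-3} (scaling down the capacities of big bins by a factor $1-\epsilon$). First I would invoke Lemma~\ref{lem:logdepth} directly: applying it with $\alpha = 1-\epsilon$ produces $\mat'$ with $OPT_{\mat'} \geq (1-\epsilon)\cdot OPT_{\mat}$. Next I would argue $LP_{\mat'} \geq OPT_{\mat'}$, since any online policy — in particular the optimal one for $\mat'$ — induces a feasible solution to $\eqref{lp2}$ with respect to $\mat'$ with the same expected gain; this is exactly the remark made right after the statement of $\eqref{lp2}$ that ``any online policy induces a feasible solution for~\eqref{lp2}.'' Combining, $LP_{\mat'} \geq (1-\epsilon)\cdot OPT_{\mat}$.

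The remaining, and main, step is to relate $LP_{\mat''}$ to $LP_{\mat'}$, losing only another factor of $1-\epsilon$. The idea is to take an optimal solution $\{\varx_t(\vecs,v),\varx_t(v),\vy_t(\vecs),\mathcal{N}_A\}$ for $\eqref{lp2}$ with respect to $\mat'$ and scale it down uniformly by $1-\epsilon$ to obtain a feasible solution for $\eqref{lp2}$ with respect to $\mat''$. Concretely, I would set $\tilde{\varx}_t(\vecs,v) = (1-\epsilon)\varx_t(\vecs,v)$, $\tilde{\varx}_t(v) = (1-\epsilon)\varx_t(v)$, and $\tilde{\mathcal{N}}_A = (1-\epsilon)\mathcal{N}_A$; the state variables $\vy_t(\vecs)$ would need to be adjusted accordingly (this is where I expect the only real subtlety). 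The big-bin constraints become $\sum_{A \subseteq B} \tilde{\mathcal{N}}_A = (1-\epsilon)\sum_{A\subseteq B}\mathcal{N}_A \leq (1-\epsilon)c'(B) = c''(B)$, so they are satisfied; the $\mathcal{N}_B$-upper-bound constraints scale through; and the objective value becomes exactly $(1-\epsilon)\cdot LP_{\mat'}$. Since the small-bin capacities are unchanged between $\mat'$ and $\mat''$, the polytope membership $\{\ldots\}\in\mathcal{P}^B$ for $B\in\smallbin$ is the place requiring care: $\mathcal{P}^B$ is a polytope containing the ``do nothing'' policy (all $\varx \equiv 0$, $\vy$ the indicator of the full-capacity state), and it is in fact the set of distributions over online policies on $B$, hence convex; scaling an element of $\mathcal{P}^B$ towards the ``do nothing'' point by factor $1-\epsilon$ keeps it in $\mathcal{P}^B$. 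So I would phrase this as: $(1-\epsilon)\cdot(\text{feasible point}) + \epsilon\cdot(\text{do-nothing point}) \in \mathcal{P}^B$ by convexity, which fixes the definition of the adjusted $\vy_t$ automatically and preserves all the defining linear equalities of $\mathcal{P}^B$.

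Putting the pieces together gives
\[
    LP_{\mat''} \;\geq\; (1-\epsilon)\cdot LP_{\mat'} \;\geq\; (1-\epsilon)\cdot OPT_{\mat'} \;\geq\; (1-\epsilon)^2\cdot OPT_{\mat}\,,
\]
which is the claim. The main obstacle, as noted, is verifying that the scaled-and-shifted point genuinely lies in each $\mathcal{P}^B$; I expect this to follow cleanly once one observes that $\mathcal{P}^B$ is exactly the projection of the polytope of (distributions over) online policies on $B$ — equivalently, that it is convex and contains the trivial policy — rather than needing to manipulate the defining equalities by hand. Everything else ($LP \geq OPT$, the big-bin constraint arithmetic, and the invocation of Lemma~\ref{lem:logdepth}) is routine.
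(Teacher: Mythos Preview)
Your proposal is correct and follows essentially the same three-step chain as the paper's proof: invoke Lemma~\ref{lem:logdepth} for $OPT_{\mat'} \geq (1-\epsilon)\,OPT_{\mat}$, use that any online policy is feasible for~\eqref{lp2} to get $LP_{\mat'} \geq OPT_{\mat'}$, and scale a feasible point for $\mat'$ by $(1-\epsilon)$ to obtain feasibility for $\mat''$. Your handling of the scaling step is in fact more careful than the paper's one-line ``multiplication by a factor of $(1-\epsilon)$'': you correctly note that the state variables cannot simply be multiplied (the initial-state normalization $\vy_{t_0}=1$ would break), and your fix via the convex combination with the do-nothing policy in $\mathcal{P}^B$ is exactly the right way to make this rigorous.
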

\begin{proof}
    By Lemma~\ref{lem:logdepth}, we have 
\[
    OPT_{\mat'} \geq (1 - \epsilon) \cdot OPT_{\mat}.
\]
By definition of~\eqref{lp2} we have $LP_{\mat'} \geq OPT_{\mat'}$.
To finish the proof, note that any feasible point in the polytope~\eqref{lp2} for $\mat'$ becomes feasible in~\eqref{lp2} for $\mat''$ through a  multiplication by a factor of $(1 - \epsilon)$.
Thus, we obtain $LP_{\mat''} \geq (1 - \epsilon) \cdot LP_{\mat'}$.
\end{proof}

\subsubsection{Characterizations of Policies in Step~\ref{alg:PTAS-step-5} of Algorithm~\ref{alg:PTAS}} To estimate the performance of Step~\ref{alg:PTAS-step-6} in Algorithm~\ref{alg:PTAS}, we need first to understand the nature of the online policies obtained in Step~\ref{alg:PTAS-step-5}. The following lemma shows that they are optimal online policies subject to ``shifting'' values of elements. We defer the proof of Lemma~\ref{lem:optimal-small-bin} to the appendix.

\begin{lemma}\label{lem:optimal-small-bin}
    Let $B \in \smallbin$ be a maximal small bin. Then there exists $\lambda^*$ such that the online policy for $B$ computed in Step~\ref{alg:PTAS-step-5} is an optimal online policy for $B$ with values of $u_t$, $t\in B$ being $v_t-\lambda^*$.
\end{lemma}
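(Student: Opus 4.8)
The plan is to analyze the structure of the optimal solution to~\eqref{lp2} via linear programming duality, focusing on the single constraint that couples the small bin $B$ to the rest of the instance. First I would observe that, once we fix the optimal values $\mathcal{N}^*_A$ for $A\in\smallbin$, the linear program~\eqref{lp2} decomposes across the maximal small bins: the restriction of $\{\varx^\star_t(\vecs,v),\varx^\star_t(v),\vy^*_t(\vecs)\}$ to the variables indexed by elements of $B$ must be an optimal solution of the ``local'' linear program that maximizes $\sum_{u_t\in B}\expect_{v_t}[v_t\cdot\varx_t(v_t)]$ over $\mathcal{P}^B$ subject to the single additional constraint $\sum_{u_t\in B}\expect_{v_t}[\varx_t(v_t)]\le \mathcal{N}^*_B$. (If this restriction were not optimal for the local problem, we could replace it with something better without touching any other bin or any $\mathcal{N}^*$, contradicting global optimality.)

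Next I would dualize that single coupling constraint. Let $\lambda^*\ge 0$ be an optimal Lagrange multiplier for $\sum_{u_t\in B}\expect_{v_t}[\varx_t(v_t)]\le \mathcal{N}^*_B$ in the local linear program. By Lagrangian / LP duality and complementary slackness, the restricted optimal point is also an optimal solution of the \emph{unconstrained} (over $\mathcal{P}^B$) linear program with modified objective
\[
\sum_{u_t\in B}\expect_{v_t}\bigl[(v_t-\lambda^*)\cdot\varx_t(v_t)\bigr],
\]
since the extra $-\lambda^*\sum_{u_t\in B}\expect_{v_t}[\varx_t(v_t)]$ term differs from the Lagrangian objective only by the constant $\lambda^*\mathcal{N}^*_B$. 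Now I would invoke the correspondence between points of $\mathcal{P}^B$ and online policies for the restriction of $\mat$ to $B$: by Proposition~\ref{prop:solvelp} every point of $\mathcal{P}^B$ is realized by an online policy with gain $\sum_{u_t\in B}\expect_{v_t}[v_t\cdot\varx_t(v_t)]$, and conversely every online policy induces a feasible point, so $\mathcal{P}^B$ is exactly the achievable set and optimizing a linear objective $\sum_{u_t\in B}\expect_{v_t}[w_t\cdot\varx_t(v_t)]$ over $\mathcal{P}^B$ computes precisely the optimal online gain for $B$ with values $w_t$. Taking $w_t=v_t-\lambda^*$ identifies the LP-optimal point restricted to $B$ as an optimal online policy for $B$ with shifted values $v_t-\lambda^*$; and the policy extracted in Step~\ref{alg:PTAS-step-5} via Proposition~\ref{prop:solvelp} is exactly one realizing this point. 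One technical wrinkle is that Proposition~\ref{prop:solvelp} only guarantees the gain with respect to the \emph{original} objective coefficients $v_t$, so I would be careful to note that the extraction in Proposition~\ref{prop:solvelp} depends only on the point $\{\varx^\star,\vy^*\}$, not on the objective, hence the very same policy attains the optimal value for the shifted objective as well.

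The main obstacle I expect is justifying the decomposition/duality step cleanly: formally arguing that the $B$-restriction of the global optimum is optimal for the local problem (this requires that the coupling of $B$ to the rest of~\eqref{lp2} goes \emph{only} through the scalar $\sum_{u_t\in B}\expect_{v_t}[\varx_t(v_t)]$ and $\mathcal{N}^*_B$, which is true but should be spelled out), and then that strong duality and complementary slackness apply to produce the multiplier $\lambda^*$ with the stated property. A secondary subtlety is sign/boundary handling: if the coupling constraint is slack at the optimum we may take $\lambda^*=0$ and optimality is immediate; if it is tight, complementary slackness gives exactly the Lagrangian optimality needed. Both cases are routine once the decomposition is established, so the crux is really the first paragraph's ``decouple then dualize'' argument.
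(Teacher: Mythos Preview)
Your proposal is correct and follows essentially the same approach as the paper: decouple the bin $B$ from the rest of~\eqref{lp2} via the single scalar constraint on $\sum_{u_t\in B}\expect_{v_t}[\varx_t(v_t)]$, then Lagrangianize that constraint to obtain $\lambda^*$ and recognize the resulting unconstrained optimization over $\mathcal{P}^B$ as the optimal-online-policy problem with shifted values. The paper's proof is slightly terser (it writes the local coupling constraint as an equality and simply cites the dual optimum for $\lambda^*$), whereas you spell out the decomposition argument, the complementary slackness case split, and the observation that the policy extraction in Proposition~\ref{prop:solvelp} depends only on the point and not on the objective; these are exactly the right details to add, and none of them diverge from the paper's line of reasoning.
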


\subsubsection{Properties of Optimal Online Policies on Left-to-Right \lmbs}
Now that we know that the online policies computed in Step~\ref{alg:PTAS-step-5} are optimal online policies for some left-to-right \lmbs, let us understand the gains and behaviour of such optimal online policies. To study an optimal policy for a left-to-right \lmbs instance $\widetilde \mat = (\widetilde U, \widetilde \lam,  \widetilde c)$, let us consider the natural dynamic program for finding such policies.
For simplicity, each considered state consists of the current timestamp $t$ and the set $S\subseteq \widetilde U$ of currently selected elements.
Let $D_t(S)$ indicate the expected gain of the optimal pricing policy from items $\widetilde u_t$, $\widetilde u_{t+1}$, \ldots, $ \widetilde u_n$ given that we have currently selected items $S \subseteq \{\widetilde u_1, \ldots, \widetilde u_{t-1}\}$.
These values of $D_t(S)$ can be defined using a dynamic programming approach as follows
\[
    D_t(S) := 
    \begin{cases}
        D_{t+1}(S) \quad &\text{if $S \cup \{\widetilde u_t\}$ is infeasible for }\widetilde \mat \\
        \expect_{\widetilde v_t}\left[\max(D_{t+1}(S), D_{t+1}(S \cup \{\widetilde u_t\}) + \widetilde v_t)\right] \quad &\text{otherwise}
    \end{cases}
\]
together with
\[
    D_{n+1}(S) := 0\,.
\]
Hence, in the case that an optimal online policy reached the state given by $S\subseteq \widetilde U$ and $t$, then this optimal policy selects an element $\widetilde u_t$ with value $\widetilde v_t$ if and only if $S \cup \{\widetilde u_t\}$ is feasible and
\[
   \widetilde  v_t \geq D_{t+1}(S) - D_{t+1}(S \cup \{\widetilde u_t\})\,.
\]
In other words, $D_{t+1}(S) - D_{t+1}(S \cup \{\widetilde u_t\})$ is the threshold value at which the optimal online policy selects $\widetilde u_t$.

The following lemma shows that slight changes in the current state set $S$ lead only to slight changes in the number of elements selected by the optimal policy in the future.

\begin{lemma} \label{lem:firstuseless}
    Let $\widetilde \mat = (\widetilde U, \widetilde \lam,  \widetilde c)$ be a left-to-right \lmbs instance with ordering of elements $\widetilde u_1$, $\widetilde u_2$, \ldots, $\widetilde u_n$.
     Consider the selections done by an optimal online policy when the policy encounters a fixed realization of values $\widetilde v_2$, $\widetilde v_3$, \ldots, $\widetilde v_n$. For this fixed realization, let $\mu_0$ and $\mu_1$ be the number of elements selected by an  optimal online policy from $\{\widetilde u_2, \ldots, \widetilde u_n\}$ starting at $t=2$ with $S$ being $\varnothing$ and $\{\widetilde u_1\}$, respectively.
    Then we have that
    \[
        \mu_1 \leq \mu_0 \leq \mu_1 + 1\,.
    \]
\end{lemma}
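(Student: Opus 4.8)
The plan is to couple the two runs of the optimal online policy --- one starting at $t=2$ with $S=\varnothing$ (call it run~$0$) and the other starting with $S=\{\widetilde u_1\}$ (call it run~$1$) --- on the common realization $\widetilde v_2,\dots,\widetilde v_n$, and to track the ``difference'' between their current selected sets as they process elements $\widetilde u_2,\widetilde u_3,\dots,\widetilde u_n$ in order. I would prove by induction on $t$ that at every timestamp the two selected sets $S_0$ and $S_1$ satisfy an invariant of the form: either $S_1 = S_0 \cup \{\widetilde u_1\}$ (and then $|S_1| = |S_0|+1$), or $S_1 = S_0$, or $S_1 \cup \{z\} = S_0 \cup \{\widetilde u_1\}$ for some single element $z$ currently in neither --- i.e. the symmetric difference stays small and the count $|S_0| - |S_1|$ stays in $\{-1,0\}$. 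If this invariant holds through $t=n+1$, reading off the counts gives exactly $\mu_1 \le \mu_0 \le \mu_1+1$.

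The key steps, in order: (i) set up the coupling and the precise invariant, being careful that ``number of elements selected from $\{\widetilde u_2,\dots,\widetilde u_n\}$'' excludes $\widetilde u_1$, so in the branch $S_1 = S_0 \cup \{\widetilde u_1\}$ both runs have selected the same number of $\{\widetilde u_2,\dots\}$-elements; (ii) show the threshold characterization $\widetilde v_t \ge D_{t+1}(S) - D_{t+1}(S\cup\{\widetilde u_t\})$ behaves monotonically under the invariant --- concretely, that adding $\widetilde u_1$ to the state can only raise the acceptance threshold and can only tighten feasibility, so run~$1$ is ``more conservative'' than run~$0$; (iii) do the case analysis of what happens to the invariant when $\widetilde u_t$ arrives: both accept, both reject, run~$0$ accepts while run~$1$ rejects (this is where the symmetric-difference element $z = \widetilde u_t$ is created or where $\widetilde u_1$'s ``slot'' gets filled), etc.; and crucially (iv) invoke the \emph{left-to-right} property of $\widetilde{\mat}$ to argue that once the two runs' selected sets differ only by a bounded symmetric difference concentrated in one bin, the feasibility constraints seen by the two runs for the remaining elements are ``parallel,'' so the invariant is never broken by an infeasibility on one side only. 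Here I would use that for any bin $A$, the elements of $A$ arrive contiguously, so the relevant residual capacity of $A$ is pinned down at the moment $A$ becomes active and the two runs enter every bin with residual capacities differing by at most one.

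I expect the main obstacle to be step (iv) together with correctly handling the case where run~$0$ accepts some $\widetilde u_t$ that run~$1$ cannot accept because a bin containing both $\widetilde u_1$ and $\widetilde u_t$ is already saturated in run~$1$. One must show that in that case run~$0$ still has exactly one ``extra'' selected element relative to run~$1$, not more, and that this extra element is ``charged'' against $\widetilde u_1$ for the rest of the game; the left-to-right order is what guarantees that no second such discrepancy can be created later in a \emph{different} bin, since any bin is entered only once. The bookkeeping of exactly which element plays the role of the charged-against token (initially $\widetilde u_1$, possibly later some $z$) through a sequence of bins of increasing depth is the delicate part; the monotonicity facts in steps (ii)--(iii) are essentially the standard exchange/threshold arguments for optimal online policies and should be routine once the invariant is stated correctly.
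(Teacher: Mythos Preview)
Your coupling idea and the goal of tracking the count difference are exactly right, but the specific invariant you propose --- that the symmetric difference between $S_0$ and $S_1$ stays of size at most two (i.e.\ $\{\widetilde u_1\}$ or $\{\widetilde u_1, z\}$) --- is not what actually holds, and your mechanism for maintaining it breaks. The claim in step~(ii), that ``adding $\widetilde u_1$ to the state can only raise the acceptance threshold,'' is correct \emph{only while the two states differ by exactly $\widetilde u_1$}. After the first divergence, say run~$0$ accepts some $\widetilde u_a$ that run~$1$ rejects, the two states become $S\cup\{\widetilde u_a\}$ (run~$0$) and $S\cup\{\widetilde u_1\}$ (run~$1$). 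Because $\widetilde u_a$ arrives after $\widetilde u_1$ in the left-to-right order, every still-active bin containing $\widetilde u_1$ also contains $\widetilde u_a$, but there may be active bins containing $\widetilde u_a$ and not $\widetilde u_1$. So at this point run~$0$ is the \emph{more} constrained run, and the threshold inequality flips: one can show (and the paper does) that $T_c(S\cup\{\widetilde u_1\})\le T_c(S\cup\{\widetilde u_a\})$ for later $c$. Hence run~$1$ may accept some $\widetilde u_c$ that run~$0$ rejects, producing a symmetric difference $\{\widetilde u_1,\widetilde u_a,\widetilde u_c\}$ that your three-case invariant does not cover, and your case analysis in~(iii), which lists only ``both accept, both reject, run~$0$ accepts while run~$1$ rejects,'' omits precisely this possibility.

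What the paper does instead is allow the symmetric difference to grow but prove it \emph{alternates}: writing $I=J\setminus J'$ and $I'=J'\setminus J$ for the two selection sets, it shows that when $I\cup I'$ is listed in arrival order, the elements alternate between $I'$ and $I$. The engine is a supermodularity-type lemma comparing $D_t(S\cup I)+D_t(S\cup I')$ to $D_t(S\cup I_{\mathrm{odd}})+D_t(S\cup I_{\mathrm{even}})$, proved by induction on~$t$ using the left-to-right structure; this is the general form of the threshold comparison you need at every step, not just at the start. Your monotonicity in~(ii) is the size-two special case of that lemma and does not suffice once the runs have diverged. If you want to rescue your argument, the fix is not to bound the size of the symmetric difference but to prove this alternation directly --- which is essentially the paper's route.
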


To show Lemma~\ref{lem:firstuseless}, we first need the following lemma.
\begin{lemma}
    Let $\widetilde \mat = (\widetilde U, \widetilde \lam,  \widetilde c)$ be a left-to-right \lmbs instance with ordering of elements $\widetilde u_1$,~\ldots,~$\widetilde u_n$. Let $S$, $I$, $I' \subseteq \{\widetilde u_1, \ldots, \widetilde u_{t-1}\}$ be pairwise disjoint sets. Let $I \cup I'$ be the set $\{i_1, i_2, \ldots, i_k\}$, where  $i_1$, $i_2$, \ldots, $i_k$ is the arrival order of elements from $I \cup I'$. Let us define two subsets of $I \cup I'$ using the parity of their indices
    \[I_{odd} := \{i_1, i_3, \ldots\}\qquad \text{and}\qquad I_{even} := \{i_2, i_4, \ldots\}\,.\]
    If $S \cup I, S \cup I', S \cup I_{odd},$ and $S \cup I_{even}$ are feasible with respect to $\widetilde \mat$ then we have
    \begin{equation}\label{eq:odd-even}
        D_t(S \cup I) + D_t(S \cup I') \leq D_t(S \cup I_{odd}) + D_t(S \cup I_{even})\,,
    \end{equation}
    for all $t$.
\end{lemma}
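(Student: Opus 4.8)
The plan is to prove the inequality~\eqref{eq:odd-even} by induction on $t$, running from $t = n+1$ down to the desired value, using the dynamic programming recursion for $D_t$ as the engine. The base case $t = n+1$ is immediate since $D_{n+1} \equiv 0$. For the inductive step, fix $t \le n$ and the four sets $S \cup I$, $S \cup I'$, $S \cup I_{odd}$, $S \cup I_{even}$, all assumed feasible; we want to compare the sums of $D_t$ values in terms of $D_{t+1}$ values. The first subtlety is bookkeeping: the element $\widetilde u_t$ may or may not belong to $I \cup I'$. If $\widetilde u_t \notin I \cup I'$, then passing from timestamp $t$ to $t+1$ leaves $I$, $I'$, $I_{odd}$, $I_{even}$ unchanged, and we either discard $\widetilde u_t$ (moving to $D_{t+1}$ with the same sets) or potentially add it to $S$; in the latter case we can appeal to the inductive hypothesis with $S$ replaced by $S \cup \{\widetilde u_t\}$ — note the decomposition of $I\cup I'$ into odd/even parts is insensitive to $S$, so the hypothesis still applies verbatim as long as all four augmented sets stay feasible. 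If instead $\widetilde u_t \in I \cup I'$, then since $i_1, \ldots, i_k$ is the arrival order and $\widetilde u_t$ is the earliest surviving element of $I \cup I'$ at timestamp $t$, we have $\widetilde u_t = i_1 \in I_{odd}$, and on the left-hand side $\widetilde u_t$ lies in exactly one of $I$, $I'$; the parity roles of all remaining elements $i_2, i_3, \ldots$ flip accordingly, which is exactly what makes the recursion line up.

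The heart of the argument is the following pointwise ``rearrangement'' claim for a single realization of $\widetilde v_t$: whenever we have four sets $A, A', A_{odd}, A_{even}$ that arise from $S \cup I, S \cup I', S \cup I_{odd}, S \cup I_{even}$ after deciding $\widetilde u_t$, the quantity
\[
\max(D_{t+1}(A), D_{t+1}(A \cup \{\widetilde u_t\}) + \widetilde v_t) + \max(D_{t+1}(A'), D_{t+1}(A' \cup \{\widetilde u_t\}) + \widetilde v_t)
\]
is bounded by the corresponding sum with $A, A'$ replaced by $A_{odd}, A_{even}$. Expanding each max over its two cases gives a finite case analysis; in each case one lands on an instance of the inductive hypothesis at timestamp $t+1$, possibly with $\widetilde u_t$ absorbed into the base set $S$. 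Taking expectations over $\widetilde v_t$ then yields the $D_t$-level inequality. When $\widetilde u_t$ is infeasible to add to one or more of the sets, the recursion replaces the relevant $\max$ by a bare $D_{t+1}$ term, and one must check feasibility is preserved for whichever sets get $\widetilde u_t$ added — here the left-to-right hypothesis is what guarantees that feasibility of the odd/even split is compatible with feasibility of $I$ and $I'$, since each bin $\widetilde A$ contains a contiguous block of the arrival order and hence a ``balanced'' prefix behaviour.

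The main obstacle I anticipate is precisely the feasibility bookkeeping in the inductive step: to invoke the hypothesis at $t+1$ I need all four sets (in their updated form, some with $\widetilde u_t$ added) to be simultaneously feasible with respect to $\widetilde \mat$, and this is not automatic from the feasibility of the four original sets. The key insight that should rescue this is that in a left-to-right ordering, for any bin $\widetilde A$ the elements of $\widetilde A$ arrive consecutively, so among $i_1, \ldots, i_k \cap \widetilde A$ the odd/even split places at most $\lceil |\,(I\cup I')\cap \widetilde A\,|/2\rceil$ elements in $I_{odd}\cap\widetilde A$ and the rest in $I_{even}\cap\widetilde A$; combined with $|(S\cup I)\cap\widetilde A| \le \widetilde c(\widetilde A)$ and $|(S\cup I')\cap\widetilde A|\le \widetilde c(\widetilde A)$ one gets $|(S\cup I_{odd})\cap\widetilde A|, |(S\cup I_{even})\cap\widetilde A| \le \widetilde c(\widetilde A)$ — so the hypothesis that all four are feasible is in fact self-consistent and is preserved under the recursion. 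I would isolate this counting fact as a short preliminary observation, then the rest of the induction is a routine if slightly tedious expansion of the DP recursion.
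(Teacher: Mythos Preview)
Your approach is essentially the paper's: backward induction on $t$, fix a realization of $\widetilde v_t$, expand the $\max$ in the DP recursion, and in each branch invoke the inductive hypothesis at $t+1$; the feasibility of the odd/even sets after adjoining $\widetilde u_t$ is handled by exactly the counting observation you state in your last paragraph (the paper phrases it as $\max\{|B\cap I_{odd}|,|B\cap I_{even}|\}\le\max\{|B\cap I|,|B\cap I'|\}$ for every bin $B$, which follows from contiguity of $B$ in the arrival order).

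One bookkeeping slip to fix: the case ``$\widetilde u_t\in I\cup I'$'' is vacuous, since by hypothesis $I,I'\subseteq\{\widetilde u_1,\ldots,\widetilde u_{t-1}\}$. The parity-shift reasoning you placed there actually belongs in the subcase where \emph{exactly one} of $S\cup I$, $S\cup I'$ accepts $\widetilde u_t$; in that subcase $\widetilde u_t$ is adjoined to (say) $I$ rather than to the common part $S$, so for the inductive call at $t+1$ the combined list becomes $i_1,\ldots,i_k,\widetilde u_t$ and the new odd/even split is $I_{odd}\cup\{\widetilde u_t\}$, $I_{even}$ (or vice versa, depending on the parity of $k$). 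The paper handles this as its third case; once you relocate your parity argument there, the proof goes through exactly as you outline.
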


\begin{proof}
    We show~\eqref{eq:odd-even} by induction on $t$.
    When $t = n+1$, the statement is true as $D_{n+1}(S) = 0$ for all $S \subseteq U$.
    Suppose that~\eqref{eq:odd-even}  holds for $t+1$.

    It is crucial for us to understand the thresholds that the value $v_t$ of $u_t$ should ``pass'' in order to be selected given that the current state is $S'$. These thresholds have the following structure.
    \[
        T_t(S'):=\begin{cases}
            D_{t+1}(S') - D_{t+1}(S' \cup \{u_t\}) &\text{if } S' \cup \{u_t\} \text{ is feasible for }\widetilde M\\
            +\infty &\text{otherwise}\,.
        \end{cases}
    \]
    Let us fix the value of $\widetilde v_t$ and define the following values for $S'\subseteq \{\widetilde u_1,\ldots, \widetilde u_{t-1}\}$.
\[
    F_t(S') := 
    \begin{cases}
        D_{t+1}(S') \quad &\text{if $S' \cup \{\widetilde u_t\}$ is infeasible for }\widetilde \mat \\
        \max(D_{t+1}(S'), D_{t+1}(S' \cup \{\widetilde u_t\}) + \widetilde v_t) \quad &\text{otherwise}
    \end{cases}
\]
Instead of showing~\eqref{eq:odd-even}, we show that the following inequality holds 
   \begin{equation}\label{eq:odd-even-conditioned}
        F_t(S \cup I) + F_t(S \cup I') \leq F_t(S \cup I_{odd}) + F_t(S \cup I_{even})\,.
    \end{equation}
By linearity and by the definition of $D_t(\cdot)$, the inequality~\eqref{eq:odd-even-conditioned} implies the inequality~\eqref{eq:odd-even}.   We consider three cases based on the behaviour of  $\widetilde v_t$ with respect to the thresholds $T_t(S \cup I)$ and $T_t(S \cup I')$.

    \textbf{Case $\widetilde v_t < T_t(S \cup I)$ and $\widetilde v_t < T_t(S \cup I')$.}
    Then $\widetilde u_t$ is selected at neither of these states, so we have that
    \[
        F_t(S \cup I) + F_t(S \cup I') = D_{t+1}(S \cup I) + D_{t+1}(S \cup I')\,,
    \]
    and by the definition of $D_t(\cdot)$ and $F_t(\cdot)$ we have
    \[
        F_t(S \cup I_{odd}) + F_t(S \cup I_{even}) \geq D_{t+1}(S \cup I_{odd}) + D_{t+1}(S \cup I_{even})\,,
    \]
    proving that the inductive step is valid.

    \textbf{Case $\widetilde v_t \geq T_t(S \cup I)$ and $\widetilde v_t \geq T_t(S \cup I')$.}
    Then $\widetilde u_t$ is accepted at both of these states, so we have that
    \[
        F_t(S \cup I) + F_t(S \cup I') = D_{t+1}(S \cup I\cup \{\widetilde u_t\}) + D_{t+1}(S \cup I'\cup \{\widetilde u_t\})+2 \widetilde v_t\,,
    \]
    Note, that for every bin $B\in\widetilde \lam$ due to the left-to-right arrival order and the definition of $I_{odd}$ and $I_{even}$ we have that the elements from $B$ in $I_{odd} \cup I_{even}$ are evenly distributed:
    \[
        |B \cap I_{even}| - 1 \leq |B \cap I_{odd}| \leq |B \cap I_{even}| + 1.
    \]
    This implies that
    \[
        \max\{|B\cap I_{odd}|, |B\cap I_{even}|\}\leq \max\{|B\cap I|, |B\cap I'|\}\,.
    \]
    Thus, since both sets $S \cup I\cup \{\widetilde u_t\}$ and $S \cup I'\cup \{\widetilde u_t\}$ are feasible with respect to $\widetilde M$, the sets $S \cup I_{odd}\cup \{\widetilde u_t\}$ and $S \cup I_{even}\cup \{\widetilde u_t\}$ are feasible with respect to $\widetilde M$.
    So by the inductive hypothesis and definition of $D_t(\cdot)$ and $F_t(\cdot)$, we have
    \begin{align*}
        F_t(S \cup I_{odd}) + F_t(S \cup I_{even}) &\geq D_{t+1}(S \cup \{\widetilde u_t\} \cup I_{odd}) + D_{t+1}(S \cup \{\widetilde u_t\} \cup I_{even}) + 2\widetilde v_t\\
        &\geq D_{t+1}(S \cup \{\widetilde u_t\} \cup I) + D_{t+1}(S \cup \{\widetilde u_t\} \cup I') + 2 \widetilde v_t\\
        &= F_t(S \cup I) + F_t(S \cup I')\,.
    \end{align*}

    \textbf{Case  when exactly one of $\widetilde v_t \geq T_t(S \cup I)$ or $\widetilde v_t \geq T_t(S \cup I')$ holds.}
    Suppose without loss of generality that $\widetilde v_t \geq T_t(S \cup I)$ holds. We have two cases depending on the parity of $k$ in the statement of the lemma. Let us suppose further that $k$ is even; the case when $k$ is odd is completely analogous. 
    Analogously to the above cases, both $S\cup I_{odd}\cup \{\widetilde u_t\}$ and $S\cup I_{even}$ are feasible with respect to $\widetilde \mat$ and we obtain 
    \begin{align*}
        F_t(S \cup I_{odd}) + F_t(S \cup I_{even}) \geq F_t(S \cup I) + F_t(S \cup I')\,.
    \end{align*}
\end{proof}

\begin{proof}[Proof of Lemma~\ref{lem:firstuseless}]
Let $J$ and $J'$ be the elements selected by the optimal online policy starting at $t=2$ with $S=\varnothing$ and $S=\{\widetilde u_1\}$, respectively.
Let $S:=J\cap J'$, $I:=J\setminus J'$ and $I':=J'\setminus J$ and let $I_{odd}$, $I_{even}$ be defined as in Lemma~\ref{lem:firstuseless}.

To prove the statement of the lemma we show that the following properties hold
\begin{align} \tag{ALT}\label{eq:alternate}
    I'_t:=I'\cap \{\widetilde u_1,\ldots, \widetilde u_t\}=I_{odd}\cap \{\widetilde u_1,\ldots, \widetilde u_t\}\\ \nonumber
    I_t:=I\cap \{\widetilde u_1,\ldots, \widetilde u_t\}=I_{even}\cap \{\widetilde u_1,\ldots, \widetilde u_t\}
\end{align}
for all $t=1,\ldots,n$. The property~\eqref{eq:alternate} shows that the behaviour of the optimal online policies ``alternates'' starting from $S=\varnothing$ and $S=\{\widetilde u_1\}$ after the first timestamp. Thus,~\eqref{eq:alternate} immediately implies the statement of the lemma. 

Let us prove the statement by induction on $t$. For $t=1$ the statement is trivial. Let us assume that~\eqref{eq:alternate} holds for the timestamp $t-1$ and let us prove that it holds also for $t$. We have two possible cases:  $|I'_{t-1}|=|I_{t-1}|+1$ and $|I'_{t-1}|=|I_{t-1}|$. Let us consider only the case $|I'_{t-1}|=|I_{t-1}|+1$ as the other case is completely analogous.   

Let $S_{t-1}$ be defined as $S\cap  \{\widetilde u_1,\ldots, \widetilde u_{t-1}\}$. Consider the sets $S_{t-1}\cup I_{t-1} \cup \{\widetilde u_t\}$ and $S_{t-1}\cup I'_{t-1} \cup \{\widetilde u_t\}$. Note that with respect to $\widetilde M$  they are either both infeasible, only the latter is infeasible, or they are both feasible. For the inductive step, the only non trivial case is when they are both feasible with respect to $\widetilde M$.  The threshold for selecting $\widetilde u_{t}$ in the case that the current state corresponds to $S_{t-1}\cup I_{t-1}$ is
    \[
        D_{t+1}(S_{t-1}\cup I_{t-1}) - D_{t+1}(S_{t-1}\cup I_{t-1} \cup \{\widetilde u_t\})
    \]
and in the case that the current state corresponds to $S_{t-1}\cup I'_{t-1}$, the threshold is 
    \[
        D_{t+1}(S_{t-1}\cup I'_{t-1}) - D_{t+1}(S_{t-1}\cup I'_{t-1} \cup \{\widetilde u_t\})\,.
    \]
By Lemma~\ref{lem:firstuseless} and by~\eqref{eq:alternate} for the timestamp $t-1$, we have
    \[
        D_{t+1}(S_{t-1}\cup I_{t-1}\cup \{\widetilde u_t\}) + D_{t+1}(S_{t-1}\cup I'_{t-1}) \geq D_{t+1}(S_{t-1}\cup I'_{t-1}\cup \{\widetilde u_t\}) + D_{t+1}(S_{t-1}\cup I_{t-1})\,.
    \]
 Thus the threshold for selecting $\widetilde u_{t}$, when the current state corresponds to $S_{t-1}\cup I_{t-1}$, is not larger than the same threshold when the current state corresponds to $S_{t-1}\cup I'_{t-1}$. Thus, $\widetilde u_t$ has passed both thresholds or only the threshold corresponding to $S_{t-1}\cup I_{t-1}$, showing~\eqref{eq:alternate} for the timestamp $t$. In both cases,~\eqref{eq:alternate} holds for the timestamp $t$. 
\end{proof}

\subsubsection{Results to Prove Concentration} \label{sec:concentration}

To show Lemma~\ref{lem:failure-prob}, we first need to show that the number of elements selected from each maximal small bin is very likely to be close to the expected number of elements it is supposed to select. The next lemma is the key result for proving Lemma~\ref{lem:failure-prob}, since it allows us to do an estimation of Chernoff's type in Lemma~\ref{lem:failure-prob}. The proof of it heavily relies on  Lemma~\ref{lem:firstuseless}. Lemma~\ref{lem:concentration} allows us to obtain several estimates for the number of elements suggested to be selected from $B$ in Step~\ref{alg:PTAS-step-5} as if the suggestions for elements were negatively correlated.

\begin{proof}[Proof of Lemma~\ref{lem:concentration}] In the proof, we use only the fact that the online policies computed in Step~\ref{alg:PTAS-step-5} are optimal online policies for $B$. Thus, these online policies have to satisfy Lemma~\ref{lem:firstuseless}.
We show the claim by induction on the total number  $k$ of elements in the bin $B$.
    The statement trivially holds when $k = 0$.
    Suppose that the statement is true for all instances with $k - 1$ elements. Abusing the notation, let $u_1$ be the first element in $B$ to arrive.
    Define $\mu := \expect[\sum_{u_t\in B} X_t]$, $\mu_1 :=~\expect[\sum_{u_t\in B,\, t\neq 1} X_t\,|\, X_1=1]$, $\mu_0 := \expect[\sum_{u_t\in B,\, t\neq 1} X_t\,|\, X_1=0]$ and $p_1=\prob[X_1=1]$, $p_0=\prob[X_1=0]$.
    
    By the inductive hypothesis, we have
    \begin{align*}
        \expect[e^{\alpha X_B}] = p_1 \cdot e^\alpha \cdot \expect[e^{\sum_{i=2}^n \alpha X_i}\, |\, X_1 = 1] + p_0 \cdot \expect[e^{\sum_{i=2}^n \alpha X_i}\, |\, X_1 = 0] \\\leq p_1 \cdot e^{\alpha} \cdot e^{(e^{\alpha}-1) \cdot \mu_1} + p_0 \cdot e^{(e^\alpha - 1) \cdot \mu_0}\,.
   \end{align*}
Consider the following program where $\mu$ is considered a constant, while $p_0$, $p_1$, $\mu_0$, and $\mu_1$ are considered variables.
    \begin{align*}
    &\text{maximize} & \quad & p_1 \cdot e^{\alpha} \cdot e^{(e^{\alpha}-1) \cdot \mu_1} + p_0 \cdot e^{(e^\alpha - 1) \cdot \mu_0} \\
    &\text{subject to} && p_0 + p_1 = 1,\qquad 0 \leq p_0 \leq 1 \\
    &&& p_1 \cdot \mu_1 + p_1 + p_0 \cdot \mu_0 = \mu, \qquad \mu_1 \leq \mu_0 \leq \mu_1 + 1
    \end{align*}
    Note that the last inequality holds due to Lemma~\ref{lem:firstuseless}.
    To prove the current lemma, we show that the objective value of this program is bounded from above by $e^{(e^\alpha - 1) \cdot \mu}$. 
    Let $p_0^*, p_1^*, \mu_0^*, \mu_1^*$ be a feasible solution of this program. 

    \textbf{Case  $p_1^* = 1$.}
    Then we must have that $\mu = \mu_1^* + 1$, and so the objective value of such a solution becomes $
        e^\alpha \cdot e^{(e^\alpha - 1) \cdot (\mu - 1)} \leq e^{(e^\alpha - 1) \cdot \mu}$,
    where the inequality follows from $\alpha \leq e^{\alpha - 1}$.

    \textbf{Case  $p_1^* = 0$.}
    Then the objective value of such a solution is exactly~$e^{(e^\alpha - 1) \cdot \mu}$.

    \textbf{Case $p_1^* \in (0, 1)$.} We obtain the following equation $
        \mu_1^* = (\mu - \mu_0^* p_0^* - p_1^*)/p_1^*$.
    From now on we can consider $p_0^*$, $p_1^*$ as constants, and so the program now corresponds to an optimization over a single variable $\mu_0^*$, and further, the objective function is convex in this variable.
    Therefore, it suffices to check that $\mu_0^*$ on the boundary of the arising feasibility region corresponds to an objective value that is at most $e^{(e^\alpha - 1) \cdot \mu}$.

    If  $\mu_1^* + 1 = \mu_0^*$, then $\mu_0^* = \mu$, and the objective value is
    \[
        p_1^* \cdot e^\alpha \cdot e^{(e^\alpha - 1) \cdot (\mu - 1)} + p_0^* \cdot e^{(e^\alpha - 1) \cdot \mu} \leq p_1^* \cdot e^{(e^\alpha - 1) \cdot \mu} + p_0^* \cdot e^{(e^\alpha - 1) \cdot \mu} = e^{(e^\alpha - 1) \cdot \mu},
    \] 
    where as before, we use the inequality $\alpha \leq e^{\alpha - 1}$.

    If $\mu_1^* = \mu_0^*$, then $\mu_1^* + p_1^*  = \mu$, and the objective value is
    \[
        e^{(e^\alpha  - 1) \cdot (\mu - p_1^*)} \cdot (p_1^* \cdot e^\alpha + p_0^*) \leq e^{(e^\alpha - 1) \cdot \mu},
    \]
    using the fact that 
   $
        p_1^* \cdot e^\alpha + p_0^* = p_1^* \cdot e^\alpha + (1 - p_1^*) \leq e^{(e^\alpha - 1) \cdot p_1}
    $.
    
\end{proof}

\subsubsection{Putting Everything Together}

\begin{theorem} \label{thm:left-to-right-lmbs}
    Algorithm~\ref{alg:PTAS} is a PTAS for left-to-right \lmbs instances when $K = \epsilon^{-4}$.
\end{theorem}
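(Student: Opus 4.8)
The plan is to show that Algorithm~\ref{alg:PTAS} run with $K=\epsilon^{-4}$ takes time polynomial in $n$ and returns an online policy of expected gain at least $(1-6\epsilon)\cdot OPT_{\mat}$; since a PTAS only needs a $(1-\delta)$-approximation for each $\delta>0$, running the algorithm with error parameter $\epsilon/6$ (hence $K=(6/\epsilon)^4$) then suffices. I would organize the argument into a gain accounting that reduces the claim to comparing $LP_{\mat''}$ with a ``lost value'' term, the estimate $LP_{\mat''}\ge(1-\epsilon)^2 OPT_{\mat}$ which is exactly Lemma~\ref{lem:preprocessed-matroid-gain}, a bound $\prob[G_t]\le 3/(K\epsilon^3)$ on the chance that a given element triggers infeasibility, and finally a bound on the expected lost \emph{value}.

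For the accounting, fix $u_t\in U$ with maximal small bin $B\in\smallbin$. Let $A_t$ be the event that the Step~\ref{alg:PTAS-step-5} policy for $B$ suggests selecting $u_t$, and write $F_t=A_t\cap G_t$ where $G_t$ is the event that, when $u_t$ arrives, some big bin $B'$ strictly containing $B$ already holds $c'(B')$ of the elements selected so far. Since the Step~\ref{alg:PTAS-step-5} policy for $B$ respects every bin inside $B$, the selection of $u_t$ is $\mat'$-infeasible exactly on $G_t$, so the algorithm selects $u_t$ precisely on $A_t\setminus F_t$. By Proposition~\ref{prop:solvelp}, $\expect[v_t\cdot\mathbf{1}(A_t)]=\expect_{v_t}[v_t\cdot\varx_t^\star(v_t)]$, hence $\sum_t\expect[v_t\cdot\mathbf{1}(A_t)]=LP_{\mat''}$, and the algorithm's expected gain equals $LP_{\mat''}-\sum_t\expect[v_t\cdot\mathbf{1}(F_t)]$. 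For the probability estimate, note that the selected set is always contained in the set of suggested elements, so $G_t$ forces $X_{B'}\ge c'(B')$ for some big $B'\supsetneq B$; the Markov/Chernoff computation in the proof of Lemma~\ref{lem:failure-prob} applies verbatim to $\{X_{B'}\ge c'(B')\}$ and gives $\prob[X_{B'}\ge c'(B')]\le e^{-\frac{\epsilon^2}{2+\epsilon}\cdot c'(B')}$; summing over the geometrically growing capacities (each at least $K$) of the big bins above $B$ yields $\prob[G_t]\le 3/(K\epsilon^3)$, which equals $3\epsilon$ for $K=\epsilon^{-4}$.

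The remaining, and most delicate, point is the per-element bound $\expect[v_t\cdot\mathbf{1}(F_t)]\le\prob[G_t]\cdot\expect[v_t\cdot\mathbf{1}(A_t)]$, equivalently $\mathrm{Cov}\big(v_t\cdot\mathbf{1}(A_t),\,\mathbf{1}(G_t)\big)\le 0$. I would prove it using two facts. First, $G_t$ is determined by the values and internal coins revealed strictly before $u_t$, and in particular is independent of $v_t$; so, conditioning on $v_t=v$, it suffices to show for each fixed $v$ that the event ``the Step~\ref{alg:PTAS-step-5} policy suggests $u_t$ at value $v$'' is nonpositively correlated with $G_t$. Second, this event depends on the history before $u_t$ only through the selections already made inside $B$, and $G_t$ is monotone in those selections (more suggestions in $B$ can only push the enclosing big bins closer to capacity), while suggesting $u_t$ is negatively associated with them; making this precise is a negative-dependence statement of the same flavor as — and I expect provable by the same ``bounded-impact/alternating'' machinery as — Lemma~\ref{lem:firstuseless} (using that the Step~\ref{alg:PTAS-step-5} policies are optimal on $B$ by Lemma~\ref{lem:optimal-small-bin}). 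Granting this, integrating over $v_t$ gives the per-element bound, summing over $t$ bounds the lost value by $\tfrac{3}{K\epsilon^3}\cdot LP_{\mat''}=3\epsilon\cdot LP_{\mat''}$, and hence the algorithm's expected gain is at least $(1-3\epsilon)\cdot LP_{\mat''}\ge(1-3\epsilon)(1-\epsilon)^2\cdot OPT_{\mat}\ge(1-6\epsilon)\cdot OPT_{\mat}$.

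Finally I would bound the running time. Steps~\ref{alg:PTAS-step-1}--\ref{alg:PTAS-step-3}, \ref{alg:PTAS-step-5}, and~\ref{alg:PTAS-step-6} are routine (Step~\ref{alg:PTAS-step-1} by Lemma~\ref{lem:logdepth}, Step~\ref{alg:PTAS-step-5} by Proposition~\ref{prop:solvelp}); in Step~\ref{alg:PTAS-step-4} there are at most $n$ maximal small bins, each of capacity strictly below $K=\epsilon^{-4}$, so $|\sets^B|\le n^{\mathcal{O}(K)}$, and since each distribution is given explicitly with polynomially many atoms the linear program~\eqref{lp2} has $n^{\mathcal{O}(\epsilon^{-4})}$ variables and constraints and is solvable in polynomial time for fixed $\epsilon$. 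The main obstacle I anticipate is precisely the per-element value bound above: converting the probabilistic failure bound of Lemma~\ref{lem:failure-prob} into a bound on expected value requires ruling out that the elements failing the $\mat'$-feasibility check are systematically those realizing large values, and this needs the negative-dependence argument rather than Lemma~\ref{lem:failure-prob} on its own.
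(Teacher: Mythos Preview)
Your overall architecture matches the paper's: you invoke Lemma~\ref{lem:preprocessed-matroid-gain} for the preprocessing loss, Lemma~\ref{lem:failure-prob} for the $3/(K\epsilon^3)=3\epsilon$ per-element failure probability, combine them into a $(1-3\epsilon)(1-\epsilon)^2$ factor, and give the same $n^{\mathcal O(\epsilon^{-4})}$ running-time bound. The paper's own proof of the theorem is very short: it simply asserts that ``the expected gain obtained by Algorithm~\ref{alg:PTAS} is at least $(1-3\epsilon)LP_{\mat''}$'' directly from Lemma~\ref{lem:failure-prob}, without the additional decoupling step you single out as ``the most delicate point''.

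Where your proposal genuinely fails is in your proposed justification of that decoupling. You aim to establish $\mathrm{Cov}\big(v_t\mathbf{1}(A_t),\mathbf{1}(G_t)\big)\le 0$ by arguing that, for each fixed value $v$, the suggestion event $A_t$ is negatively associated with the prior selections inside the small bin $B$, and that $G_t$ is monotone increasing in those selections. But the paper explicitly constructs a left-to-right instance (Figure~\ref{fig:positive-correlation-laminar} and the surrounding discussion in Section~1.2) in which, for the optimal policy, the selection of an item is \emph{positively} correlated with the number of items selected before it: $\covariance(X_1+X_2+X_3,X_4)>0$. That is precisely the negative-association statement you need, and it is false. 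Lemma~\ref{lem:firstuseless} gives only the bounded-impact inequality $\mu_1\le\mu_0\le\mu_1+1$, comparing expected future selection counts from two adjacent starting states; this is exactly what drives the moment-generating-function bound of Lemma~\ref{lem:concentration}, but it does \emph{not} yield any per-item negative correlation or negative association. So the ``same bounded-impact/alternating machinery'' cannot deliver the covariance inequality you want, and with it your per-element bound $\expect[v_t\mathbf{1}(F_t)]\le\prob[G_t]\cdot\expect[v_t\mathbf{1}(A_t)]$ is unsupported. You have correctly located the one nontrivial passage in the argument, but the route you propose through it is blocked by the paper's own counterexamples.
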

\begin{proof}

First, note that the set of elements selected by Algorithm~\ref{alg:PTAS} is independent with respect to $\mat'$, and so by Lemma~\ref{lem:logdepth} is independent with respect to $\mat$.

The online policies that are extracted in Step~\ref{alg:PTAS-step-5} of Algorithm~\ref{alg:PTAS} correspond to running online policies independently on maximal small bins. When run on maximal small bins independently, these online policies guarantee the gain  $LP_{\mat''}$, which is at least  $(1-\epsilon)^2 \cdot OPT_{\mat}$ by Lemma~\ref{lem:preprocessed-matroid-gain}. 

However, in Step~\ref{alg:PTAS-step-6} the online policies from Step~\ref{alg:PTAS-step-5} can fail to select some of the elements only beacuse of the feasibility constraints. By Lemma~\ref{lem:failure-prob} and since $K=\epsilon^{-4}$, for each $u_t\in U$ the probability that upon the arrival of $u_t$ the online policies suggest to select $u_t$ but it is not not feasible is at most~$3\epsilon$. Thus, the expected gain obtained by Algorithm~\ref{alg:PTAS} is at least $(1-3\epsilon)LP_{\mat''}$, and so at least $(1-\epsilon)^2(1-3\epsilon) \cdot OPT_{\mat}$.

\end{proof}

\subsection{Extending to Arrivals ``Close'' to Left-To-Right}

In this section, we extend the results for left-to-right \lmbs instances to \lmbs instances that are ``close'' to being left-to-right.
    Let $\mat = (U, \lam, c)$ be an \lmbs instance and $B \in \lam$ a bin.
    We say that $B$ is a \emph{left-to-right bin} if the instance $\mat$ restricted to $B$ is a left-to-right \lmbs instance.

We are able to extend the results because we do not need every bin in laminar matroid to be left-to-right. For our analysis to go through,
it is enough for every maximal small bin to be left-to-right.
Given the above observation, we obtain the following theorem by using a modified definition of small and  big bins, which is inspired by~\cite{AnariLMBS}.

\begin{theorem} \label{thm:constantcutoff}
    Let  $L \in \mathbb{N}$ be a constant. There is a PTAS for  \lmbs instances where every element lies in at most $L$ bins that are not left-to-right.
\end{theorem}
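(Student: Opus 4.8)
The plan is to adapt Algorithm~\ref{alg:PTAS} and its analysis by redefining which bins count as ``small,'' so that the maximal small bins are automatically left-to-right even when the global arrival order is not. Concretely, I would first apply Lemma~\ref{lem:logdepth} with $\alpha = 1-\epsilon$ to obtain $\mat'$ satisfying the separation property~\eqref{eq:separation_property}; this step does not rely on left-to-right structure, since Lemma~\ref{lem:logdepth} is stated for arbitrary \lmbs instances. Next, instead of calling a bin small whenever $c(A) < K$, I would call a bin $A$ small if $c(A) < K$ \emph{and} $A$ is a left-to-right bin, and redefine $\largebin$ and $\smallbin$ accordingly: $\smallbin$ is the set of inclusion-wise maximal bins that are small in this new sense. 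The hypothesis that every element lies in at most $L$ bins that are not left-to-right is what makes this work: along the chain of bins containing a fixed element $u_t$, at most $L$ are non-left-to-right, so the number of ``big'' bins above any maximal small bin containing $u_t$ that need separate treatment is still controlled.

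The key steps, in order, would be: (1) verify that with the modified definition, $\smallbin$ still partitions $U$ (adding a singleton bin for any element not covered, which is trivially left-to-right and small); (2) check that each maximal small bin $B \in \smallbin$ is a left-to-right \lmbs instance, so that Proposition~\ref{prop:solvelp}, Lemma~\ref{lem:optimal-small-bin}, Lemma~\ref{lem:firstuseless}, and hence the concentration bound Lemma~\ref{lem:concentration} all apply verbatim to $B$; (3) re-examine the linear program~\eqref{lp2} — the ex-ante constraints now range over the new $\largebin$, and I must argue that the first-type constraints $\sum_{A \in \smallbin : A \subseteq B} \mathcal{N}_A \le c(B)$ still correctly relax feasibility, which holds because $\smallbin$ partitions $U$ and the bins of $\lam'$ form a laminar family; (4) redo the failure-probability computation of Lemma~\ref{lem:failure-prob}: for a fixed $u_t$, sum $\prob[X_B > c'(B)]$ over big bins $B \ni u_t$, using the separation property to get a geometric series. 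Here the bound changes only by a constant factor depending on $L$, because between two consecutive ``separated'' capacity levels there may now be up to $L$ intervening non-left-to-right bins, but their capacities are still sandwiched, so the geometric decay survives with a slightly worse base or an extra factor of $L$; (5) re-run the accounting of Theorem~\ref{thm:left-to-right-lmbs} with $K$ chosen as a suitable function of $\epsilon$ and $L$ (e.g. $K = L\epsilon^{-4}$), concluding a $(1-\epsilon)^2(1 - O(L\epsilon))$-approximation, which after rescaling $\epsilon$ is a PTAS since $L$ is constant.

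I would also need to confirm that Step~\ref{alg:PTAS-step-6} — running the small-bin policies independently and discarding infeasible selections with respect to $\mat'$ — still produces an independent set of $\mat'$; this is immediate since we only ever discard, never force, selections, so no constraint of $\lam'$ is violated regardless of its left-to-right status. The running time remains polynomial: $|\sets^B| \le n^{\mathcal{O}(c(B))} \le n^{\mathcal{O}(K)} = n^{\mathcal{O}(L\epsilon^{-4})}$, polynomial for constant $L$ and $\epsilon$, and the number of maximal small bins is at most $n$.

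The main obstacle I expect is step~(4): verifying that the geometric-series argument in Lemma~\ref{lem:failure-prob} still goes through when the chain of big bins above a maximal small bin is interleaved with up to $L$ non-left-to-right bins per element. The separation property~\eqref{eq:separation_property} only relates a bin to its parent \emph{in $\lam'$}, so I must be careful that applying Lemma~\ref{lem:logdepth} does not inadvertently merge or drop the non-left-to-right bins in a way that breaks either the laminar partition structure of $\smallbin$ or the capacity-separation telescoping. The fix, mirroring~\cite{AnariLMBS}, is to thread the constant $L$ through the capacity bookkeeping so that each element sits below a chain of big bins whose capacities still grow geometrically — possibly by grouping every block of $L$ consecutive bins — giving $\sum_{B \in \largebin,\, u_t \in B} \prob[X_B > c'(B)] \le O(L/(K\epsilon^3))$, which is $O(\epsilon)$ once $K = \Theta(L\epsilon^{-4})$.
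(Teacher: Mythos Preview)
Your redefinition of ``small'' has a real gap in step~(4). By declaring every non-left-to-right bin big regardless of capacity, you allow big bins $B$ with $c'(B)=O(1)$, and for such $B$ the Chernoff estimate $\prob[X_B > c'(B)] \le e^{-\epsilon^2 c'(B)/(2+\epsilon)}$ underlying Lemma~\ref{lem:failure-prob} is $\Theta(1)$, not $O(\epsilon)$. Note that ``non-left-to-right'' is upward-closed in the laminar family (any sub-bin witnessing non-contiguity inside $A$ witnesses it inside every $B\supseteq A$), so the at-most-$L$ non-left-to-right bins containing a given element are exactly the \emph{topmost} ones in its chain---but nothing forces those to have large capacity, and Lemma~\ref{lem:logdepth} does not help since the root's capacity can itself be $O(1)$. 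For instance, with $U=\{u_1,\dots,u_5\}$, one proper bin $A=\{u_1,u_3\}$, $c(U)=2$, $c(A)=1$, and arrival order $u_1,\dots,u_5$, Lemma~\ref{lem:logdepth} keeps both bins, $U$ is the unique non-left-to-right bin ($L=1$), your $\smallbin=\{A,\{u_2\},\{u_4\},\{u_5\}\}$, and running the four sub-policies independently gives $\prob[X_U \ge 3]$ bounded away from~$0$ regardless of $K$. Your factor-of-$L$ correction addresses only the geometric tail over large-capacity big bins; it does nothing for these up-to-$L$ additional terms, each of which can contribute $\Theta(1)$.

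The paper takes a different route, using \emph{depth-dependent} thresholds in the style of~\cite{AnariLMBS}: a bin at depth $d\le L$ is declared big only when its capacity is at least $\epsilon^{-4}/\delta^{L-d}$ (with $\delta=\epsilon^2/\log(1/\epsilon)$), while bins at depth $>L$ use the uniform threshold $\epsilon^{-4}$. A maximal small bin is then either at depth $>L$, hence automatically left-to-right by the upward-closedness above (so Lemma~\ref{lem:concentration} applies), or at depth $\le L$, in which case it may fail to be left-to-right but satisfies $c(A)\le \delta\cdot c(B)$ for every big $B\supseteq A$; this $\delta$-separation is exactly what lets the~\cite{AnariLMBS} analysis control $\prob[X_B>c'(B)]$ without any concentration on the individual $X_A$. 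In short, the paper absorbs the troublesome shallow non-left-to-right bins into the \emph{small} side (solving them exactly via the polytopes $\mathcal{P}^B$, with state space still $n^{O_{\epsilon,L}(1)}$ since their capacities are bounded by $\epsilon^{-4}/\delta^{L}$), rather than pushing them to the big side as you do.
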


\begin{proof}[Proof Sketch]
    In \cite{AnariLMBS},  a PTAS is given for constant-depth \lmbs instances. For their algorithm, they have a different definition of  ``small'' and ``big'' bins. In Algorithm~\ref{alg:PTAS} we use one threshold $K=\epsilon^{-4}$ for all bins: a bin $A$ is big for our Algorithm~\ref{alg:PTAS} if $c(A)\geq K$. In \cite{AnariLMBS}, there are several thresholds: a bin $A$ is ``big'' for them if $c(A)\geq 1 / \delta^{L - d}$, where $\delta = \epsilon^2 / \log(1/\epsilon)$, $d$ is the depth of $A$, and $L$ is the depth of the whole matroid.
    The benefit of the approach from \cite{AnariLMBS} is that for any ``big'' bin $B$ and maximal ``small'' bin $A \subseteq B$, we have a separation between $c(A)$ and $c(B)$, in particular we have $c(A) \leq \delta \cdot c(B)$.
    This large separation in capacities removes the need to analyze the concentration for the number of  elements selected from $A$. In particular, there is no need for $A$ to be a left-to-right bin.
    Constant depth is needed to bound the capacity of any maximal ``small'' bin by a constant $1/\delta^L$ in order to explicitly solve the corresponding dynamic program.

    To generalize this to our setting, we use the following thresholds to define ``big'' and ``small'' bins.
    If the depth $d$ of a bin $A$ satisfies $d \leq L$, then we set the threshold for $A$ to be $\epsilon^{-4}/\delta^{L - d}$. 
    Otherwise, we set the threshold for $A$ to be $\epsilon^{-4}$.
    If a maximal ``small'' bin is identified before depth $L$, then it will necessarily have the factor of $\delta$ separation with any ``big'' bins containing it.
    Otherwise, we know that the identified maximal ``small'' bin is a left-to-right bin.
\end{proof}

\begin{corollary}[Theorem 3.5 from \cite{AnariLMBS}]
    There is a PTAS for constant-depth \lmbs instances.
\end{corollary}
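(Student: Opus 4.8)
The plan is to obtain this statement as an immediate specialization of Theorem~\ref{thm:constantcutoff}. The only thing that needs checking is a purely combinatorial fact about laminar families: in a laminar matroid $\mat = (U, \lam, c)$ of depth at most $L$, every element of $U$ lies in at most $L$ bins of $\lam$. Indeed, since $\lam$ is laminar, for a fixed $u \in U$ the bins of $\lam$ containing $u$ form a chain $A_1 \subsetneq A_2 \subsetneq \cdots \subsetneq A_m$; then the innermost bin $A_1$ is contained in all of $A_1, \ldots, A_m$, so its depth is exactly $m$, which forces $m \le L$. In particular, for \emph{any} arrival order, every element lies in at most $L$ bins that are not left-to-right. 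Hence every constant-depth \lmbs instance belongs to the class of instances handled by Theorem~\ref{thm:constantcutoff} with the same constant $L$, and that theorem directly yields a PTAS.

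If one prefers an argument that does not route through Theorem~\ref{thm:constantcutoff}, one can instead follow the template sketched in its proof. Replace the single threshold $K = \epsilon^{-4}$ used in Algorithm~\ref{alg:PTAS} with the depth-dependent thresholds of \cite{AnariLMBS}: a bin $A$ of depth $d$ is declared big when $c(A) \geq 1/\delta^{L-d}$, where $\delta = \epsilon^2/\log(1/\epsilon)$. Because the depth is the constant $L$, each maximal small bin then has capacity bounded by the constant $1/\delta^{L}$, so the polytope $\mathcal{P}^B$ has polynomial size and Proposition~\ref{prop:solvelp} yields an exact optimal online policy for each maximal small bin. Moreover, any big bin $B$ and maximal small bin $A \subseteq B$ satisfy $c(A) \le \delta \cdot c(B)$; this $\delta$-separation lets one prove the analogue of Lemma~\ref{lem:failure-prob} by a Chernoff bound applied to the indicator that a maximal small bin selects at least one element, so no concentration statement of the flavour of Lemma~\ref{lem:concentration} --- and hence no left-to-right hypothesis --- is required. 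The losses from the capacity pre-processing of Lemma~\ref{lem:logdepth} and from the $(1-\epsilon)$-scaling of big-bin capacities are accounted for exactly as in Lemma~\ref{lem:preprocessed-matroid-gain} and Theorem~\ref{thm:left-to-right-lmbs}.

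There is essentially no new obstacle here: both routes reuse machinery already in place. The one point that genuinely requires care --- and which the proof sketch of Theorem~\ref{thm:constantcutoff} already addresses --- is to verify that the depth-dependent thresholds simultaneously (i) keep every maximal small bin of constant capacity, so its dynamic program can be solved explicitly, and (ii) preserve the multiplicative $\delta$-gap between a maximal small bin and every big bin containing it, so the feasibility analysis goes through without concentration. The constant-depth hypothesis is precisely what makes both demands compatible.
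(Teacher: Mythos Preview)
Your proposal is correct, and your first route is precisely how the paper obtains the corollary: it is stated immediately after Theorem~\ref{thm:constantcutoff} with no separate proof, the implicit argument being exactly your observation that in a depth-$L$ laminar matroid every element lies in at most $L$ bins (hence trivially in at most $L$ non-left-to-right bins), so Theorem~\ref{thm:constantcutoff} applies directly. Your second, alternative route via depth-dependent thresholds is also sound and simply recapitulates the proof sketch of Theorem~\ref{thm:constantcutoff} specialized to this case.
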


The production constrained Bayesian selection problem from \cite{AnariLMBS} consists of $\lmbs$ instances in which all bins except the largest are left-to-right, plus some additional structure. This gives us the following corollary with $L = 1$.

\begin{corollary}[Theorem 2.3 from \cite{AnariLMBS}]
    There is a PTAS for production constrained Bayesian selection.
\end{corollary}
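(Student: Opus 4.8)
The plan is to obtain this corollary as an immediate instantiation of Theorem~\ref{thm:constantcutoff}. Recall the structural description of production constrained Bayesian selection recalled just above the statement: it consists of \lmbs instances in which every bin of the laminar family, with the possible exception of the single largest bin, is a left-to-right bin (together with some additional structure on capacities and arrivals that we will not need). So the only work is to check that such instances fall under the hypothesis of Theorem~\ref{thm:constantcutoff} for an appropriate constant $L$.

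First I would observe that this description forces the bound with $L = 1$: since the only bin that is permitted to fail the left-to-right condition is the largest bin, any element $u_i \in U$ is contained in at most one bin that is not left-to-right, namely that largest bin if it contains $u_i$ and no bin otherwise. Hence every element lies in at most $L = 1$ non-left-to-right bins. The point worth a sentence of care is that ``the largest bin'' is unambiguous in the sense needed here: the non-left-to-right bins of the family form a chain with a unique inclusion-wise maximal element, so no element can lie in two distinct non-left-to-right bins — this is exactly what the definition of production constrained selection guarantees.

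Then I would simply apply Theorem~\ref{thm:constantcutoff} with $L = 1$ to this class of instances, which yields a PTAS and completes the proof. I do not expect any genuine obstacle: the ``additional structure'' that production constrained selection carries (and which \cite{AnariLMBS} uses in their own analysis) is not used anywhere in our argument, so in fact the corollary is a strict strengthening of their Theorem~2.3 — the PTAS applies to all \lmbs instances whose only non-left-to-right bin is the top one, regardless of further restrictions on capacities or arrival order.
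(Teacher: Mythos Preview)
Your proposal is correct and mirrors the paper's own argument exactly: the paper observes that in production constrained Bayesian selection all bins except the largest are left-to-right, and then invokes Theorem~\ref{thm:constantcutoff} with $L=1$. Your extra sentence about why no element can lie in two non-left-to-right bins is a harmless elaboration of the same point.
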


\section{Graphic Bayesian Selection}\label{sec:PSPACE}

\lmbs can be viewed as a special case of the more general Matroid Bayesian Online Selection problem, where instead of a laminar matroid, we are given any matroid $\mat$.
Gupta questioned in 2017 whether the best strategy for Matroid Bayesian Online Selection is computationally hard to find~\cite{Gupta}. 
We answer this question in the affirmative, even for the special case of graphic matroids.

\begin{theorem} \label{thm:gmbshard}
    There is an absolute constant $\alpha \in (0, 1)$ such that it is \ps-hard to approximate Graphic Matroid Bayesian Selection (\gmbs) to a factor of $\alpha$.
\end{theorem}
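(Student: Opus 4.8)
The plan is to reduce from Stochastic \textsc{MAX-2SAT} (\stwosat). Recall that a \stwosat instance $\phi$ has variables $x_{1},\dots,x_{n}$ in a fixed order, partitioned into alternating blocks of \emph{existential} variables --- fixed by a controller who wants to maximize the expected number of satisfied clauses --- and \emph{random} variables --- each set \texttt{True} independently with probability $1/2$ --- where the controller fixes an existential variable only after observing the realizations of all earlier random variables. Writing $\mathrm{val}(\phi)$ for the optimal expected number of satisfied clauses, it is \ps-hard to distinguish $\mathrm{val}(\phi)\ge c_{1}m$ from $\mathrm{val}(\phi)\le c_{2}m$ for suitable absolute constants $c_{1}>c_{2}$, where $m$ is the number of clauses; moreover, by a standard sparsification we may assume every variable occurs in at most a constant number of clauses, so that $m=\Theta(n)$.

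First I would build, in polynomial time, a \gmbs instance on a graph $G_{\phi}$ whose edges arrive in three phases. In Phase~1 there is a constant-size gadget for each variable $x_{i}$, processed in the order of $\phi$. For an existential $x_{i}$ the gadget presents the gambler two options of equal immediate value that commit the component structure of $G_{\phi}$ to one of two configurations, ``$x_{i}=\texttt{True}$'' or ``$x_{i}=\texttt{False}$'', and have no effect on the rest of the construction except through the clause gadgets; since the immediate payoff is the same either way, an optimal policy picks the configuration that is best for the future, exactly mirroring the \stwosat controller. For a random $x_{i}$ the gadget contains an edge whose value is large with probability $1/2$ and negligible otherwise, and whose realization forces one of the two configurations and is observed by the gambler before the later Phase~1 decisions, mirroring a coin flip. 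In Phases~2 and~3 there is a gadget for each clause $C_{j}=(\ell_{j,1}\vee\ell_{j,2})$ built from low-value edges, all scaled by a common factor $\rho$: the Phase~2 edges let a greedy gambler propagate each variable's configuration into all clause gadgets that mention it, and Phase~3 contains a single reward edge for $C_{j}$, routed through two ``literal segments'' wired in series, so that this reward edge closes a cycle --- and thus cannot be taken --- precisely when both literals of $C_{j}$ are falsified by the Phase~1 configurations. Hence a greedy gambler collects $\rho$ for $C_{j}$ if and only if $C_{j}$ is satisfied.

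Second I would prove the structural lemma that an optimal online policy behaves as intended: in Phase~1 it commits to exactly one configuration per variable, with the existential commitments realizing an optimal \stwosat controller strategy against the realized coins, and in Phases~2 and~3 it acts greedily. Granting this, a direct computation gives $OPT_{\mat}=\Phi(n)+\rho\cdot\mathrm{val}(\phi)\pm o(\rho m)$, where $\Phi(n)$ is a fixed term depending only on $n$ and on which variables are random; the error term absorbs the negligible branches of the random gadgets (and, for the atomic distributions used here, any ties in the optimal policy, which one may also remove by a generic perturbation of the values). Because each variable occurs $O(1)$ times, the ``large'' values in the random gadgets may be taken to be $\Theta(\rho)$, so $\Phi(n)=\Theta(\rho n)=\Theta(\rho m)$. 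Consequently a ``yes'' instance of \stwosat yields $OPT_{\mat}\ge\Phi(n)+\rho c_{1}m-o(\rho m)$ and a ``no'' instance yields $OPT_{\mat}\le\Phi(n)+\rho c_{2}m+o(\rho m)$, and these differ by a fixed multiplicative constant; so approximating $OPT_{\mat}$ within a suitable absolute constant $\alpha\in(0,1)$ would decide the \ps-hard promise problem, giving the theorem.

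The main obstacle is the structural lemma --- in particular, ruling out ``cheating'' policies such as the gambler declining an available Phase-1 or Phase-2/3 edge in order to keep $G_{\phi}$ in a component configuration that lets it harvest more later. I expect to handle this as in the analysis of stochastic online matching in~\cite{Papadimitriou2021}: fix a strict hierarchy of value scales (Phase~1 above Phase~2 above Phase~3, and within Phase~1 the random ``large'' value above any single variable's total downstream swing, which is $O(\rho)$ by bounded occurrence), and then show by an exchange argument that any policy can be transformed, without decreasing expected gain, into one obeying the template --- re-inserting a skipped forced edge, or replacing a non-greedy Phase-2/3 choice by the greedy one, never hurts because the gadgets are built so that the set of edges a policy can still add depends only on the current truth configuration and not on any other feature of the forest. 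Pushing this exchange argument through all three phases and bounding the accumulated error term are the remaining technical parts.
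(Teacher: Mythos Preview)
Your plan tracks the paper's proof closely: both reduce from bounded-occurrence \stwosat, organize the \gmbs instance into three arrival phases (variable assignment, clause accounting, cleanup), establish via exchange arguments that the optimal policy must follow the intended template, and finish with the additive-to-multiplicative conversion of \cite{Papadimitriou2021}. The one substantive difference is the clause encoding. You propose \emph{deterministic} per-clause reward edges routed through ``literal segments wired in series,'' together with Phase-2 propagation edges that copy each variable's configuration into every clause gadget mentioning it. The paper instead uses a very small graph---a star at a center $w$ with leaves $x_i,\neg x_i$, one edge $\neg\ell_1\,\neg\ell_2$ per clause, and one edge $x_i\,\neg x_i$ per variable---and makes each clause edge \emph{stochastic}, taking value $m^4/(2k)$ with probability $m^{-4}$. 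The point of that trick is that with probability $1-O(m^{-6})$ at most one clause edge ever arrives, so the structural lemmas reduce to single-edge exchanges and no cross-clause interference needs to be analyzed; the third phase then exists only to penalize taking both of $wx_i,w\neg x_i$. Your deterministic design would give the cleaner identity $OPT=\Phi(n)+\rho\cdot\mathrm{val}(\phi)$ with no error term, but you then owe a concrete series gadget and, more delicately, must rule out interference between clause gadgets sharing a variable (selecting one clause's reward edge can merge components relevant to another, and a deviant policy might skip a propagation edge to exploit this)---exactly the bookkeeping the paper's low-probability encoding sidesteps.
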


Note that we can compute the optimal strategy for \gmbs in polynomial space with a simple brute-force recursive algorithm, so \gmbs is in fact \ps-complete.

Also note that Theorem~\ref{thm:gmbshard} says it is \ps-hard to approximate the \textit{expected value} of an optimal strategy. 
This does not immediately imply that it is \ps-hard to act according to an approximate optimal strategy, but the proof of Theorem~\ref{thm:gmbshard} shows this as well.

\begin{definition}
    We say that a class of matroids $\mathscr{M}$ \emph{admits a PTAS-compatible distribution-agnostic order} if there exists an order $\sigma(\mat)$ for every matroid $\mat \in \mathscr{M}$ such that there is a Polynomial-Time Approximation Scheme for instances of Matroid Bayesian Online Selection with matroids from $\mathscr{M}$ and arrival orders given by $\sigma$.
\end{definition}

We show the following result by demonstrating that the hardness construction of Theorem~\ref{thm:gmbshard} can be embedded in any arrival order of sufficiently large complete graphs.
\begin{theorem} \label{thm:allordershard}
    The set of graphic matroids does not admit a PTAS-compatible distribution-agnostic order unless $\ps = \texttt{P}$.
\end{theorem}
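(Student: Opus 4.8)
The plan is to reduce from Stochastic \textsc{Max-2-Sat} (\stwosat), recycling the hardness construction behind Theorem~\ref{thm:gmbshard}. That reduction maps a formula $\phi$ to a \gmbs instance $\mathcal{I}_{\phi}=(G_{\phi},\tau_{\phi},F_{\phi})$ — a graph $G_{\phi}$, a three‑phase arrival order $\tau_{\phi}$ on its edges, and value distributions $F_{\phi}$ — together with a constant $\alpha<1$ and a value $c$ such that an optimal policy earns at least $c$ when $\phi$ is a ``yes'' instance and at most $\alpha c$ when $\phi$ is a ``no'' instance; deciding between the two is \ps-hard. Suppose, toward a contradiction, that the class of graphic matroids admits a PTAS-compatible distribution-agnostic order $\sigma$; since the accompanying PTAS reads the arrival order supplied with its input, we may take $\sigma$ to be computable in polynomial time (this is exactly why the conclusion can be $\ps=\texttt{P}$ rather than merely a collapse such as $\ps\subseteq\texttt{P}/\mathrm{poly}$). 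The goal is to show that, for a suitable $N$, the instance $\mathcal{I}_{\phi}$ can be faithfully \emph{simulated} by a \gmbs instance on $K_N$ whose arrival order is exactly $\sigma(K_N)$, so that a single call to the assumed PTAS decides $\phi$; this puts \stwosat in $\texttt{P}$ and hence forces $\ps=\texttt{P}$.

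The heart of the argument is an order-robust embedding lemma. We may assume $G_{\phi}$ is simple (a parallel edge can be subdivided and its companion edge given the constant value $0$). The lemma to prove: there is a \emph{polynomial} $q$ such that, with $m=|E(G_{\phi})|$ and $N=q(m)$, for \emph{every} linear order $\sigma$ on $E(K_N)$ there is a subgraph $H\subseteq K_N$ and an isomorphism $\theta\colon G_{\phi}\to H$ that preserves the edge order, i.e.\ $e\prec_{\tau_{\phi}}e'$ iff $\theta(e)\prec_{\sigma}\theta(e')$; moreover $H$ and $\theta$ can be found in polynomial time given $\sigma$. The reason this should hold with polynomially many vertices is that $G_{\phi}$ is assembled from constant-size gadgets — one binary-choice gadget per variable in phase~1, one bounded gadget per clause in phases~2 and~3 — with only boundedly many inter-gadget connections, and $\tau_{\phi}$ orders edges phase by phase. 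One embeds gadget by gadget: since $K_N$ has $\Omega(N)$ free vertices, for each gadget one reserves a fresh block of vertices and, among the many candidate edges joining them, selects ones whose $\sigma$-positions fall in the required phase-block and realize the constantly many local order constraints inside the gadget; with $N$ a large enough polynomial in $m$ there is always enough room to finish every gadget with distinct vertex images. Where a longer $\sigma$-monotone substructure is needed to route a connection, one invokes the classical fact that every edge-ordering of $K_n$ contains a monotone path on polynomially many vertices.

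Given the embedding, define distributions $F'$ on $E(K_N)$ by transporting $F_{\phi}$ along $\theta$ and giving every edge outside $H$ the constant value $0$. Accepting a $0$-valued edge never yields gain and can only shrink the set of still-feasible future selections, so some optimal policy on $(K_N,\sigma(K_N),F')$ rejects all such edges; restricted to the edges of $H$ — which by the embedding arrive in the relative order $\tau_{\phi}$, and whose forests are exactly the independent sets of the graphic matroid of $G_{\phi}$ — it faces precisely the decision problem $\mathcal{I}_{\phi}$. Hence $OPT$ on $(K_N,\sigma(K_N),F')$ equals $OPT$ on $\mathcal{I}_{\phi}$. Now run the assumed PTAS on $(K_N,\sigma(K_N),F')$ with a fixed $\epsilon<1-\alpha$: its output is within a factor $(1-\epsilon)$ of $OPT(\mathcal{I}_{\phi})$ and therefore separates $c$ from $\alpha c$. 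Every step — computing $\mathcal{I}_{\phi}$, then $\sigma(K_N)$, then $H$ and $\theta$, then $F'$, then the PTAS — runs in time polynomial in $|\phi|$, so \stwosat is decided in polynomial time and $\ps=\texttt{P}$, contradicting $\ps\neq\texttt{P}$.

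The step I expect to be the real work is the embedding lemma, and in particular keeping $N$ polynomial in $m$: because we do not control $\sigma$, we cannot dictate where a chosen edge lands in the order, so the construction must \emph{search}, over polynomially many candidate copies of each gadget, for one whose $\sigma$-induced order on its edges matches what $\tau_{\phi}$ prescribes. Making this search always succeed requires opening up the exact gadget structure used in Theorem~\ref{thm:gmbshard} and checking that it has enough ``slack'' — enough interchangeable vertices, and enough tolerance in which concrete edges may play each role — for the phase-by-phase greedy placement to go through. Everything else (the $0$-padding argument, the gap-to-decision reduction, the efficiency bookkeeping) is routine once the embedding lemma is in hand.
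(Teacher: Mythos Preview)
Your high-level plan---realize the \stwosat hardness instance inside the graphic matroid of $K_N$ under the prescribed order $\sigma(K_N)$, pad the remaining edges with value $0$, and call the assumed PTAS---is exactly the paper's. The gap is the ``embedding lemma.'' You ask for a literal order-preserving copy of $(G_\phi,\tau_\phi)$ inside $(K_N,\sigma)$ and correctly flag this as ``the real work,'' but your sketch does not establish it, and in fact the lemma as stated is false. The obstruction is that all $2n$ phase-1 edges share the single center $w$ and must \emph{all} precede the phase-3 perfect matching on their $2n$ other endpoints; your ``fresh block of vertices per gadget'' idea cannot apply because these gadgets share $w$. Concretely, order $E(K_N)$ lexicographically by $(\max\{a,b\},\min\{a,b\})$. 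For any choice of center $w'$ and any $2n$ vertices $z_1,\ldots,z_{2n}$, let $z^\star=\max_i z_i$; any matching edge on $\{z_1,\ldots,z_{2n}\}$ that avoids $z^\star$ lies in a block strictly earlier than the block of $w'z^\star$ (and if all $z_i<w'$ the entire matching precedes every $w'z_i$). Hence no order-preserving copy of the phase-1 star followed by the phase-3 matching exists, regardless of $N$. The monotone-path fact you invoke does not rescue this: subdividing an edge changes the graphic matroid and thus the optimal policy's behaviour, so one cannot simply replace edges by $\sigma$-monotone paths and keep the reduction intact.

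The paper sidesteps the embedding lemma by building a \emph{modified} gadget adapted to $\sigma$ rather than embedding $G_\phi$. Working on $K_{3n+2m+1}$, it scans edges in \emph{reverse} arrival order and greedily extracts disjoint matchings $A$ of size $n$ (new phase-3 edges) and $B$ of size $m$ (new phase-2 edges); by construction these are among the latest-arriving edges. It then fixes a vertex $w$ unmatched by $A\cup B$ and uses the first $2n$ edges at $w$ as the phase-1 variable edges, with other endpoints $v_1,\overline v_1,\ldots,v_n,\overline v_n$. The crucial new ingredient is a layer of \emph{auxiliary} edges with deterministic value~$3$: for each $i$ the edges $v_i\alpha_i$ and $\overline v_i\gamma_i$, where $\alpha_i\gamma_i=a_i\in A$, and analogously edges linking the negated-literal vertices of clause $j$ to the endpoints of $b_j\in B$. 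Because $3$ dominates every other deterministic value in the instance, any optimal (or near-optimal) policy selects all auxiliary edges; once selected they behave like contractions, so the length-three path $v_i\text{--}\alpha_i\text{--}\gamma_i\text{--}\overline v_i$ plays exactly the role of the direct edge $x_i\neg x_i$ in the original reduction, and the path through $b_j$ plays the role of the $j$th clause edge. The analysis of Section~\ref{sec:reduction-2SAT} then carries over, with only a slightly worse inapproximability constant due to the fixed extra gain from the auxiliary edges. The idea you are missing is precisely this auxiliary-edge trick, which trades your ordered-Ramsey-type embedding for a one-pass greedy construction that works for every $\sigma$.
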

As a corollary of Theorem~\ref{thm:left-to-right-lmbs}, we have that the set of laminar matroids admits a PTAS-compatible distribution-agnostic order by taking $\sigma(\mat)$ to be any left-to-right ordering on $\mat$.
Theorem \ref{thm:allordershard} provides a separation between graphic and laminar matroids.

\subsection{Hardness for Graphic Bayesian Selection}

\subsubsection{\ps-Hard Problem: \stwosat}

To prove Theorem \ref{thm:gmbshard}, we first need some results about Stochastic Max 2SAT.

\begin{definition}
In the \emph{Stochastic Max 2SAT problem}, henceforth referred to as \stwosat, the input is a 2CNF formula $\phi$ over an ordered list of variables $(x_1, x_2, \ldots, x_n)$, where $n$ is even and for every $i = 1, \ldots, n-1$, either $x_i$ or $x_{i+1}$ is contained in some clause of $\phi$. We choose a value of \texttt{True} or \texttt{False} for $x_1$. Then, nature sets $x_2$ to either \texttt{True} or \texttt{False} with a probability of $0.5$. We then get to choose the value of $x_3$, nature sets the value of $x_4$ to either \texttt{True} or \texttt{False}, and so on. Our goal is to maximize the expected number of satisfied clauses in $\phi$ after all the variables have been assigned a value. A variable is called a \emph{random} variable if it is set by nature, and is called \emph{deterministic} otherwise. 
\end{definition}

The following lemma is crucial for our reduction, since it provides a problem to reduce from. The proof of Lemma~\ref{lem:m2sathard} follows relatively directly from the analogous result for MAX-3SAT~\cite{Papadimitriou2021} and is included in the appendix.

\begin{lemma} \label{lem:m2sathard}
There exist absolute constants $k \in \mathbb{N}$ and $\alpha \in (0, 1)$ so that it is \ps-hard to compute an $\alpha$-approximation for a \stwosat instance $\phi$ satisfying the requirement that each variable appears in at most $k$ clauses of $\phi$.
\end{lemma}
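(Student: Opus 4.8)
\textbf{Proof proposal for Lemma~\ref{lem:m2sathard}.}

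The plan is to reduce from the analogous \ps-hardness result for Stochastic MAX-3SAT established in~\cite{Papadimitriou2021}, and then add a bounded-occurrence reduction step. First I would invoke the result of~\cite{Papadimitriou2021}: there is an absolute constant $\alpha_3 \in (0,1)$ such that $\alpha_3$-approximating a \sthreesat instance is \ps-hard, where the variables alternate between deterministic (controlled by the player) and random (set by nature with probability $1/2$), and the objective is the expected number of satisfied clauses. The standard textbook gadget reducing MAX-3SAT to MAX-2SAT — replacing each clause $(\ell_1 \vee \ell_2 \vee \ell_3)$ by a fixed collection of $2$-clauses over the original literals plus one fresh auxiliary variable, such that any assignment to $\ell_1,\ell_2,\ell_3$ extends to an assignment of the auxiliary variable satisfying exactly $7$ of the new clauses if the original clause is satisfied and at most $6$ otherwise — preserves approximation up to a constant factor. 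The auxiliary variables introduced this way should be inserted into the variable ordering as \emph{deterministic} variables (the player controls them), placed immediately after the three literal-variables of their clause; since the player can always choose the auxiliary variable to maximize the number of satisfied $2$-clauses of that gadget given the realized values of the three literals, the optimal expected value transforms exactly as in the static case, and the alternation structure of deterministic/random variables is maintained by padding with unused deterministic variables as needed so that $n$ stays even and every consecutive pair touches a clause.

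Next I would handle the bounded-occurrence requirement. In the \sthreesat instance produced by~\cite{Papadimitriou2021} (or after the $2$SAT gadget), a variable may appear in many clauses; to bound occurrences by a constant $k$, I would apply the classical expander-based degree-reduction trick (as in the proof that MAX-3SAT is hard to approximate even with bounded occurrences): replace each high-occurrence variable $x$ by many copies $x^{(1)},\dots,x^{(m)}$, one per occurrence, and add equality-enforcing $2$-clauses $(x^{(i)} \vee \neg x^{(j)})$ and $(\neg x^{(i)} \vee x^{(j)})$ along the edges of a constant-degree expander on the copies. An assignment that is not constant on the copies of $x$ violates a number of equality clauses proportional to the cut size, which by the expansion property is $\Omega$ of the number of copies on the minority side; choosing the number of equality clauses to dominate the number of ``primal'' clauses lets us argue that near-optimal assignments are essentially constant on each variable's copies, so the approximation factor degrades only by a constant. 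The key subtlety in the stochastic setting is that copies of a \emph{random} variable must all be set by nature, and they must all receive the \emph{same} realized value — so in the ordering I would let nature set a single master copy and then immediately let nature (or rather, treat the remaining copies as) deterministic variables that the player is forced to equalize; alternatively, and more cleanly, only apply degree reduction to deterministic variables and observe that in the Stochastic MAX-3SAT instance of~\cite{Papadimitriou2021} the random variables can already be taken to have bounded occurrence (or be split before the randomness is introduced). I would check which of these is available from the precise form of the construction in~\cite{Papadimitriou2021}.

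The main obstacle I expect is precisely this interaction between variable-splitting and the random variables: one must ensure that (i) the ordering still alternates deterministic/random as required by the definition of \stwosat, (ii) nature's random choices in the new instance correspond faithfully to nature's choices in the old one so that the optimal expected value is preserved up to the claimed constant factor, and (iii) the consecutive-pair condition (for every $i$, $x_i$ or $x_{i+1}$ appears in some clause) is met — which is easy to arrange by interleaving, but needs to be stated. The gain and loss bookkeeping (a fixed additive term from the always-satisfiable equality clauses, plus the $6$-vs-$7$ gap from the $3$SAT$\to$$2$SAT gadget, plus the expander cut penalty) is routine once the reduction is set up, and composing all three reductions multiplies the inapproximability constants, yielding some absolute $\alpha \in (0,1)$ and some absolute $k \in \mathbb{N}$. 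I would present the final lemma by first stating the chain $\text{\sthreesat} \to \text{\stwosat} \to \text{bounded-occurrence \stwosat}$ and then filling in each arrow, deferring the most mechanical gap-amplification estimates.
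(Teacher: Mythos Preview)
Your 3SAT$\to$2SAT gadget step is essentially the paper's argument and is correct: replace each three-literal clause by the standard ten $2$-clauses with one fresh deterministic auxiliary variable, observe the $6$-versus-$7$ gap, and translate the approximation factor via the additive-shift fact (the paper's Lemma~\ref{apxfact}). The paper simply appends all auxiliary variables at the end of the order rather than interleaving them; either placement works for the value analysis, and both need minor padding to restore the alternation and consecutive-pair conditions.

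Where your proposal diverges is the second step, and this is a genuine gap. The expander-based degree reduction is \emph{unnecessary}: the result you are invoking from~\cite{Papadimitriou2021} (stated in this paper as Lemma~\ref{lem:m3sathard}) already gives \ps-hardness of approximation for \sthreesat\ \emph{with bounded occurrence}---each variable appears in at most some absolute constant $k'$ clauses. Since the gadget multiplies the number of occurrences of each original variable by at most a constant (crudely, at most $10$), the resulting \stwosat\ instance already has each variable in at most $k = 10k'$ clauses, and each fresh $c_i$ appears in only four clauses. So the chain is one arrow, not two.

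This is not merely a simplification. The obstacle you yourself flag---that copies of a \emph{random} variable cannot be made deterministic without changing the game, and cannot be made independently random without destroying the correspondence to the original instance---is real, and your suggested workarounds (a master random copy with forced equalizations, or reducing only deterministic variables) either break the alternation structure in ways that need further repair or presuppose exactly the bounded-occurrence structure on random variables that the cited \sthreesat\ result already provides. The paper sidesteps all of this by starting from the bounded-occurrence \sthreesat\ statement directly.
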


\subsubsection{Reduction from \stwosat to \gmbs}\label{sec:reduction-2SAT}

Now, we can state our reduction from \stwosat to \gmbs.
Let $\phi$ be an instance of \stwosat as in Lemma \ref{lem:m2sathard} with variables $(x_1, x_2, \ldots, x_n)$ and constant~$k$. Let $m$ indicate the number of clauses in $\phi$. We construct an instance $\newinst$ of \gmbs as follows.

The vertex set in the graph for $\newinst$ consists of a single central vertex $w$ and two vertices for each variable $x_i$, $i=1,\ldots,n$ in $\phi$. We label the vertices for the variable $x_i$, $i=1,\ldots,n$ as $x_i$ and $\neg x_i$
\[
    V := \{w\} \cup \{\,x_i,\, \neg x_i\, :\, 1 \leq i \leq n \}\,.
\]

The edges of $\newinst$ arrive in three distinct phases and have distinct value distributions. Before specifying $\newinst$ in full detail, let us explain the intuition behind the construction. The first phase simulates the selection of truth values for variables in the same order as they appear in $\phi$. The second phase accounts for the number of clauses satisfied by the value selection of phase one. For this, the edges in the second phase correspond to clauses and attain very high values with very small probability, so that multiple clause edges attaining value is very unlikely. The third phase guarantees that for each variable at most one truth value is selected in the first phase by the optimal online policy.

\textbf{First Phase.}  Respecting the order of variables as in $\phi$, two edges $w  x_i$ and  $w \neg x_i$ arrive  for each variable $x_i$, $i=1,\ldots,n$. The edge $w  x_i$ arrives immediately before $w \neg x_i$.
    If  $x_i$ is a deterministic variable, both edges $w x_i$ and $w \neg x_i$ have deterministic value $1$. If $x_i$ is a random variable $x_i$, the edge $w x_i$ has value $2$ with probability $0.5$ and value $0$ otherwise; the edge $w \neg x_i$  has deterministic value~$1$.

\textbf{Second Phase.} An edge $\neg \ell_1 \neg \ell_2$ arrives for each clause $(\ell_1 \vee \ell_2)$ in $\phi$. Each such edge has value $m^4/2k$ with probability $m^{-4}$, and value $0$ otherwise. For each clause $(\ell)$ in $\phi$, i.e., for each clause with a single literal, an edge $w \neg \ell$ arrives with the same value distribution.

\textbf{Third Phase.} An edge $x_i\neg x_i$ arrives  for each variable $x_i$, $i=1,\ldots,n$. The edge $x_i\neg x_i$ has deterministic value $2$.

Note that within the first phase, the order of arrivals is important. In contrast, the order of edge arrivals within the second or third phase can be arbitrary.

\subsubsection{Optimal Online Policy for $\newinst$}

Let $OPT_{on}(\phi)$ refer to the expected value of the optimal online algorithm for \stwosat on $\phi$.
Let $ALG_{opt}$ denote an optimal algorithm for \gmbs. We can assume that $ALG_{opt}$ selects only arrived edges. We say that an edge \emph{arrived} if the value of the edge is nonzero. We defer the proof of next two lemmas to the appendix.

\begin{lemma} \label{lem:onevar}
Algorithm $ALG_{opt}$ selects exactly one of $wx_i$, $w\neg x_i$ for each variable $x_i$.
\end{lemma}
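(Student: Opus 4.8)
\textbf{Proof proposal for Lemma~\ref{lem:onevar}.}

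The plan is to argue about the optimal policy's behaviour on the pair of edges $wx_i$, $w\neg x_i$ by contrasting with a carefully chosen alternative policy, and to exploit the structure of the graphic matroid: $w$ is the unique central vertex, so the edges $wx_i$ and $w\neg x_i$ together with the third-phase edge $x_i\neg x_i$ form a triangle, and selecting all three is infeasible. First I would record the elementary structural facts: in the first phase, for a deterministic variable both of $wx_i, w\neg x_i$ always arrive with value $1$, and for a random variable $w\neg x_i$ always arrives with value $1$ while $wx_i$ arrives with value $2$ with probability $1/2$; moreover $w\neg x_i$ arrives immediately after $wx_i$, and these two are the only edges incident to the vertices $x_i$ and $\neg x_i$ except the single third-phase edge $x_i \neg x_i$ and the second-phase clause edges incident to them.

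The core of the argument is an exchange/greedy argument. Fix the optimal policy $ALG_{opt}$ and suppose for contradiction that with positive probability it selects \emph{neither} of $wx_i$, $w\neg x_i$ (the case where it selects \emph{both} is ruled out because $\{wx_i, w\neg x_i\}$ together with $w$ does not create a cycle --- wait, it does not, so I must instead show that selecting both is suboptimal, not infeasible; selecting both uses up the ``budget'' at $w$ in a way that can always be weakly improved). I would therefore split into the two cases. For the ``selects neither'' case: consider the modified policy that behaves identically except that, upon arrival of $w\neg x_i$ in a branch where $ALG_{opt}$ selected neither edge so far, it selects $w\neg x_i$ (this is always feasible at that moment, since the only possible conflict is a cycle through $w$, and no such cycle has been closed --- here I use that $x_i, \neg x_i$ have no other incident edges yet, and that adding $w\neg x_i$ to a forest keeps it a forest as long as $\neg x_i$ was previously isolated or only reachable from $w$ via edges not yet present). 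This gains value $1$ now, and the only future cost is that the third-phase edge $x_i\neg x_i$ (value $2$) and possibly one clause edge incident to $\neg x_i$ may become infeasible; I then show, by the same kind of local surgery, that the net change in expected value is nonnegative, contradicting optimality --- or, if the original policy was already optimal, showing the modified one is too, which still gives the lemma up to replacing $ALG_{opt}$ by a canonical optimum.

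For the ``selects both'' case: if $ALG_{opt}$ selects both $wx_i$ and $w\neg x_i$, then $x_i\neg x_i$ is infeasible in phase three (triangle through $w$), and at most one clause edge among those incident to $x_i$ or $\neg x_i$ can later be selected (it would close a triangle otherwise). Compare with the policy that drops whichever of the two first-phase edges has the smaller (realized or expected) value; this loses at most $1$ in phase one but frees up the vertex so that $x_i\neg x_i$ becomes selectable in phase three for a guaranteed $2$, and I argue the swap is weakly profitable. The main obstacle I anticipate is handling the adaptivity cleanly: the ``future cost'' of these local modifications depends on the random realizations and on what the policy does downstream, so the exchange argument must be phrased as a statement about the value of the dynamic program $D_t(\cdot)$ (coupling the two policies on the same realization of all other edges and bounding the difference term by term), rather than as a naive one-shot swap. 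I would also need to be careful that the modification keeps the selected edge set a forest at every timestamp; this is where the fact that $w$ is the \emph{unique} high-degree vertex and that $x_i,\neg x_i$ are otherwise low-degree is essential, and I would state it as a small separate claim before running the exchange.
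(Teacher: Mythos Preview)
Your overall decomposition into the two cases ``selects neither'' and ``selects both'' matches the paper, and your treatment of the ``both'' case is essentially correct (the paper's argument there is the one-line observation that swapping $w\neg x_i$ for the later edge $x_i\neg x_i$ is a strict gain of $1$, since together with $wx_i$ these two edges span the same triangle, so all other feasibility decisions are unaffected).

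The ``neither'' case, however, has a real gap. You write that after inserting $w\neg x_i$ the possible future cost is ``the third-phase edge $x_i\neg x_i$ (value $2$) and possibly one clause edge,'' and you hope further local surgery will make the net change nonnegative. Two things are off. First, $x_i\neg x_i$ is \emph{not} a potential loss: if you add $w\neg x_i$ but not $wx_i$, the vertex $x_i$ is still isolated at the start of phase three unless some phase-two clause edge touching $x_i$ arrived, so $x_i\neg x_i$ remains feasible for the modified policy exactly when it was feasible for $ALG_{opt}$. Second, and more importantly, you never invoke the parameters of the construction, which is where the argument actually lives. The only possible loss is a single phase-two clause edge of value $m^4/(2k)$, and such an edge touches $x_i$ or $\neg x_i$ and arrives with probability at most $k\cdot m^{-4}$ (there are at most $k$ such edges, each arriving with probability $m^{-4}$). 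Hence the expected loss is at most $1/2$, while the guaranteed gain from $w\neg x_i$ is $1$, giving a \emph{strict} improvement of at least $q/2$ on the positive-probability branch --- contradicting optimality of $ALG_{opt}$ outright. Your hedge ``or, if the original policy was already optimal, showing the modified one is too'' would only establish the lemma for \emph{some} optimal policy, not for $ALG_{opt}$; the strict inequality is what you need, and it comes precisely from this probability-versus-value bookkeeping that your sketch omits.
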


The following lemma shows that $ALG_{opt}$ respects the random choice that nature made for the truth value of each random variable.
\begin{lemma}\label{lem:randcorrect}
    For a random variable $x_i$, algorithm $ALG_{opt}$ selects $w x_i$ if and only if $w x_i$ arrives. 
\end{lemma}

The following lemma shows that the optimal online policy  $ALG_{opt}$ for $\newinst$ behaves greedily after the first phase. This can be seen as the value of arrived edges is non-increasing after phase 1, so it is always optimal to select an arrived edge due to the greedy selection property of matroids.
\begin{lemma}
 \label{lem:greedyselection}
After the first phase, $ALG_{opt}$ selects every edge that arrives and  can be selected without introducing a cycle.
\end{lemma}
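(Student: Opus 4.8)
The plan is to establish Lemma~\ref{lem:greedyselection} by exploiting the matroid exchange property together with the special structure of the value distributions after the first phase. First I would observe that every edge arriving in the second phase has value either $0$ or $m^4/2k$, and every edge arriving in the third phase has value exactly $2$, so once we are past the first phase the values of all edges seen so far and all edges still to come are bounded above by $m^4/2k$; moreover, every second-phase edge that attains nonzero value has value strictly larger than any third-phase edge. Thus the edges that \emph{could} still contribute are, in value, monotonically non-increasing over the remainder of the game in the weak sense that no future edge is worth more than a currently available nonzero second-phase edge, and every future third-phase edge is worth the same fixed amount $2$.

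The key step is the standard greedy-optimality argument for matroids, adapted to the online stochastic setting. Suppose for contradiction that at some timestamp $t$ after the first phase, an edge $e$ arrives with nonzero value, the current selection $I$ satisfies $I \cup \{e\}$ acyclic (independent in the graphic matroid), yet $ALG_{opt}$ rejects $e$. I would compare $ALG_{opt}$ with the policy $ALG'$ that agrees with $ALG_{opt}$ up to time $t$, selects $e$ at time $t$, and thereafter mirrors $ALG_{opt}$'s decisions, using the matroid exchange property to maintain a bijection between $ALG'$'s selection and $ALG_{opt}$'s selection that differs in at most one element of no greater value. Concretely, whenever $ALG_{opt}$ later selects an edge $f$ that would create a cycle together with $ALG'$'s current set, circuit-exchange guarantees there is an element $g$ in the symmetric difference that can be dropped so that feasibility is restored; because all remaining edges have value at most that of $e$ and the third-phase edges all have the identical value $2$, one checks that the running value of $ALG'$ never falls below that of $ALG_{opt}$. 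Taking expectations over the remaining randomness (which is independent of the decision at time $t$), $ALG'$ has expected value at least that of $ALG_{opt}$ and strictly larger with positive probability — here one uses that $e$ having arrived means its value is nonzero and the event that $e$'s slot matters has positive probability — contradicting optimality of $ALG_{opt}$. Hence $ALG_{opt}$ must select $e$.

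I would then note the one subtlety the lemma's informal justification already flags: "the value of arrived edges is non-increasing after phase 1." This needs the observation that no \emph{future} edge can be strictly more valuable than a currently arrived nonzero edge, which holds because within phase~2 all nonzero values equal $m^4/2k$ and phase~3 values equal $2 < m^4/2k$; combined with the fact that phase~3 is entirely deterministic-valued, the exchange argument does not require any comparison of random realizations across timesteps. To keep the bookkeeping clean I would phrase the comparison as: the family of feasible selections forms a matroid, and for a weight function that is "later-is-not-heavier" in the above sense, the greedy online rule of taking every affordable arrived edge is optimal; this is essentially the matroid greedy theorem applied pathwise, then averaged.

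The main obstacle I expect is making the pathwise coupling between $ALG'$ and $ALG_{opt}$ fully rigorous: one must argue that after the swap, every subsequent decision of $ALG_{opt}$ can be imitated by $ALG'$ without ever being worse off, which requires carefully tracking the single-element difference between the two selections through a sequence of circuit exchanges and verifying the value never decreases. The reason this is tractable here rather than in general is precisely the impoverished value structure after phase~1 — only two nonzero value levels, $m^4/2k$ in phase~2 and $2$ in phase~3 — so the "element dropped" in any exchange is always worth no more than the "element kept," and the argument reduces to a finite case check rather than a delicate weighted matching. I would present the coupling as an explicit induction on the timesteps after $t$, maintaining the invariant that $ALG'$'s selection and $ALG_{opt}$'s selection have the same cardinality in each bin-free graphic sense and differ by a single edge of value at most that of $e$.
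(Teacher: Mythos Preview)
Your approach is correct and matches the paper's one-line justification: the paper simply observes that the values of arrived edges are non-increasing after phase~1 and invokes the greedy-selection property of matroids, and your coupling argument is a fleshed-out version of exactly this idea. Two small notes: first, in the online model $ALG'$ does not ``drop'' a previously selected element---it simply \emph{rejects} the first conflicting edge $f$, after which $ALG'$'s set and $ALG_{opt}$'s set have the same span (the unique circuit in the union contains both $e$ and $f$) and remain in lockstep thereafter; second, the strict-inequality claim can fail when $e$ is a phase-3 edge (all remaining values equal $2$, so the swap may be value-neutral), but weak inequality already suffices for the reduction, since one may take $ALG_{opt}$ to be greedy after phase~1 without loss.
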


\begin{remark}
          Note that the proofs of the above lemmas, i.e. Lemma~\ref{lem:onevar}, Lemma~\ref{lem:randcorrect} and Lemma~\ref{lem:greedyselection}, can be adopted to any online policy $ALG_{aprx}$ that attempts to maximize the expected total value of the selected elements but is limited to polynomial time computations. Indeed, the proofs of Lemma~\ref{lem:onevar}, Lemma~\ref{lem:randcorrect} and Lemma~\ref{lem:greedyselection} show that if $ALG_{aprx}$ does not follow the rules outlined in these lemmas, one can compute an alternative online policy $\mathcal A$ that satisfies all the outlined rules and runs in polynomial time and achieves a strictly better expected total value of the selected items.   
\end{remark}

\subsubsection{Gains for $\newinst$}
Let us consider the gains that are achieved by $ALG_{opt}$ and can be achieved by an online policy $ALG_{aprx}$ that runs in polynomial time and tries to maximize the expected total value of the selected elements.  Due to the results in the previous section we can assume that both $ALG_{opt}$ and $ALG_{aprx}$ follow the rules below:
\begin{enumerate}
    \item~\label{rule:onevar} exactly one of $wx_i$, $w\neg x_i$ is selected for each variable $x_i$
    \item~\label{rule:randcorrect} $w x_i$ is selected for a random variable $x_i$ if and only if $w x_i$ arrives
    \item~\label{rule:greedyselection} after the first phase, every edge is selected whenever the edge arrives and  can be selected without introducing a cycle.
\end{enumerate}

\begin{lemma}\label{lem:gain-estimate}
    Let $\mathcal{A}$ be an online policy that follows rules~\eqref{rule:onevar}, ~\eqref{rule:randcorrect} and~\eqref{rule:greedyselection}. Let $P_{\mathcal A}(\newinst)$ be the probability that a randomly selected clause in~$\phi$ is satisfied if the \texttt{True} and \texttt{False} values for $x_i$, $i=1,\ldots, n$ were assigned according to the edges selected by $\mathcal A$ in the first phase.   In particular, we assign $x_i$ the value \texttt{True} if $\mathcal A$ selects $w x_i$, and we assign  \texttt{False} otherwise. Then, we have
    \begin{align*}
\mathcal{A} (\newinst)= 1.25 n + 
&2n \cdot \left(1 - m^{-4}\right)^m + \left(2n + P_{\mathcal A}(\newinst) \cdot \left(\frac{m^4}{2k} - 2\right)\right) \cdot m \cdot m^{-4} \cdot \left(1 - m^{-4}\right)^{m-1} +\delta_{\mathcal A}\,,
\end{align*}
where $\delta_{\mathcal A}$ is a number in $ [0, 2m^{-1}]$.
\end{lemma}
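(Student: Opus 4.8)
The plan is to compute the expected gain of $\mathcal{A}$ phase by phase, using the three rules to pin down exactly which edges $\mathcal{A}$ selects in each realization. In the first phase, rule~\eqref{rule:onevar} says exactly one of $wx_i$, $w\neg x_i$ is taken for each $i$. For a deterministic variable, both candidate edges have value $1$, so the contribution is exactly $1$. For a random variable, rule~\eqref{rule:randcorrect} says $wx_i$ is taken precisely when it arrives (value $2$, probability $0.5$), and otherwise $w\neg x_i$ is taken (value $1$); so the expected contribution of a random variable is $0.5\cdot 2 + 0.5\cdot 1 = 1.5$. Since $n$ is even and the variables alternate deterministic/random, there are $n/2$ of each type, giving expected first-phase gain $\tfrac{n}{2}\cdot 1 + \tfrac{n}{2}\cdot 1.5 = 1.25n$. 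This accounts for the leading term.

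Next I would handle the second phase. Every second-phase edge independently attains value $m^4/(2k)$ with probability $m^{-4}$ and $0$ otherwise; there are $m$ such edges. By rule~\eqref{rule:greedyselection}, $\mathcal{A}$ greedily takes any arrived second-phase edge that does not close a cycle. The key combinatorial point is that a clause edge $\neg\ell_1\neg\ell_2$ (or $w\neg\ell$) closes a cycle together with the first-phase edges exactly when the literals $\ell_1,\ell_2$ are both falsified by the first-phase assignment, i.e. exactly when the corresponding clause is \emph{not} satisfied by the assignment induced by $\mathcal{A}$; so an arrived clause edge is selectable iff its clause is satisfied. I would condition on how many second-phase edges arrive. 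With probability $(1-m^{-4})^m$ none arrive, contributing only the third-phase gain. With probability $m\cdot m^{-4}(1-m^{-4})^{m-1}$ exactly one arrives; conditioned on this, the arriving edge is a uniformly random clause edge, so it is satisfied (hence selected, yielding $m^4/(2k)$) with probability $P_{\mathcal A}(\newinst)$. The remaining event — two or more clause edges arrive — has probability $O(m^{-2})$ and total value contribution $O(m^{-2})\cdot O(m^4)=O(m^2)$... wait, that is too large; one must be more careful: the contribution from two-or-more arrivals is bounded because each such edge contributes at most $m^4/(2k)$ but the probability of $j\ge 2$ arrivals is at most $\binom{m}{j}m^{-4j}\le m^{-2j}/j!$, so the total expected value from these events is $O(m^4\cdot m^{-4})=O(1)$ times lower-order factors, and in fact bounded by something absorbed into $\delta_{\mathcal A}\in[0,2m^{-1}]$. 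I need to show this tail contributes at most $2m^{-1}$; since the expected value from the "$\geq 2$ arrivals" event is roughly $\sum_{j\ge 2}\binom{m}{j}m^{-4j}\cdot j\cdot\frac{m^4}{2k}$, and $\binom{m}{j}m^{-4j}\le m^{-3j}$, this sum is $O(m^{4}\cdot m^{-6})=O(m^{-2})$, comfortably within $[0,2m^{-1}]$ for large $m$.

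For the third phase, rule~\eqref{rule:greedyselection} again applies: each edge $x_i\neg x_i$ has value $2$ and is selected iff it does not close a cycle with previously selected edges. By rule~\eqref{rule:onevar} exactly one of $wx_i,w\neg x_i$ is already present, so $x_i$ and $\neg x_i$ lie in the same tree component through $w$; thus $x_i\neg x_i$ closes a cycle \emph{unless} a second-phase edge incident to $x_i$ or $\neg x_i$ was selected in a way that... actually $x_i\neg x_i$ always closes a cycle with $wx_i$ and $wx_i$-to-$\neg x_i$ path? No: $wx_i$ and $w\neg x_i$ — only one is selected, say $wx_i$; then $\neg x_i$ is connected to the rest only if some other selected edge touches it. Here the subtlety: a selected second-phase edge can connect $\neg x_i$ to the component, in which case $x_i\neg x_i$ closes a cycle; otherwise $\neg x_i$ is isolated and $x_i\neg x_i$ is selectable, giving value $2$. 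So the third-phase gain is $2n$ minus $2$ for each variable "blocked" by a selected clause edge, which is why the clause-satisfaction term carries a $\bigl(\tfrac{m^4}{2k}-2\bigr)$ coefficient and the base contributions are $2n\cdot(1-m^{-4})^m$ and $2n\cdot m\cdot m^{-4}(1-m^{-4})^{m-1}$ in the respective conditioning branches. Assembling the three phases and collecting the lower-order terms into $\delta_{\mathcal A}$ yields the claimed formula.

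\textbf{Main obstacle.} I expect the delicate part to be the careful bookkeeping in the third phase: determining exactly when $x_i\neg x_i$ closes a cycle, which depends jointly on which first-phase edge was chosen and which (if any) second-phase edges incident to $x_i$ or $\neg x_i$ were selected, and verifying that under rules~\eqref{rule:onevar}--\eqref{rule:greedyselection} the total third-phase gain is precisely $2n$ minus $2$ times the number of selected clause edges — together with the corresponding cycle-structure argument in the second phase that an arrived clause edge is selectable iff its clause is satisfied. The probabilistic estimates (binomial tail bounds to confine the "$\geq 2$ arrivals" contribution within $[0,2m^{-1}]$) are routine once the combinatorics is nailed down.
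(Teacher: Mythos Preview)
Your proposal is correct and follows essentially the same route as the paper: decompose by phase, get $1.25n$ from phase one via rules~\eqref{rule:onevar}--\eqref{rule:randcorrect}, then condition on the number $C$ of arrived clause edges and compute $\expect[\mathcal{A}_2+\mathcal{A}_3\mid C=0]=2n$, $\expect[\mathcal{A}_2+\mathcal{A}_3\mid C=1]=2n+P_{\mathcal A}(\newinst)\bigl(\tfrac{m^4}{2k}-2\bigr)$, and absorb $C\ge 2$ into $\delta_{\mathcal A}$. Two minor remarks: (i) your phase-three claim ``gain $=2n-2\cdot(\text{number of selected clause edges})$'' is in fact exactly right and admits a one-line proof you nearly reached---the phase-one edges together with all phase-three edges form a spanning tree on the $2n+1$ vertices, so after phase two has added $s$ forest edges, greedy extension in phase three selects exactly $n-s$ edges; (ii) the paper bounds $\delta_{\mathcal A}$ more crudely, via $\expect[\mathcal{A}_2+\mathcal{A}_3\mid C>1]\le m^5$ and $\prob[C>1]\le 2m^{-6}$, rather than your sharper termwise sum, but either bound lands in $[0,2m^{-1}]$.
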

Clearly, $P_{\mathcal A}(\newinst)$ in Lemma~\ref{lem:gain-estimate} is the number of clauses satisfied by such an assignment divided by the total number of clauses, i.e. by $m$. 
We defer the proof of Lemma~\ref{lem:gain-estimate} to the appendix.

\subsubsection{Putting Everything Together}

Let us define $\gamma:=(1-m^{-4})^{m-1}$ and notice that the limit of $\gamma$  as $m$ goes to infinity equals~$1$. Recall that $k$ is an absolute constant. Thus, considering the expected gain of an online policy, we can conclude that when $\mathcal A$ is $ALG_{opt}$, the value of $P_{\mathcal A}$ is $OPT_{on}(\phi)/m$.
Let us now rewrite the expected gain of the optimal policy $ALG_{opt}$ as follows
\begin{align*}
\underbracket{1.25 n + 2n \cdot \gamma + 2n\cdot m^{-3}\cdot \gamma +\delta_{ALG_{opt}}}_{=:Q'}\,+\, \underbracket{OPT_{on}(\phi) \cdot \left(\frac{m^4}{2k} - 2\right) \cdot  m^{-4} \cdot \gamma}_{=:Q}\,.
\end{align*}
Let us define $Q$ and $Q'$ as above and define $\beta:=54k/\gamma$. 

\begin{lemma}[Fact 2.1 from \cite{Papadimitriou2021}] \label{apxfact}
Let $Q, Q' \geq 0$ be positive quantities, such that $\frac{Q'}{Q} \leq \beta$, and let $\alpha \in (0, 1)$. Then, an $(\frac{\alpha + \beta}{1 + \beta})$-approximation to $Q + Q'$ yields an $\alpha$-approximation to $Q$.
\end{lemma}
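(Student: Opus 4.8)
The plan is to treat this as the elementary arithmetic fact it is: $Q'$ plays the role of a fixed, efficiently computable additive offset (in the reduction it is the closed-form quantity $1.25n + 2n\gamma + 2nm^{-3}\gamma$ together with the tiny error term $\delta_{ALG_{opt}} \in [0,2m^{-1}]$), while $Q$ is the quantity we actually wish to estimate, namely a fixed positive multiple of $OPT_{on}(\phi)$. Given the output $V$ of an $\big(\tfrac{\alpha+\beta}{1+\beta}\big)$-approximation algorithm for $Q+Q'$ — that is, a value with $\tfrac{\alpha+\beta}{1+\beta}(Q+Q') \le V \le Q+Q'$ — the algorithm simply returns $V - Q'$ as its estimate of $Q$, and I would argue that this is an $\alpha$-approximation to $Q$.

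First I would dispatch the easy direction: $V - Q' \le (Q+Q') - Q' = Q$, so the estimate never exceeds $Q$. The substantive direction is the lower bound $V - Q' \ge \alpha Q$. Starting from $V \ge \tfrac{\alpha+\beta}{1+\beta}(Q+Q')$ and subtracting $Q'$, after combining the fractions one obtains
\[
    V - Q' \ \ge\ \frac{(\alpha+\beta)Q - (1-\alpha)Q'}{1+\beta}\,.
\]
Now I would use the hypothesis $Q' \le \beta Q$ together with $1-\alpha > 0$ to bound $(1-\alpha)Q' \le (1-\alpha)\beta Q$; substituting, the numerator becomes $(\alpha + \beta - \beta + \alpha\beta)Q = \alpha(1+\beta)Q$, and dividing by $1+\beta$ leaves exactly $\alpha Q$. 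Hence $\alpha Q \le V - Q' \le Q$, which is precisely the claim.

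The arithmetic itself presents no real obstacle; the only points worth keeping straight are bookkeeping assumptions. One needs $Q > 0$ for the ratio $Q'/Q$ and for the notion of an ``$\alpha$-approximation to $Q$'' to be meaningful — this is built into the hypothesis, and in the application it holds since $OPT_{on}(\phi) > 0$ for any $\phi$ with at least one clause. One also needs $Q'$ available exactly (or to precision far below $\alpha Q$) before subtracting it, which is fine here: the closed-form terms are computed exactly and the $\delta_{ALG_{opt}}$ contribution is $O(m^{-1})$, negligible against $Q = \Theta(OPT_{on}(\phi))$. Finally, chaining this lemma with $\beta = 54k/\gamma$ — a constant, since $k$ is an absolute constant and $\gamma = (1-m^{-4})^{m-1}$ is bounded away from $0$ — and with Lemma~\ref{lem:m2sathard} yields Theorem~\ref{thm:gmbshard}: an $\big(\tfrac{\alpha+\beta}{1+\beta}\big)$-approximation for \gmbs would produce an $\alpha$-approximation for bounded-occurrence \stwosat, which is \ps-hard.
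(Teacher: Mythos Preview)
Your argument is correct and is exactly the natural direct verification of this arithmetic fact. The paper itself does not prove Lemma~\ref{apxfact}; it simply imports it as Fact~2.1 from~\cite{Papadimitriou2021}, so there is no in-paper proof to compare against, and your write-up would serve perfectly well as a self-contained justification.
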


To use Fact~\ref{apxfact}, we need to show that $Q'/Q\leq \beta$. Notice that $n \leq 4m$ because no consecutive pair of variables is missing from all clauses. Also, $OPT_{on}(\phi) \geq \frac{m}{2}$ because we can achieve $\frac{m}{2}$ satisfied clauses with a random variable assignment. So we obtain $n \leq 8 \cdot OPT_{on}(\phi)$. Thus we have
\[ Q' \leq 3.25n + 0.25+ \delta_{ALG_{opt}} \leq 26\cdot OPT_{on}(\phi) + 0.5\leq 27 \cdot OPT_{on}(\phi) \]
for sufficiently large $m$. Now we have that
\[ Q'/Q \leq \left(27 \cdot OPT_{on}(\phi)\right)/\left(0.25\cdot \gamma\cdot OPT_{on}(\phi)/k\right) = \beta \]
for sufficiently large $m$.
Also, since $k$ is an absolute constant, we have that $\beta=108k/\gamma$ can be seen as a constant. So, we can apply Fact~\ref{apxfact} and find that an $\alpha' := \frac{\alpha + 108k/\gamma}{1 + 108k/\gamma}$ approximation for \gmbs gives an $\alpha$ approximation for \stwosat. Thus,  by Lemma~\ref{lem:m2sathard} there is some constant $\alpha' \in (0, 1)$ such that it is \ps-hard to approximate \gmbs to a factor of $\alpha'$.

\subsection{No PTAS-Compatible Distribution-Agnostic Order}
Let us show that for a graphic matroid on a complete graph there is no PTAS-compatible distribution-agnostic order. Let us be given a complete graph $G=(V,E)$ with $|V|=3n+2m+1$, where $n$ is even, and let us be given some arrival order for~$E$. Let us show how one can construct a reduction for \stwosat analogously to Section~\ref{sec:reduction-2SAT}.

\textbf{Edges for Third Phase.} In the reverse order of arrival, greedily construct a matching $A:=\{a_1,\ldots, a_n\}$ such that $|A|=n$. Each edge in $A$  has deterministic value $2$.

\textbf{Edges for Second Phase.} In the reverse order of arrival, after constructing $A$, greedily keep adding edges to $B:=\{b_1,\ldots, b_m\}$ such that $A\cup B$ is a matching and $|B|=m$. Each edge in $B$ has value $m^4/2k$ with probability $m^{-4}$, and value $0$ otherwise.

\textbf{Edges for First Phase.} Consider the vertex set $U$, which are the vertices not matched by the matching~$A\cup B$. Consider a vertex $w\in U$ and let $v_1$, $\overline{v}_1$, $v_2$, $\overline{v}_2$, \ldots, $v_n$, $\overline{v}_n$ be such that the edges $wv_1$, $w\overline{v}_1$, $wv_2$, $w\overline{v}_2$, \ldots, $wv_n$, $w\overline{v}_n$ are the first $n$ edges adjacent to $w$ and their arrivals are in the above order.  If  $i$ is odd, both edges $w v_i$ and $w \overline{v}_i$ have deterministic value $1$. If  $i$ is even, the edge  $w v_i$ has value $2$ with probability $0.5$  and value $0$ otherwise; and the edge $w \overline{v}_i$  has deterministic value $1$.

\textbf{Auxiliary Edges.} For each $i=1,\ldots, n$ both edges $v_i \alpha_i$ and $\overline{v}_i \gamma_i$ have deterministic value $3$, where  $\alpha_i$ and $\gamma_i$ are the endpoints of $a_i$. For each $j=1,\ldots, m$ both edges $u_j \beta_j$ and $l_j\tau_j$ have deterministic value $3$, where  $\beta_j$, $\tau_j$ are the endpoints of $b_j$ and $l_j$, $u_j$ correspond to ``negations'' of the literals in the $j$th clause of $\phi$.

Every remaining edge has deterministic value $0$. Note, that all auxiliary edges arrive before the edges in the second and third phase due to the greedy construction of $A$ and $B$. Also all auxiliary edges are selected by an optimal online policy.

The same analysis as in Section~\ref{sec:reduction-2SAT}, but with a slightly degraded inapproximability constant due to the additional guaranteed gain from the auxiliary edges.

\newpage

\bibliographystyle{splncs04}
\bibliography{mbs}

\begin{thebibliography}{10}
\providecommand{\url}[1]{\texttt{#1}}
\providecommand{\urlprefix}{URL }
\providecommand{\doi}[1]{https://doi.org/#1}

\bibitem{abolhassani2017beating}
Abolhassani, M., Ehsani, S., Esfandiari, H., Hajiaghayi, M., Kleinberg, R., Lucier, B.: Beating 1-1/e for ordered prophets. In: Proceedings of the 49th Annual ACM SIGACT Symposium on Theory of Computing. pp. 61--71 (2017)

\bibitem{OrderSelectionHard}
Agrawal, S., Sethuraman, J., Zhang, X.: On optimal ordering in the optimal stopping problem. In: Proceedings of the 21st ACM Conference on Economics and Computation. p. 187–188. EC '20, Association for Computing Machinery, New York, NY, USA (2020). \doi{10.1145/3391403.3399484}, \url{https://doi.org/10.1145/3391403.3399484}

\bibitem{SaeedUniformPI}
Alaei, S.: Bayesian combinatorial auctions: Expanding single buyer mechanisms to many buyers. In: 2011 IEEE 52nd Annual Symposium on Foundations of Computer Science. pp. 512--521 (2011). \doi{10.1109/FOCS.2011.90}

\bibitem{alaei2012online}
Alaei, S., Hajiaghayi, M., Liaghat, V.: Online prophet-inequality matching with applications to ad allocation. In: Proceedings of the 13th ACM Conference on Electronic Commerce. pp. 18--35 (2012)

\bibitem{AnariLMBS}
Anari, N., Niazadeh, R., Saberi, A., Shameli, A.: Nearly optimal pricing algorithms for production constrained and laminar bayesian selection. In: Proceedings of the 2019 ACM Conference on Economics and Computation. p. 91–92 (2019)

\bibitem{azar2014prophet}
Azar, P.D., Kleinberg, R., Weinberg, S.M.: Prophet inequalities with limited information. In: Proceedings of the twenty-fifth annual ACM-SIAM symposium on Discrete algorithms. pp. 1358--1377. SIAM (2014)

\bibitem{azar2018prophet}
Azar, Y., Chiplunkar, A., Kaplan, H.: Prophet secretary: Surpassing the 1-1/e barrier. In: Proceedings of the 2018 ACM Conference on Economics and Computation. pp. 303--318 (2018)

\bibitem{LovettOnlineMatching}
Braverman, M., Derakhshan, M., Molina~Lovett, A.: Max-weight online stochastic matching: Improved approximations against the online benchmark. In: Proceedings of the 23rd ACM Conference on Economics and Computation. p. 967–985. EC '22, Association for Computing Machinery, New York, NY, USA (2022). \doi{10.1145/3490486.3538315}, \url{https://doi.org/10.1145/3490486.3538315}

\bibitem{chawla2020non}
Chawla, S., Goldner, K., Karlin, A.R., Miller, J.B.: Non-adaptive matroid prophet inequalities. arXiv preprint arXiv:2011.09406  (2020)

\bibitem{correa2017posted}
Correa, J., Foncea, P., Hoeksma, R., Oosterwijk, T., Vredeveld, T.: Posted price mechanisms for a random stream of customers. In: Proceedings of the 2017 ACM Conference on Economics and Computation. pp. 169--186 (2017)

\bibitem{VariantSurvey}
Correa, J., Foncea, P., Hoeksma, R., Oosterwijk, T., Vredeveld, T.: Recent developments in prophet inequalities. SIGecom Exch.  \textbf{17}(1),  61–70 (may 2019). \doi{10.1145/3331033.3331039}, \url{https://doi.org/10.1145/3331033.3331039}

\bibitem{correa2021prophet}
Correa, J., Saona, R., Ziliotto, B.: Prophet secretary through blind strategies. Mathematical Programming  \textbf{190}(1-2),  483--521 (2021)

\bibitem{ProphetSecretary}
D\"{u}tting, P., Gergatsouli, E., Rezvan, R., Teng, Y., Tsigonias-Dimitriadis, A.: Prophet secretary against the online optimal. In: Proceedings of the 24th ACM Conference on Economics and Computation. p. 561–581. EC '23, Association for Computing Machinery, New York, NY, USA (2023). \doi{10.1145/3580507.3597736}, \url{https://doi.org/10.1145/3580507.3597736}

\bibitem{ehsani2018prophet}
Ehsani, S., Hajiaghayi, M., Kesselheim, T., Singla, S.: Prophet secretary for combinatorial auctions and matroids. In: Proceedings of the twenty-ninth annual acm-siam symposium on discrete algorithms. pp. 700--714. SIAM (2018)

\bibitem{esfandiari2017prophet}
Esfandiari, H., Hajiaghayi, M., Liaghat, V., Monemizadeh, M.: Prophet secretary. SIAM Journal on Discrete Mathematics  \textbf{31}(3),  1685--1701 (2017)

\bibitem{ezra2020online}
Ezra, T., Feldman, M., Gravin, N., Tang, Z.G.: Online stochastic max-weight matching: prophet inequality for vertex and edge arrival models. In: Proceedings of the 21st ACM Conference on Economics and Computation. pp. 769--787 (2020)

\bibitem{feldman2016online}
Feldman, M., Svensson, O., Zenklusen, R.: Online contention resolution schemes. In: Proceedings of the twenty-seventh annual ACM-SIAM symposium on Discrete algorithms. pp. 1014--1033. SIAM (2016)

\bibitem{feng2021two}
Feng, Y., Niazadeh, R., Saberi, A.: Two-stage stochastic matching with application to ride hailing. In: Proceedings of the 2021 ACM-SIAM Symposium on Discrete Algorithms (SODA). pp. 2862--2877. SIAM (2021)

\bibitem{feng2022near}
Feng, Y., Niazadeh, R., Saberi, A.: Near-optimal bayesian online assortment of reusable resources. In: Proceedings of the 23rd ACM Conference on Economics and Computation. pp. 964--965 (2022)

\bibitem{fu2018ptas}
Fu, H., Li, J., Xu, P.: A ptas for a class of stochastic dynamic programs. chatzigiannakis i, kaklamanis c, marx d, sannella d, eds. In: Proc. 45th Internat. Colloquium on Automata, Languages, and Programming.(ICALP 2018), Prague, Czech Republic. pp. 1--56 (2018)

\bibitem{gravin2019prophet}
Gravin, N., Wang, H.: Prophet inequality for bipartite matching: Merits of being simple and non adaptive. In: Proceedings of the 2019 ACM Conference on Economics and Computation. pp. 93--109 (2019)

\bibitem{Gupta}
Gupta, A.: Lecture Notes, IPCO Summer School (2017)

\bibitem{HajiaghayiUniformPI}
Hajiaghayi, M.T., Kleinberg, R., Sandholm, T.: Automated online mechanism design and prophet inequalities. In: Proceedings of the 22nd National Conference on Artificial Intelligence - Volume 1. p. 58–65. AAAI'07, AAAI Press (2007)

\bibitem{HillKertzPI}
Hill, T.P., Kertz, R.P.: Comparisons of stop rule and supremum expectations of i.i.d. random variables. The Annals of Probability  \textbf{10}(2),  336--345 (1982), \url{http://www.jstor.org/stable/2243434}

\bibitem{HillKertzSurvey}
Hill, T.P., Kertz, R.P.: A survey of prophet inequalities in optimal stopping theory. Contemporary Mathematics  \textbf{125}(1),  191--207 (jan 1992)

\bibitem{KleinWein}
Kleinberg, R., Weinberg, S.M.: Matroid prophet inequalities. In: Proceedings of the Forty-Fourth Annual ACM Symposium on Theory of Computing. p. 123–136 (2012)

\bibitem{Krengel}
Krengel, U., Sucheston, L.: Semiamarts and finite values. Bulletin of the American Mathematical Society  \textbf{83}(4),  745--747 (1977)

\bibitem{LucierSurvey}
Lucier, B.: An economic view of prophet inequalities. SIGecom Exch.  \textbf{16}(1),  24–47 (sep 2017). \doi{10.1145/3144722.3144725}, \url{https://doi.org/10.1145/3144722.3144725}

\bibitem{WajcMatchingRounding}
Naor, J., Srinivasan, A., Wajc, D.: Online dependent rounding schemes (2023)

\bibitem{oxley}
Oxley, J.G.: Matroid Theory. Oxford graduate texts in mathematics, Oxford University Press (2006), \url{https://books.google.ca/books?id=puKta1Hdz-8C}

\bibitem{Papadimitriou2021}
Papadimitriou, C., Pollner, T., Saberi, A., Wajc, D.: Online stochastic max-weight bipartite matching: Beyond prophet inequalities. In: Proceedings of the 22nd ACM Conference on Economics and Computation. p. 763–764. EC '21, Association for Computing Machinery, New York, NY, USA (2021). \doi{10.1145/3465456.3467613}, \url{https://doi.org/10.1145/3465456.3467613}

\bibitem{pashkovich2023non}
Pashkovich, K., Sayutina, A.: Non-adaptive matroid prophet inequalities. arXiv preprint arXiv:2301.01700  (2023)

\bibitem{rubinstein2016beyond}
Rubinstein, A.: Beyond matroids: Secretary problem and prophet inequality with general constraints. In: Proceedings of the forty-eighth annual ACM symposium on Theory of Computing. pp. 324--332 (2016)

\bibitem{rubinstein2017combinatorial}
Rubinstein, A., Singla, S.: Combinatorial prophet inequalities. In: Proceedings of the Twenty-Eighth Annual ACM-SIAM Symposium on Discrete Algorithms. pp. 1671--1687. SIAM (2017)

\bibitem{rubinstein2019optimal}
Rubinstein, A., Wang, J.Z., Weinberg, S.M.: Optimal single-choice prophet inequalities from samples. arXiv preprint arXiv:1911.07945  (2019)

\bibitem{SamuelCahn}
Samuel-Cahn, E.: Comparison of threshold stop rules and maximum for independent nonnegative random variables. the Annals of Probability pp. 1213--1216 (1984)

\bibitem{segev2021efficient}
Segev, D., Singla, S.: Efficient approximation schemes for stochastic probing and prophet problems. In: Proceedings of the 22nd ACM Conference on Economics and Computation. pp. 793--794 (2021)

\bibitem{Wajc2023}
Wajc, D.: Personal Communication (2023)

\end{thebibliography}

\newpage
\appendix

\section{Proofs from Section~\ref{sec:PTAS}}
\begin{proof}[Proof of Lemma~\ref{lem:optimal-small-bin}]
Let $\{\varx_t^{\star}(\vecs, v_t), \varx_t^{\star}(v_t), \vy^*_{t}(\vecs), \mathcal{N}^*_A\}$ indicate an optimal solution for~\eqref{lp2} as in Step~\ref{alg:PTAS-step-4}.  
Note that the part  of the optimal solution which corresponds to the elements from~$B$ is also an optimal solution for the following linear program.
\begin{align*}
&\text{maximize} & \quad & \sum_{u_t \in B} \expect_{v_t} [v_t \cdot \varx_t(v_t)] \\
&\text{subject to} &&\{ \varx_t(\vecs, v), \varx_t(v), \vy_t(\vecs)\} \in \mathcal{P}^B &  \\
&&& \sum_{u_t \in B} \expect_{v_t}[\varx_t(v_t)] = \mathcal{N}^{\star}_B & 
\end{align*}

Let us now consider the following relaxation of the above linear program by introducing $\lambda$.
\begin{align*}
&\text{maximize} & \quad & \sum_{u_t \in B} \expect_{v_t} [v_t \cdot \varx_t(v_t)] + \lambda (\mathcal{N}_B^* - \sum_{u_t \in B} \expect_{v_t}[\varx_t(v_t)]) \\
&\text{subject to} &&\{ \varx_t(\vecs, v), \varx_t(v), \vy_t(\vecs)\} \in \mathcal{P}^B & 
\end{align*}
There exists $\lambda^*$ such that for $\lambda=\lambda^*$ every optimal solution for the first linear program is an optimal solution for the second linear program.
Indeed, such a $\lambda^*$ can be obtained from the dual optimal solution for the first linear program.

Thus, the part  of the optimal solution which corresponds to the elements from~$B$ is also an optimal solution for the following linear program.
\begin{align*}
&\text{maximize} & \quad & \sum_{u_t \in B} \expect_{v_t} [(v_t - \lambda^{\star}) \cdot \varx_t(v_t)] \\
&\text{subject to} &&\{ \varx_t(\vecs, v), \varx_t(v), \vy_t(\vecs)\} \in \mathcal{P}^B &
\end{align*}
Note that the last linear program is encoding optimal online policies on the bin $B$, where all values for~$v_t$ are shifted by $\lambda^*$.   
\end{proof}

\section{Proofs from Section~\ref{sec:PSPACE}}
\begin{lemma}[Lemma 2.3 in \cite{Papadimitriou2021}] \label{lem:m3sathard}
There exist absolute constants $k \in \mathbb{N}$ and $\alpha \in (0, 1)$ so that it is \ps-hard to compute an $\alpha$-approximation for a \sthreesat instance $\phi$ satisfying the requirement that each variable appears in at most $k$ clauses of $\phi$.
\end{lemma}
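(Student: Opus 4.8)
This lemma is a verbatim restatement of Lemma~2.3 of~\cite{Papadimitriou2021}, so the primary plan is simply to invoke that reference. For completeness I sketch the two standard ingredients one would chain together to prove it from scratch.

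\textbf{Step 1: constant-factor inapproximability of unbounded-occurrence Stochastic MAX-3SAT.} The starting point is that Stochastic SAT — deciding whether a strategy exists satisfying \emph{all} clauses of a 3CNF whose variables alternate between ``maximizer'' and ``random'' in a fixed order — is \ps-complete. To upgrade this to a hardness-of-approximation statement one uses the theory of probabilistically checkable debate systems of Condon, Feigenbaum, Lund and Shor, which yields a PCP-type theorem for \ps: there is an absolute constant $\rho<1$ such that it is \ps-hard to distinguish instances whose optimal strategy satisfies all clauses from instances whose optimal strategy satisfies at most a $\rho$-fraction. A routine clause-width reduction — split each wide clause into a chain of $3$-clauses glued by fresh auxiliary variables assigned by the maximizer immediately after the parent turn — turns this into a statement about \sthreesat while preserving both the constant gap and the alternating random/deterministic turn order required by the definition.

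\textbf{Step 2: degree reduction via expanders.} To bound the number of occurrences of each variable, replace a variable $x$ occurring $d_x$ times by $d_x$ copies identified with the vertices of a constant-degree expander $H_x$; place an equality gadget (the pair of $2$-clauses $(\lnot y \lor z)\wedge(y\lor\lnot z)$, padded to width $3$) on each edge of $H_x$, and route the $j$-th occurrence of $x$ to the $j$-th copy. Edge expansion of $H_x$ guarantees that any assignment in which the copies are not all equal violates a number of equality gadgets proportional to the number of ``minority'' copies; hence without loss of generality the optimal strategy keeps all copies of every variable consistent, and the completeness/soundness gap degrades only by a constant factor. After this step every variable occurs at most an absolute constant $k$ times, as claimed, and composing the two constant-factor losses gives the desired constant $\alpha\in(0,1)$.

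\textbf{Main obstacle.} The one point requiring genuine care is making the degree reduction compatible with the stochastic alternation. For a deterministic variable all $d_x$ copies are again maximizer-controlled and may be inserted consecutively in place of the original turn. For a random variable, nature should flip exactly one ``master'' copy; the remaining copies are then made maximizer-controlled (inserted right after the master's turn), and the equality gadgets of $H_x$ give the maximizer a strict incentive to match the master bit, so nature gains no extra power and the turn order stays a legal \sthreesat order. Verifying this, together with bookkeeping the constants through Steps~1--2, is the only non-mechanical part; both ingredients themselves are standard, which is why the paper simply cites~\cite{Papadimitriou2021}.
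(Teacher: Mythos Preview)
Your proposal is correct and matches the paper's treatment: the paper does not prove this lemma at all but simply cites it as Lemma~2.3 of~\cite{Papadimitriou2021}, exactly as you propose in your first sentence. The additional two-step sketch you supply (Condon--Feigenbaum--Lund--Shor probabilistically checkable debates for the gap, followed by expander-based degree reduction with the master-copy trick for random variables) is extra content beyond what the paper provides and is a reasonable outline of the argument behind the cited result.
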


\begin{proof}[Proof of Lemma \ref{lem:m2sathard}]
Let $\phi'$ be an instance of \sthreesat with variables $x_1$, $x_2$, \ldots, $x_n$ such that each variable appears in at most $k'$ clauses of $\phi'$. Let us construct an instance of \stwosat $\phi$ as follows:
\begin{itemize}
    \item define $x_1$, $x_2$, \ldots, $x_n$ as variables of $\phi$ 
    \item if a clause in $\phi'$ contains at most two literals, then directly add this clause to $\phi$
    \item if $i$th clause in $\phi'$ contains three literals, i.e. is of the form $(\ell_1 \vee \ell_2 \vee \ell_e)$, define a new deterministic variable $c_i$, append it to the end of our current variables; after that add ten following \emph{new clauses} to $\phi$
\[ (\ell_1), (\ell_2), (\ell_3), (c_i), (\neg \ell_1 \vee \neg \ell_2), (\neg \ell_2 \vee \neg \ell_3), (\neg \ell_1 \vee \neg \ell_3), (\ell_1 \vee \neg c_i), (\ell_2 \vee \neg c_i), (\ell_3 \vee \neg c_i) \]
\end{itemize}

First, note that $\phi$ satisfies the requirements of Lemma~\ref{lem:m2sathard} for $k=10 k'$. Indeed, each clause has at most two variables, and each variable appears in at most $k$ clauses of $\phi$. 

Consider any truth values assigned to literals $\ell_1, \ell_2, \ell_3$. If the values of $\ell_1$, $\ell_2$, $\ell_3$ satisfy $(\ell_1 \vee \ell_1 \vee \ell_3)$, then we can assign a value to $c_i$ so that $7$ of our new clauses are satisfied. Note that there is no assignment of \texttt{True} and \texttt{False} to $c_i$  satisfying more than $7$ new clauses. If the values of $\ell_1$, $\ell_2$, $\ell_3$ do not satisfy $(\ell_1 \vee \ell_2 \vee \ell_3)$, then we can satisfy at most $6$ of our new clauses by any assignment to $c_i$; and we can satisfy exactly $6$ by setting $c_i$ to \texttt{False}.

Let $OPT_{on}(\phi')$ denote the optimal expected value for $\phi'$, and equivalently for $OPT_{on}(\phi)$. We claim that $OPT_{on}(\phi) = 6m + OPT_{on}(\phi')$ where $m$ is the number of clauses with three literals in $\phi'$. 

We can construct an algorithm for $\phi$ that mimics the decisions of $OPT_{on}(\phi')$ until it reaches clause variables, at which point it assigns them to greedily maximize remaining clauses. As we saw above, we gain $6$ satisfied clauses in $\phi$ for each non-satisfied clause in $\phi'$ and $7$ satisfied clauses in $\phi$ for each satisfied clause in $\phi$, giving 
\[ OPT_{on}(\phi) \geq 6m + OPT_{on}(\phi')\,. \]

Now construct an algorithm for $\phi'$ that mimics the decisions of $OPT_{on}(\phi)$. An optimal algorithm for $\phi$ always yields value at least $6$ satisfied clauses per a three-literal clause in $\phi'$, since it can set the clause variables $c_i$, $i=1,\ldots,m$ to get at least $6$ satisfied clauses for each original three-literal clause of $\phi'$. 

Further, as we saw earlier, it is impossible for $OPT_{on}(\phi)$ to gain more than $7$ satisfied clauses for any given three-literal clause of $\phi'$. In particular, it can gain $7$ exactly when the values of  $x_1$, $x_2$, \ldots, $x_n$ satisfy the clause from $\phi'$. So our constructed algorithm for $\phi$ satisfies $OPT_{on}(\phi) - 6m$ clauses, which combined with above implies the equality
\[
    OPT_{on}(\phi) = 6m + OPT_{on}(\phi')\,.
\]

Now suppose that $\alpha' \in (0, 1)$ is the constant from Lemma \ref{lem:m3sathard}. In order to use Lemma~\ref{apxfact}, let $Q := OPT_{on}(\phi')$, $Q' := 6m$, and $\beta := 12$. Note that $OPT_{on}(\phi') \geq \frac{m}{2}$ and
\[
    \frac{Q'}{Q} \leq \frac{6m}{m/2} = \beta\,.
\]
Now Lemma~\ref{apxfact} implies an $\alpha := \frac{\alpha + 12}{13}$ approximation for \stwosat yields an $\alpha'$-approximation for \sthreesat. So it is \ps-hard to approximate \stwosat within a factor of $\alpha$.
\end{proof}

\begin{proof}[Proof of Lemma~\ref{lem:onevar}]
First, we show that $ALG_{opt}$ selects at least one of $wx_i$, $w\neg x_i$ for each variable $x_i$.  Suppose for the sake of contradiction that after some history occurring with probability $q > 0$, $ALG_{opt}$ selects neither $wx_i$ nor $w \neg x_i$. Let $\mathcal{A}$ be an algorithm that matches the decisions of $ALG_{opt}$, except that it selects $w \neg x_i$, and afterwards does not select any edges that would then introduce a cycle. 

Consider what we lose out on by running $\mathcal{A}$ instead of $ALG_{opt}$. Let $\mathcal{A}(\newinst)$ be the expected value obtained by $\mathcal{A}$ on $\newinst$. If no edge adjacent to $x_i$ or $\neg x_i$ arrive in the second phase, then the vertex $x_i$ has no adjacent selected edges before the third phase. So both $ALG_{opt}$ and $\mathcal{A}$ select the edge $x_i \neg x_i$ with value $2$ in the third phase, since in this case $x_i \neg x_i$ does not lead to a cycle for any of them. 

If some  edge adjacent to $x_i$ or $\neg x_i$ arrives in the second phase, then $\mathcal{A}$ can miss out on at most one edge that $ALG_{opt}$ selects. The value of this missed edge can be at most $m^4/2k$. Here, we assume $m^4/2k \geq 2$, which is true for sufficiently large $m$. By the union bound, the probability that any clause edge adjacent to $x_i$ or $\neg x_i$ arrives is at most $k \cdot m^{-4}$, so we have
\[
\mathcal{A}(\newinst) - OPT_{on}(\newinst) \geq q \left( 1 - k \cdot m^{-4} \cdot \frac{m^4}{2k}\right) = \frac{q}{2} > 0 \,,
\]
contradicting the optimality of $ALG_{opt}$. 

Second, note that $ALG_{opt}$ never selects both $w x_i$ and $w \neg x_i$ for any variable $x_i$, $i=1,\ldots, n$. To see this fact, observe that it is more beneficial for $ALG_{opt}$ to accept only $w x_i$ in the first phase and to commit to accept $x_i \neg x_i$ in the third phase.
\end{proof}

\begin{proof}[Proof of Lemma~\ref{lem:randcorrect}]

It is sufficient to show that if $w x_i$ arrives, i.e. has value $2$, $ALG_{opt}$ selects $w x_i$. For the sake of contradiction suppose that after some history occurring with probability $q > 0$, the edge $w x_i$ arrives but $ALG_{opt}$ does not select it.

Let $\mathcal{A}$ be an algorithm that matches the decisions of $ALG_{opt}$, except that it selects $w x_i$ when it arrives, and afterwards $\mathcal{A}$ does not select any edges that would then introduce a cycle. Analogously to the proof of Lemma~\ref{lem:onevar} we obtain
\[
\mathcal{A}(\newinst) - ALG_{opt}(\newinst) \geq q \left( 1 - k \cdot m^{-4} \cdot \frac{m^4}{2k}\right) = \frac{q}{2} > 0\,. 
\]
contradicting the optimality of $ALG_{opt}$. 
\end{proof}

\begin{proof}[Proof of Lemma~\ref{lem:gain-estimate}]
Let ${\mathcal A}_1(\newinst)$, ${\mathcal A}_2(\newinst)$ and ${\mathcal A}_3(\newinst)$ be the random variables indicating the total value of elements selected by $\mathcal A$ in each of the three phases. We have that $\expect[{\mathcal A}_1(\newinst)]=1.25n$ because in expectation each deterministic variable and randomized variable in $\phi$ lead to the gain of $1$ and $1.5$ in expectation by rules~\eqref{rule:onevar} and~\eqref{rule:randcorrect}.

Let $C$ be the random variable indicating the number of edges that arrived in the second phase. We estimate the gain from ${\mathcal A}_2(\newinst)$ and ${\mathcal A}_3(\newinst)$ conditioned on three possible events $C=0$, $C=1$, and $C>1$. Clearly, we have $\expect [{\mathcal A}_2(\newinst)+{\mathcal A}_3(\newinst)\,\mid\, C = 0] = 2n$, as under the condition $C=0$ there is no gain in the second phase and every edge is selected in the third phase by rule~\eqref{rule:greedyselection}. We also have
\[\expect [{\mathcal A}_2(\newinst)+{\mathcal A}_3(\newinst)\,\mid\, C = 1] = 2n + P_{\mathcal A}(\newinst) \cdot \left(\frac{m^4}{2k} - 2\right)\,,\]
because in this case the edge that arrived in the second phase is selected exactly when the corresponding clause is satisfied by the \texttt{True} and \texttt{False} assignment produced by the first phase, while the clause is satisfied with probability $P_{\mathcal A}(\newinst)$.

Let us state a trivial bound for the gain in the second and third phases conditioned on $C>2$
\[\expect [{\mathcal A}_2(\newinst)+{\mathcal A}_3(\newinst)\,\mid\, C > 1] \leq 2n + m \cdot \frac{m^4}{2k} \leq 2 \cdot \frac{m^5}{2k} \leq m^5\,,\]
which is obtained by assuming that all edges in the last two phases arrive and $\mathcal A$ selects all of them. However, the event that $C>1$ has a very small probability of occuring
\[
\prob[C > 1]= \sum_{t = 2}^m \binom{m}{t} \left(m^{-4}\right)^t \left(1-m^{-4}\right)^{m-t}\leq \sum_{t = 2}^m m^t \left(m^{-4t}\right)\leq 2m^{-6}\,.
\]
So we have
\[
\delta_{\mathcal A}:=\expect [{\mathcal A}_2(\newinst)+{\mathcal A}_3(\newinst)\,\mid\, C > 1] \cdot \prob[C > 1]\in [0, 2m^{-1}]\,.
\]
Thus, the expected gain of $\mathcal A$ can be estimated as follows
\begin{align*}
\mathcal{A} (\newinst)= \,& \expect[{\mathcal A}_1(\newinst)]+ \expect[{\mathcal A}_2(\newinst)+{\mathcal A}_3(\newinst)]=1.25 n \,+ \\
&2n \cdot \left(1 - m^{-4}\right)^m + \left(2n + P_{\mathcal A}(\newinst) \cdot \left(\frac{m^4}{2k} - 2\right)\right) \cdot m \cdot m^{-4} \cdot \left(1 - m^{-4}\right)^{m-1} +\delta_{\mathcal A}\,,
\end{align*}
where we used that $\prob[C=0]=\left(1 - m^{-4}\right)^m$ and $\prob[C=1]= m \cdot m^{-4} \cdot \left(1 - m^{-4}\right)^{m-1}$. 
\end{proof}

\section{Laminar Instance with Bad Concentration}

\begin{proof}[Proof of Theorem \ref{thm:badconcentration}]
     For simplicity, we will label the elements of the matroid as $U = \{u_1, u_2, \ldots, u_n\}$. For each element $u_i$, the produced instance uses only two-point distributions with support $\{0, v_i\}$ for some $v_i > 0$. Let $p_i$ be the probability that $u_i$ takes on value $v_i$. Thus, with probability $1 - p_i$ the element $u_i$ takes on value $0$.

Given natural numbers $r$ and $k$, the following algorithm generates an anti-concentrated instance $\mat$. Here, the choice of $k$ is later determined by  the value of $\varepsilon$.

    \begin{algorithm}[H]
        \caption{PRODUCE-INSTANCE(r, k)}
        \label{alg:anticoncentrated}
        $v \gets 10^{1+2k},\quad v' \gets 10^k,\quad p \gets 10^{-1-k}$ \;
        $n_1 \gets 1, \quad U_1 \gets \{u_1\}$ \;
        $v_1 \gets 1, \quad p_1 \gets \frac{1}{2}$ \;
        $\Delta_1 \gets 1$ \;
        $\lam_1 \gets \{\{u_1\}\}, \quad c_1(\{u_1\}) \gets 1$ \;
        \For{$i = 2, 3, \ldots, r$}{
            $\alpha_i \gets \frac{\Delta_{i-1}}{10 (r+1)^2} \cdot \frac{1}{v}$ \;
            $U_i \,\gets\, U_{i-1} \cup \{u_{n_{i-1}+1}, \ldots, u_{n_{i-1} + i + 2}\}, \quad n_i \gets n_{i-1} + i + 2$ \;
            $\lam_i\, \gets\, \left(\lam_{i-1} \setminus \{U_{i-1}\}\right) \,\cup\, \left\{U_{i-1} \cup \{u_{n_{i-1}+i+1}\},\, U_i, \{u_{n_{i-1}+1},\, u_{n_{i-1} + i + 2}\}\right\}$ \;
            $c_i(B)\, \gets\, c_{i-1}(B) \,\text{ for all } \, B \in \lam_{i-1} \cap \lam_i$ \;
            $c_i(U_{i-1} \cup \{u_{n_{i-1}+i+1}\}) \gets i-1, \quad c_i(U_i) \gets i,\quad c_i(\{u_{n_{i-1}+1}, u_{n_{i-1} + i + 2}\}) \gets 1$ \;
            $\Delta_i \gets \alpha_i \cdot v \cdot p^i \cdot (1 - p)$ \;
            $p_{n_{i-1} + 1} \gets 1,\quad v_{n_{i-1} + 1} \gets \alpha_i \cdot 10^k + \frac{\Delta_i}{2}$ \;
            $p_{n_{i-1} + 2},\, \ldots,\, p_{n_{i-1} + i + 2} \gets p$ \;
            $v_{n_{i-1} + 2},\, \ldots,\, v_{n_{i-1} + i} \gets \alpha_i \cdot v'$ \;
            $v_{n_{i-1} + i + 1},\quad v_{n_{i-1} + i + 2} \gets \alpha_i \cdot v$ \;
        }
        return $\mat = (U_r, \lam_r, c_r)$ with values $v_1, \ldots, v_{n_r}$ and probabilities $p_1, \ldots, p_{n_r}$ \;
    \end{algorithm}

    \begin{figure}[h]
    \centering
    \includegraphics[width=0.65\textwidth]{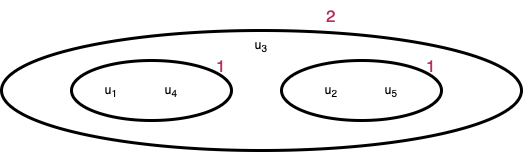}
    \caption{The laminar family, capacities, and arrival order generated by Algorithm~\ref{alg:anticoncentrated} when $r = 2$.}
    \end{figure}
    
    \begin{figure}[h]
    \centering
    \includegraphics[width=1\textwidth]{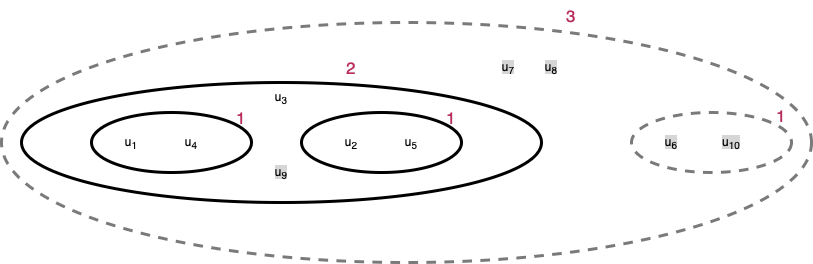}
    \caption{The laminar family, capacities, and arrival order generated by Algorithm~\ref{alg:anticoncentrated} when $r=3$. New sets in the laminar family are drawn with dashed lines, and new items are highlighted.}
    \end{figure}

    We note that the total number of items is $|U_r| = 1 + \sum_{i=2}^r i + 2 \leq (r+1)^2$.

\bigskip

    \begin{lemma} \label{lem:badconc1}
        For every $i \geq 2$,  the probability that we have $OPT(\mat) \cap \{u_1 \ldots, u_{n_i}\} = \varnothing$, conditioned on $OPT(\mat) \cap \{u_1 \ldots, u_{n_{i-1}}\} = \varnothing$,
        is at least $(1 - p)^{i+1}$.
    \end{lemma}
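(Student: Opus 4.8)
The plan is to analyze, step $i$ at a time, how the optimal online policy behaves on the batch of $i+2$ new elements $u_{n_{i-1}+1},\dots,u_{n_{i-1}+i+2}$ introduced in iteration $i$ of Algorithm~\ref{alg:anticoncentrated}, under the conditioning that nothing among $u_1,\dots,u_{n_{i-1}}$ has been selected. First I would record the structural consequences of this conditioning: if $OPT(\mat)\cap\{u_1,\dots,u_{n_{i-1}}\}=\varnothing$, then at the start of batch $i$ the remaining capacity of the inner bin $U_{i-1}\cup\{u_{n_{i-1}+i+1}\}$ (capacity $i-1$) and of $U_i$ (capacity $i$) is fully available, so the only binding constraint within the batch is the doubleton bin $\{u_{n_{i-1}+1},u_{n_{i-1}+i+2}\}$ of capacity $1$. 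The deterministic element $u_{n_{i-1}+1}$ (with $p=1$) always arrives first in the batch; I would argue that the policy's whole "reason'' to ever select any element of $\{u_1,\dots,u_{n_{i-1}}\}$ at batch $i$ is encoded in the scaling $\alpha_i=\tfrac{\Delta_{i-1}}{10(r+1)^2}\cdot\tfrac1v$, and that the values $v_{n_{i-1}+1}=\alpha_i 10^k+\tfrac{\Delta_i}{2}$, the "medium'' values $\alpha_i v'=\alpha_i 10^k$, and the "large'' values $\alpha_i v=\alpha_i 10^{1+2k}$ are engineered so that $OPT$ prefers to wait and see whether any of the large-valued elements $u_{n_{i-1}+i+1},u_{n_{i-1}+i+2}$ arrive (each with probability $p$).

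Next I would establish the key dichotomy for the conditional event. Conditioned on $OPT(\mat)\cap\{u_1,\dots,u_{n_{i-1}}\}=\varnothing$, the batch-$i$ selections are empty precisely on the event that none of the $i+1$ probabilistic new elements $u_{n_{i-1}+2},\dots,u_{n_{i-1}+i+2}$ arrive (i.e.\ all take value $0$) \emph{and} $OPT$ declines the always-arriving deterministic element $u_{n_{i-1}+1}$. The event that all $i+1$ of those elements take value $0$ has probability exactly $(1-p)^{i+1}$ by independence. So it remains to show that on precisely this realization $OPT$ does reject $u_{n_{i-1}+1}$ — or more carefully, that rejecting $u_{n_{i-1}+1}$ is (weakly) optimal so that we may take $OPT$ to reject it, and moreover that on this realization no earlier element from $\{u_1,\dots,u_{n_{i-1}}\}$ gets selected either (which is automatic under the conditioning). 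I would argue rejection of $u_{n_{i-1}+1}$ is forced: accepting it yields $\alpha_i 10^k+\tfrac{\Delta_i}{2}$ but blocks nothing useful since its only shared bin is the doubleton with $u_{n_{i-1}+i+2}$; whereas keeping that doubleton slot open preserves the option to grab a large value $\alpha_i v=\alpha_i 10^{1+2k}\gg \alpha_i 10^k$ from $u_{n_{i-1}+i+2}$, and the gain $\Delta_i=\alpha_i v\,p^i(1-p)$ is exactly calibrated to be the marginal value of that option — hence $OPT$ is indifferent or strictly prefers to wait, and we may fix the tie-break so it waits. The "$\tfrac{\Delta_i}{2}$'' additive term ensures the inequality is strict enough that lower-order contributions from the $u_1,\dots,u_{n_{i-1}}$ side (bounded through $\Delta_{i-1}$ and the $\tfrac1{10(r+1)^2}$ factor) cannot overturn it.

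The main obstacle will be the second part of that dichotomy: showing rigorously that $OPT$'s decision on $u_{n_{i-1}+1}$ — and, more delicately, its decisions on the later elements of the batch and its decisions back in batches $1,\dots,i-1$ — all cohere so that "no large value arrives in batch $i$ and the conditioning holds'' really does force an empty batch-$i$ intersection with probability at least $(1-p)^{i+1}$, rather than merely "with probability $(1-p)^{i+1}$ times some correction.'' This requires a careful comparison of the dynamic-programming value-to-go at the start of batch $i$ in the two relevant states, using the geometric separations $v=10^{1+2k}$, $v'=10^k$, $p=10^{-1-k}$ and the recursive definition $\Delta_i=\alpha_i v p^i(1-p)$ with $\alpha_i\propto\Delta_{i-1}$ to show each "scale'' dominates all smaller scales by a factor growing in $k$; choosing $k$ large in terms of $\varepsilon$ (as the theorem allows) then makes every such domination clean. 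Concretely I would prove by induction on $i$ a quantitative claim of the form "conditioned on $OPT(\mat)\cap\{u_1,\dots,u_{n_{i-1}}\}=\varnothing$, $OPT$ selects an element of batch $i$ iff at least one of the $i+1$ probabilistic batch-$i$ elements arrives, and in that case $OPT$ selects at least one such arrived element,'' from which Lemma~\ref{lem:badconc1} is the contrapositive applied with the independence computation $\prob[\text{no batch-}i\text{ element arrives}]=(1-p)^{i+1}$. The routine part is verifying the chain of numeric inequalities among powers of $10$; the conceptual part is setting up the two-state value-to-go comparison so that the induction closes.
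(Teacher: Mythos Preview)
Your high-level plan is right: show that, conditioned on nothing having been taken from $U_{i-1}$, the optimal policy declines $w_1:=u_{n_{i-1}+1}$; then the event that none of $w_2,\dots,w_{i+2}$ arrive (probability $(1-p)^{i+1}$) forces an empty batch.

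The gap is in your justification for \emph{why} $OPT$ declines $w_1$. You write that $w_1$'s ``only shared bin is the doubleton with $u_{n_{i-1}+i+2}$,'' but this is false: $w_1\in U_i$ as well, and that membership is what drives the comparison. If the doubleton were truly the only constraint touched by $w_1$, then rejecting $w_1$ would gain, in expectation, only the option value $\alpha_i v\cdot p=\alpha_i 10^k$ from $w_{i+2}$, which is \emph{strictly less} than $v_1=\alpha_i 10^k+\Delta_i/2$; your argument would then prove that $OPT$ \emph{accepts} $w_1$, the opposite of what you need. Relatedly, $\Delta_i=\alpha_i v\,p^i(1-p)$ is not ``the marginal value of keeping the doubleton slot open''; it is the additional term that appears once you correctly account for the $U_i$ slot that $w_1$ burns.

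The paper's proof does exactly this bookkeeping. Under the conditioning, the greedy strategy that skips $w_1$ and then accepts every nonzero $w_2,\dots,w_{i+2}$ (losing one large item only when all $i{+}1$ arrive) earns $\alpha_i\bigl((i{-}1)v'p+2vp-vp^{i+1}\bigr)$. Any strategy that takes $w_1$ is blocked from $w_{i+2}$ by the doubleton \emph{and} can take at most $i{-}1$ of $w_2,\dots,w_{i+1}$ by the $U_i$ capacity, earning at most $\alpha_i\bigl((i{-}1)v'p+vp-vp^i\bigr)$. The difference simplifies to $\alpha_i v p+\alpha_i v p^i(1-p)=\alpha_i 10^k+\Delta_i$, and since all later batches together contribute at most $\Delta_i/10$ (this is what the $\alpha_{j}=\Delta_{j-1}/(10(r{+}1)^2 v)$ scaling buys), the threshold for taking $w_1$ exceeds $\alpha_i 10^k+\tfrac{9}{10}\Delta_i>v_1$. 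No induction on $i$ or two-state value-to-go comparison is needed; once you include the $U_i$ constraint, the direct expected-value comparison closes immediately.
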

    \begin{proof}
        Let us re-label the elements so that $(w_1, w_2, \ldots, w_{i+2}) = (u_{n_{i-1} + 1}, u_{n_{i-1} + 2}, \ldots, u_{n_{i-1} + i + 2})$.
Let us show that  conditioned on $OPT(\mat) \cap \{u_1 \ldots, u_{n_{i-1}}\} = \varnothing$ the optimal policy does not select $w_1$. This suffices to show the statement of the lemma, since
        the probability that all elements among $w_2, \ldots, w_{i+2}$ have value zero is $(1 - p)^{i+1}$.

Now let us condition on $OPT(\mat) \cap \{u_1 \ldots, u_{n_{i-1}}\} = \varnothing$ and estimate the expected gain of strategies that select and do not select $w_1$.

\smallskip
        Let us consider a strategy that does not select $w_1$ and after that greedily  accepts every element among $w_2, \ldots, w_{i+2}$ that has non-zero value and is feasible.
        It is feasible to select every element among $w_2, \ldots, w_{i+2}$ that has non-zero value, unless all of $w_2, \ldots, w_{i+2}$ have non-zero values. On the elements $w_2$, \ldots $w_{i+2}$, this strategy yields expected value
        \[
            \alpha_i \cdot \left((i-1) \cdot v' \cdot p + 2 \cdot v \cdot p - v \cdot p^{i+1}\right)\,.
        \]
\smallskip
        
        Now consider any strategy that selects the element $w_1$. Clearly, such a strategy cannot select $w_{i+2}$ due to the constraints in the laminar matroid. On the elements $w_2$, \ldots $w_{i+2}$, this strategy yields expected value at most \[
            \alpha_i \cdot \left((i-1) \cdot v' \cdot p + v \cdot p - v \cdot p^i\right).
        \]
        Indeed, to maximize the expected gain on  $w_2$, \ldots $w_{i+1}$, this strategy  greedily selects any item with non-zero value.

        Note that so far we considered only gains on the elements $w_2$, \ldots $w_{i+2}$. However, there are elements with arrivals after the arrival of $w_{i+2}$. 
        Due to our scaling with $\alpha_{i+1}, \alpha_{i+2}, \ldots$, we know that the value of any element after $w_{i+2}$ is at most $\Delta_i/(10 (r+1)^2)$.
        There are at most $(r+1)^2$ elements arriving after $w_{i+2}$, so the sum of their value is at most $\Delta_i/10$.
        Therefore, if a strategy selects $w_1$, its expected value from elements arriving after $w_1$ is at most
        \[
            \alpha_i \cdot \left((i-1) \cdot v' \cdot p + v \cdot p - v \cdot p^i\right) + \frac{\Delta_i}{10}.
        \]

        So for the optimal strategy to select $w_1$, the value of $w_1$ should be at least
        \begin{align*}
            &\alpha_i \cdot \left((i-1) \cdot v' \cdot p + 2 \cdot v \cdot p - v \cdot p^{i+1}\right) - \alpha_i \cdot \left((i-1) \cdot v' \cdot p + v \cdot p - v \cdot p^i\right) + \frac{\Delta_i}{10} \\
            = &\alpha_i \cdot \left(v \cdot p  - v \cdot p^{i+1} + v \cdot p^i\right) - \frac{\Delta_i}{10} \\
            = &\alpha_i \cdot 10^k + \Delta_i - \frac{\Delta_i}{10} \\
            > &\alpha_i \cdot 10^k + \frac{\Delta_i}{2}.
        \end{align*}
Therefore, the optimal solution does not select $w_1$ because the value of $w_1$ is only $\alpha_i \cdot 10^k + \Delta_i/2$.      
    \end{proof}

\bigskip

    \begin{lemma} \label{lem:badconc2}
        For every $i \geq 2$,  the probability that we have $|OPT(\mat) \cap \{u_1 \ldots, u_{n_i}\}| = i$, conditioned on $|OPT(\mat) \cap \{u_1 \ldots, u_{n_{i-1}}\}| = i-1$,
        equals $1$.
    \end{lemma}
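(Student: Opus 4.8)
The plan is to reduce the statement to the claim that, conditioned on $|OPT(\mat)\cap U_{i-1}|=i-1$, the optimal policy always accepts the element $w_1:=u_{n_{i-1}+1}$ introduced at iteration $i$. First I would fix notation: relabel the iteration-$i$ batch as $(w_1,\dots,w_{i+2})=(u_{n_{i-1}+1},\dots,u_{n_{i-1}+i+2})$ and recall from Algorithm~\ref{alg:anticoncentrated} that $w_1$ always arrives with value $\alpha_i\cdot 10^k+\Delta_i/2$, that each of $w_2,\dots,w_i$ takes value $\alpha_i\cdot 10^k$ with probability $p=10^{-1-k}$, and that each of $w_{i+1},w_{i+2}$ takes value $\alpha_i\cdot 10^{1+2k}$ with probability $p$; in particular $p\cdot\alpha_i\cdot 10^{1+2k}=\alpha_i\cdot 10^k$, so a ``bet'' on the large value of $w_{i+1}$ or $w_{i+2}$ is worth exactly the base value of $w_1$ in expectation. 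I would also recall, as in the proof of Lemma~\ref{lem:badconc1}, that every element arriving after $w_{i+2}$ contributes total value at most $\Delta_i/10$.

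Next I would record the feasibility facts forced by the conditioning. Writing $U_{i-1}=\{u_1,\dots,u_{n_{i-1}}\}$, the laminar constraints created at iteration $i$ (which persist, possibly strengthened, in later iterations) enforce $|I\cap U_i|\le i$ and $|I\cap(U_{i-1}\cup\{w_{i+1}\})|\le i-1$ for every independent $I$. Since $U_{i-1}\subseteq U_i$, conditioning on $|OPT(\mat)\cap U_{i-1}|=i-1$ gives $|OPT(\mat)\cap U_i|\in\{i-1,i\}$, and the value $i-1$ occurs precisely when none of $w_1,\dots,w_{i+2}$ is selected. Moreover, the $i-1$ already-selected elements lie in $U_{i-1}$, hence fill $U_{i-1}\cup\{w_{i+1}\}$, so $w_{i+1}$ can never be selected, and they leave exactly one free slot in $U_i$, so at most one of $w_1,\dots,w_i,w_{i+2}$ can be selected. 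As $w_1$ is the only batch element guaranteed to arrive, it then suffices to show the optimal policy always accepts $w_1$: otherwise, with probability at least $(1-p)^{i+1}$ none of $w_2,\dots,w_{i+2}$ has positive value and the policy selects nothing from the batch, forcing $|OPT(\mat)\cap U_i|=i-1$. (I would note, as usual, that we may assume the optimal policy never accepts a value-$0$ element.)

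To prove that the optimal policy accepts $w_1$, I would compare the optimal expected gain from the arrival of $w_1$ onwards under the two choices, for every state $S\subseteq U_{i-1}$ with $|S|=i-1$ consistent with the conditioning. Accepting $w_1$ yields at least $\alpha_i\cdot 10^k+\Delta_i/2$. Declining $w_1$ yields at most $\alpha_i\cdot 10^k+\Delta_i/10$: here a backward induction over the arrivals $w_i,w_{i-1},\dots,w_2$ shows that, conditioned on nothing having yet been taken from the batch, declining $w_j$ is weakly better than accepting it, since accepting $w_j$ gives exactly $\alpha_i\cdot 10^k$ and then fills $U_i$ (blocking every later selection), whereas declining preserves the option of waiting for $w_{i+2}$, worth $p\cdot\alpha_i\cdot 10^{1+2k}=\alpha_i\cdot 10^k$, plus a nonnegative tail; hence the optimal ``without $w_1$'' policy skips $w_2,\dots,w_i$, takes $w_{i+2}$ if it arrives, and plays optimally afterwards, for expected value at most $p\cdot\alpha_i\cdot 10^{1+2k}+\Delta_i/10=\alpha_i\cdot 10^k+\Delta_i/10$. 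Since $\Delta_i/2>\Delta_i/10$, accepting $w_1$ is strictly better in every admissible state, so the optimal policy accepts $w_1$, and the claim follows. I expect the backward-induction step to be the main point to get right: one must check that at each $w_j$ the continuation value of declining genuinely contains the $w_{i+2}$-bet worth $\alpha_i\cdot 10^k$; the remaining estimates only use the fixed magnitude separation $10^k\ll 10^{1+2k}$ and the $\Delta_i/10$ tail bound, exactly as in Lemma~\ref{lem:badconc1}.
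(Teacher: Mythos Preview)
Your proposal is correct and follows the same architecture as the paper's proof: reduce to showing the optimal policy accepts $w_1$, then compare the accept-value $\alpha_i\cdot 10^k+\Delta_i/2$ against an upper bound $\alpha_i\cdot 10^k+\Delta_i/10$ for the decline-value, using the $\Delta_i/10$ tail bound exactly as in Lemma~\ref{lem:badconc1}. The only difference is how the $\alpha_i\cdot 10^k$ bound on the batch is obtained: the paper asserts that the greedy rule (take the first nonzero among $w_2,\dots,w_i,w_{i+2}$) is optimal and computes its value by a direct case split, whereas you run backward induction to show declining each $w_j$ is weakly optimal and then evaluate the ``wait for $w_{i+2}$'' rule; both routes yield $\alpha_i\cdot 10^k$ because $p\cdot v=v'$ makes every threshold rule on the batch indifferent.

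One small imprecision to fix: accepting some $w_j$ does not ``block every later selection''---filling $U_i$ only blocks the remaining batch elements and the single element $u_{n_i+i+2}$ at iteration $i+1$, so there is still a nonzero tail after accepting. Your backward-induction comparison therefore needs the observation that the tail value is monotone in remaining capacity (so the continuation after declining dominates the continuation after accepting), or else simply the crude bound that both continuations differ by at most $\Delta_i/10$; either patch keeps your argument intact.
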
 
    \begin{proof}
        Let us re-label the elements so that $(w_1, w_2, \ldots, w_{i+2}) = (u_{n_{i-1} + 1}, u_{n_{i-1} + 2}, \ldots, u_{n_{i-1} + i + 2})$.
        Let us show that conditioned on $|OPT(\mat) \cap \{u_1 \ldots, u_{n_{i-1}}\}| = i-1$ the optimal policy selects $w_1$.
        To do so, we analyze the performance of possible strategies on all elements after $w_{i+2}$ in both the situation that $w_1$ is selected and the situation that it is skipped.

        If $w_1$ is skipped by a strategy, then to obtain optimal gain on $w_2, \ldots, w_{i+2}$, a strategy has to greedily select the first item that has non-zero value, noting that $w_{i+1}$ cannot be taken as it is contained in the saturated constraint corresponding to $U_{i-1}\cup\{w_{i+1}\}$.
        The probability that at least one of $w_2$, \ldots, $w_i$ has a non-zero value is $(1 - (1 - p)^{i-1})$, in which case we gain value $\alpha_i \cdot v'$.
        Otherwise, with probability $(1 - p)^{i-1} \cdot p$, none of $w_2, \ldots, w_i$ have a non-zero value but $w_{i+2}$ does, and so a strategy gains the value $\alpha_i \cdot v$.
        So, a strategy skipping the element $w_1$ will gain in expectation on the elements $w_2 \ldots, w_{i+2}$ at most
        \[
            \alpha_i \cdot (v' \cdot (1 - (1 - p)^{i-1}) + v \cdot p \cdot (1 - p)^{i-1}) = \alpha_i \cdot 10^k.
        \]
        As seen in the proof of Lemma \ref{lem:badconc1}, the total value of all elements after $w_{i+2}$ is bounded by $\Delta_i/10$. So the expected gain of a strategy skipping $w_1$ is at most $\alpha_i \cdot 10^k + \Delta_i/10$. 

        Now, note that $\alpha_i \cdot 10^k + \Delta_i/10<\alpha_i \cdot 10^k + \Delta_i/2$ and $\alpha_i \cdot 10^k + \Delta_i/2$ is the value of $w_1$. Hence, the optimal strategy selects $w_1$, as wanted. 
        
    \end{proof}

    Finally, we note that item $u_1$ is taken by the optimal algorithm whenever its value is non-zero. So the optimal strategy selects $u_1$ with probability $1/2$. Applying Lemmas \ref{lem:badconc1} and \ref{lem:badconc2} inductively, this implies that 
    \[    
        \prob[\,|OPT(\mat)| = 0\,] \geq \frac{1}{2} \cdot (1 - p)^{(r+1)^2}
    \]
    and
    \[
        \prob[\,|OPT(\mat)| = r\,] \geq \frac{1}{2}.
    \]
    Selecting sufficiently large $k$, we have $(1 - p)^{(r+1)^2} \geq (1 - \epsilon)$,  obtaining the statement of  Theorem \ref{thm:badconcentration}.
        
\end{proof}
\end{document}